\documentclass{article}

\usepackage{custom-preamble}
\usepackage{custom-todo}
\usepackage{authblk}
\usepackage{orcidlink}
\usepackage[pagewise]{lineno}

\newcommand{\leo}[1]{\textcolor{black}{#1}}
\newcommand{\niels}[1]{\textcolor{black}{#1}}

\newcommand{\CorrB}[1]{\textcolor{black}{#1}}


\title{\CorrB{Bounds on the sequence length sufficient to reconstruct binary level-$1$ phylogenetic networks under the CFN model}}

\author[1,*]{Martin Frohn\orcidlink{0000-0002-5002-4049}}
\author[2]{Niels Holtgrefe\orcidlink{0009-0001-6162-9668}}
\author[2]{Leo van Iersel\orcidlink{0000-0001-7142-4706}}
\author[3]{Mark Jones\orcidlink{0000-0002-4091-7089}}
\author[1]{Steven Kelk\orcidlink{0000-0002-9518-4724}}

\affil[1]{Department of Advanced Computing Sciences, Maastricht University, Maastricht, 6229~EN, The Netherlands}
\affil[2]{Delft Institute of Applied Mathematics, Delft University of Technology, Delft, 2628~CD, The Netherlands}
\affil[3]{Department of Computer Science, Middlesex University, Hendon Town Hall, The Burroughs, London, NW4 4BT, United Kingdom}
\affil[*]{Corresponding Author. Email: martin.frohn@maastrichtuniversity.nl}

\setcounter{Maxaffil}{0}


\begin{document}

\maketitle


\noindent\makebox[\linewidth]{\rule{\textwidth}{1pt}} \\

\begin{abstract}
Phylogenetic trees and networks are graphs used to model evolutionary relationships, with trees representing strictly branching histories and networks allowing for 
events in which lineages merge, called \emph{reticulation} events.
While the question of data sufficiency has been studied extensively in the context of trees, it remains largely unexplored for networks. In this work we take a first step in this direction
by establishing bounds on the amount of genomic data required to reconstruct binary level-$1$ semi-directed phylogenetic networks, which are binary networks in which reticulation events are indicated by directed edges, all other edges are undirected, and cycles are vertex-disjoint. For this class, methods have been developed recently that are statistically consistent. Roughly speaking, such methods are guaranteed to reconstruct the correct network assuming infinitely long genomic sequences. Here we consider the question whether networks from this class can be uniquely and correctly reconstructed from finite sequences.
Specifically, we present an inference algorithm that takes as input genetic sequence data, and demonstrate that the sequence length sufficient to reconstruct the correct network with high probability,
under the \CorrB{CFN model} of evolution, scales logarithmically, polynomially, or polylogarithmically with the number of taxa, depending on the parameter regime. As part of our contribution, we also present novel inference rules for quartet data in the semi-directed phylogenetic network setting.
\end{abstract}

\begin{keywords}
    Phylogenetic network, quartets, inference rules, distance method, sequence alignment, sequence length
\end{keywords}

\noindent\makebox[\linewidth]{\rule{\textwidth}{1pt}}

\section{Introduction}
Phylogenetic networks are graphs that model the evolutionary history of a set of taxa, for example different species, which are represented by the leaves of the network. Evolutionary events in which lineages merge, such as hybridization or recombination, are represented by so-called \emph{reticulation vertices} that have more than one parent. Rooted phylogenetic networks are directed acyclic graphs with a single root, see Figure~\ref{fig:preliminaries}(a). However, since the root location is not identifiable under most models~\niels{\cite{gross2018distinguishing,banos2019identifying}}, we consider semi-directed phylogenetic networks here, in which the incoming edges of reticulation vertices are directed, all other edges are undirected and the root is suppressed, see Figure~\ref{fig:preliminaries}(b). For formal definitions, see Section~\ref{sec:preliminaries}.

\leo{Recently, several methods have been developed for reconstructing semi-directed level-1 phylogenetic networks\footnote{\CorrB{In the directed setting, level-1 networks were originally called `galled trees'~\cite{gusfield2003efficient}.}}, which are networks in which cycles of the underlying undirected graph are vertex disjoint, from biological data consisting of gene trees or aligned DNA sequences~\cite{allman2019nanuq,allman2024nanuq+,holtgrefe2024squirrel,solis2016inferring,kong2025inference}. Such networks are relevant in practice because they allow for both speciation (divergence) and reticulation (convergence) events to explain the evolutionary relationships between taxa while assuming that reticulation events are (in some sense) isolated~\cite{bapteste2013networks,kong2022classes}. However, although some of the methods mentioned above are statistically consistent, i.e. guaranteed to reconstruct the correct network  given perfect data, these papers do not discuss how much data is needed. In particular, statistically-consistent methods for reconstructing networks from DNA sequences~\cite{holtgrefe2024squirrel,martin2025algebraic} may only be successful when provided sequences that are infinitely long.}

For phylogenetic trees, data sufficiency (also called \emph{sample complexity}) has been studied for various substitution models, starting with the seminal work in~\cite{erdos1999logs1,erdos1999logs2}, its improvements~\cite{king2003complexity,mossel2004phase,mossel2003impossibility}, variations~\cite{cs1999recovering,cryan2001evolutionary} and generalizations~\cite{warnow2001absolute,zhang2018new}. Recently, lower bounds on the sample complexity have also been established under a multi-species coalescent model~\cite{hill2025lower}. Typically, such reconstruction methods are \emph{fast converging}~\cite{huson1999disk}, i.e., 
given lower and upper bounds on edge lengths and a DNA sequence alignment of polynomial size (in the number of taxa),
the method recovers the true tree topology with high probability. A potential drawback of fast converging methods is that they may rely on known edge length bounds and absence of such bounds might even lead to statistical inconsistencies~\cite{huson1999disk,cs1999recovering,cryan2001evolutionary}. Methods that are fast converging without requiring any knowledge of the bounds on edge lengths are called \emph{absolute fast converging}~\cite{warnow2001absolute}. 



In a recent paper, a first result on data sufficiency for phylogenetic network inference was presented~\cite{warnow2025advances} focusing on  reconstructing binary semi-directed
level-1 phylogenetic networks from SNPs, basically a binary sequence alignment, in which each site (column) is assumed to have evolved down a tree displayed by the network in such a way that its state changed exactly once. The latter assumption, also called \emph{homoplasy-free} or \emph{infinite sites}, is a severely restricting condition on the data that we do not assume in this work.

In this article, we introduce a new 
method to reconstruct binary semi-directed level-$1$ phylogenetic networks from 
biological data under the \CorrB{
Cavender-Farris-Neyman (CFN) model}~\cite{neyman1971molecular}: the 
two-state analogue of the well-known Jukes-Cantor model.
We assume a substitution process along each network edge, without \CorrB{incomplete} lineage sorting, where at reticulation nodes each site is independently inherited from one parent according to inheritance probabilities. This model also accommodates homoplastic sites, allowing for convergent or parallel mutations. 
Our method reconstructs binary semi-directed level-$1$ networks up to the suppression of $2$-cycles, $3$-cycles and reticulation vertices in $4$-cycles, as do other methods~\cite{allman2019nanuq,allman2024nanuq+, warnow2025advances}, since this is best possible \niels{for displayed-quartet-based methods}~\cite{banos2019identifying}.

\begin{table*}[!t]
\centering
    \begin{tabular}{l|ll}
     \toprule
        & \multicolumn{2}{c}{range of mutation probabilities $p(e)$, $e\in E(N)$} \\
     	$\max\left\{\text{depth}(N),lc(N)\right\}$ &  $\left[c_1,c_2\right]$ for constants $c_1,c_2\geq 0$ & \qquad $\left[\frac{1}{\log n},\frac{\log\log n}{\log n}\right]$ \\ \midrule \\ [-1em]
        $\Theta(1)$ & logarithmic & \qquad polylog \\ \\ [-1em]
        $\Theta(\log\log n)$ & polylog & \qquad polylog \\ \\ [-1em]
        $\Theta(\log n)$ & polynomial & \qquad polylog \\
	\bottomrule
    \end{tabular}
    \caption{The \leo{growth-rate of} the number of sites sufficient to correctly reconstruct a binary semi-directed level-1 network $N$ under the CFN model with high probability as a function of the number of taxa $n$ (see \cref{thm::CFN} \leo{for a precise statement}). 
    \leo{This growth-rate depends on the maximum of the network depth and the maximum cycle length~$lc(\N)$ (first column, see \cref{sec:preliminaries} for definitions) and on the bounds on the mutation probabilities. In the second column these bounds are constants while in the third column they are functions of~$n$. Inheritance probabilities are bounded by a function of the mutation probabilities (see \cref{prop::successProb}).}
    \label{tab::2}}
\end{table*}

Our main result is that the introduced method reconstructs the correct network with high probability, under the \CorrB{CFN model} of evolution, for input sequences whose required length scales logarithmically, polynomially, or polylogarithmically with the number of taxa, depending on the parameter regime, see Table~\ref{tab::2}.

A central concept in our approach is the notion of a \emph{quartet profile}: the set of quartet trees displayed by a network for a given set of four taxa, see Figure~\ref{fig:preliminaries}. 
Such a quartet profile may either contain one quartet tree (representing tree-like evolution), or multiple quartet trees (representing a cycle, and thus network-like evolution). Our method differentiates between the two by applying the four-point method~\cite{pereira1969note}---a key characteristic in all absolute fast converging methods for tree inference~\cite{warnow2001absolute}---to estimated pairwise distances. By precisely characterizing a strategically chosen subset of these quartet profiles that capture the structure of the network, we can significantly reduce the number of quartet profiles that must be inferred directly from the biological data. Then, we apply newly developed inference rules to deduce the remaining set of quartet profiles, which enables us to leverage an existing reconstruction algorithm~\cite{frohn2024reconstructing} to reconstruct the full network. Note also that quartet profiles have been shown to be identifiable under several models, with proofs either relying on algebraic invariants~\cite{englander2025identifiability} or on concordance factors (a type of summary statistic)~\cite{rhodes2025identifying}.

The inference rules we develop are also interesting in their own right, as they extend and generalize classical results from tree-based phylogenetics. While inference rules for induced quartet trees of phylogenetic trees have been extensively studied, tracing back to~\cite{colonius1981tree} (see also \cite{semple2003phylogenetics} and references therein), their use in a network setting has been very limited so far. 
The \textsc{TiNNIK} software tool~\cite{allman2023tree,allman2024tinnik} for inferring the \emph{tree-of-blobs}---a tree capturing the global branching structure of a \niels{semi-directed} network---\niels{relies on a 
rule to propagate quartet classifications by marking additional quartets as belonging to reticulate regions.}
\niels{However, this rule is not an inference rule in the sense used here.} 
In addition, \cite{huber2018quarnet} introduced inference rules for \emph{quarnets} (4-leaf subnetworks) of undirected level-1 networks (see also~\cite{keijsper2014reconstructing}), but these do not directly translate to our semi-directed setting. In our work, we define and study the closure of a set of quartet profiles, building on well-established ideas from the tree literature. This also leads to a novel quarnet encoding result, showing that any semi-directed level-1 network is encoded by some linear sized set of its induced quarnets (see \cref{cor:linear_encoding}). Although our focus is on level-1 networks, the validity of most of our inference rules is shown for networks of any level, opening up broader avenues for future research.

The structure of the paper is as follows. In \cref{sec:preliminaries} we cover preliminaries on phylogenetic networks and the CFN model. In \cref{sec:dyadic_closure} we introduce novel quartet profile inference rules and use these to define the closure operation. These techniques are used in \cref{sec:DCM}, where we present our main distance-based inference algorithm, which is analyzed under the CFN model in \cref{sec:DCM_analysis}. We end with a discussion in \cref{sec:discussion}.

\begin{figure}
    \centering
    \includegraphics{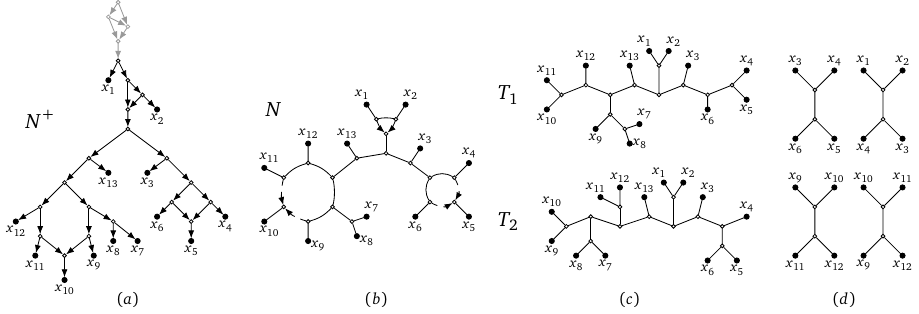}
    \caption{$(a)$: A rooted phylogenetic network~$\N^+$ on $\X = \{x_1, \ldots, x_{13} \}$. \CorrB{The gray vertices are above the LSA of the network. Hence, removing the gray vertices and edges from $\N^+$ gives an LSA network.} 
    $(b)$: The semi-directed phylogenetic level-1 network~$\N$ on $\X$ obtained from \CorrB{this LSA network}. 
    $(c)$: Two unrooted phylogenetic trees $T_1$ and $T_2$ on $\X$ displayed by $\N$. 
    $(d)$: Four quartets displayed by $N$. The quartet $x_3 x_4 | x_5 x_6$ does not form a quartet profile of~$N$, since $N$ also displays another quartet ($x_3 x_6 | x_4 x_5$, not depicted in the figure) on the same four leaves. The quartet $x_1 x_2 | x_3 x_4$ forms a trivial quartet profile, and the two remaining quartets together form the quartet profile $(x_9 x_{10} x_{11} x_{12})$.}
    \label{fig:preliminaries}
\end{figure}

\section{Preliminaries}\label{sec:preliminaries}
\subsection{Phylogenetic Networks}\label{subsec:phylo}

\CorrB{For the sake of generality, we allow phylogenetic networks to contain parallel edges, which can result from a variety of biological processes~\cite{justison2024}; see also, e.g.,~\cite{allman2023tree,banos2019identifying,rhodes2025identifying}. Formally, we call a graph that may contain parallel edges a \emph{multi-graph}; it is \emph{directed} if all edges are directed, and \emph{mixed} if it 
may contain both directed and undirected edges.}

\begin{definition}[Rooted network]\label{def:rooted_network}A \emph{(binary) rooted (phylogenetic) network} $N^+$ on a set of at least two taxa $\X$ is a 
directed acyclic multi-graph such that (i) there is a unique vertex with indegree zero (the \emph{root}), which has out-degree two; (ii) each vertex with out-degree zero (a \emph{leaf}) has in-degree one and the set of such vertices is bijectively labeled by $\X$; (iii) all other vertices either have in-degree one and out-degree two (\emph{tree vertices}), or in-degree two and out-degree one (\emph{reticulation vertices}). 
\end{definition}
\CorrB{The \emph{lowest stable ancestor (LSA)} of a rooted network on $\X$ is the lowest vertex through which all directed paths from the root to any leaf in~$\X$ pass. Throughout this work, we assume that all rooted networks on $\X$ are \emph{LSA networks}: networks in which the root is the LSA 
(see \cref{fig:preliminaries}a). This is a common assumption, since within standard modeling frameworks, the structure above the LSA can not be resolved from quartet data~\cite{banos2019identifying}.}
The two edges directed towards a reticulation vertex are called \emph{reticulation edges}.

\begin{definition}[Semi-directed network]\label{def:semi_directed}
A \emph{(binary) semi-directed (phylogenetic) network} $\N$ on at least two 
taxa $\X$ is a mixed multi-graph that can be obtained from a binary rooted phylogenetic LSA
network~$N^+$ on~$\X$ by undirecting all non-reticulation edges and suppressing the former root.
\end{definition}

The reticulation edges of a semi-directed network are precisely its directed edges, and the reticulation vertices are the vertices with in-degree two.
\niels{A leaf~$x$ is \emph{below} a reticulation vertex~$r$ if there is a path from $r$ to $x$ consisting of no edges oriented towards~$r$.}
\CorrB{A \emph{cut-edge} is an edge whose removal separates the graph into two connected components and the cut-edge is \emph{nontrivial} if the two components contain at least two vertices each.}
A \emph{blob} of any multi-graph 
(including directed and mixed multi-graphs)
is a maximal subgraph without any cut-edges. A blob is an \emph{$m$-blob} if it has $m$ \niels{cut-}edges incident to it \CorrB{and it is \emph{internal} if it does not contain a leaf.}
A (rooted or semi-directed) network is \emph{level-$k$}, for some \niels{integer}~$k\geq 0$, if there are at most~$k$ reticulation vertices in each blob of the network.
In this work, we mostly focus on level-1 networks, \CorrB{which consist of only tree-like parts and vertex disjoint cycles since we assume our networks to be binary.} 
See \cref{fig:preliminaries}b for an example of a semi-directed level-1 network. \niels{We denote the maximum topological length of a cycle in $\N$ by $lc(\N)$.} A semi-directed level-1 network with a single \CorrB{internal} blob \CorrB{and which contains exactly one reticulation vertex} is called a \emph{sunlet (network)}.

A level-0 network is commonly known as a phylogenetic tree. We distinguish between a \emph{rooted phylogenetic tree} (a level-0 rooted network) and an \emph{unrooted phylogenetic tree} (a level-0 semi-directed network). An unrooted phylogenetic tree on at least two leaves $\X$ is a \emph{displayed tree} of a semi-directed network $\N$ on $\X$ if the tree can be obtained from the network by deleting one reticulation edge per reticulation vertex, undirecting the remaining reticulation edges, exhaustively deleting leaves not in $\X$, and suppressing degree-2 vertices. 
See \cref{fig:preliminaries}c for an example.


\smallskip

In \cref{def:restriction} we formalize the notion of a \emph{subnetwork} (or \emph{subtree} for trees). To this end, we let an \emph{up-down path} between two leaves $x_1$ and $x_2$ of a semi-directed network be a path of $k$ edges where the first $\ell$ edges are undirected or directed towards $x_1$ and the last $k - \ell$ edges are undirected or directed towards $x_2$. 
\begin{definition}[Subnetwork]\label{def:restriction}
Given a semi-directed network $\N$ on $\X$ and some $\Y \subseteq \X$ with $| \Y | \geq 2$, the \emph{subnetwork} of $\N$ induced by $\Y$ is the semi-directed network $\N|_\Y$ obtained from $\N$ by taking the union of all up-down paths between leaves in $\Y$, followed by exhaustively suppressing degree-2 vertices.
\end{definition}

Given a semi-directed network $\N$ on $\X$, a 4-leaf subtree of a displayed tree of $\N$ is called a \emph{displayed quartet}.\footnote{\CorrB{In \cref{prop:displayed_subnet_commutative} of \cref{sec:commutative} we prove that taking displayed trees and inducing subnetworks are commutative. Hence, equivalently, a displayed quartet can also be defined as a displayed tree of a 4-leaf subnetwork.}} A displayed quartet on $\{a,b,c,d\}$ has four trivial cut-edges: one cutting off each leaf. Next to that, the quartet has exactly one nontrivial cut-edge inducing a nontrivial partition of its leaves ($ab|cd$, $ac|bd$, or $ad|bc$). In general, given a semi-directed network $\N$ on $\X$ and a partition $A|B$ of $\X$ (with $A$ and $B$ both non-empty), we say that $A|B$ is a \emph{split} of $\N$ if there exists a cut-edge of $\N$ whose removal disconnects the leaves in $A$ from those in~$B$. The split and the cut-edge are \emph{nontrivial} if the corresponding partition is nontrivial, that is, if $|A|, |B| \geq 2$. As already used above, for splits with few leaves we sometimes omit the set notation, e.g., we use $ab | cd$ instead of $\{a, b\} | \{c, d\}$.

\smallskip
The main data type used in this work is the following. See \cref{fig:preliminaries}d for an example.
\begin{definition}[Quartet profile]
A \emph{quartet profile} on four leaves $\{a,b,c,d\}\subseteq\X$ is a set of quartet trees on $\{a,b,c,d\}$. Given a semi-directed network $\N$ on $\X$, the quartet profile of~$\N$ on $\{a,b,c,d\} \subseteq \X$ is the set of displayed quartets of~$\N$ on $\{a,b,c,d\}$.
\end{definition}
If the quartet profile on $\{a,b,c,d\}$ contains a single quartet tree, we denote it by the split induced by this tree, say $ab|cd$, and call it a \emph{trivial} quartet profile. \CorrB{Note that this implies that the quartet $ab|cd$ is an induced 4-leaf subtree of every displayed tree of the network.} If instead the quartet profile contains two quartet trees, we denote the quartet profile by the circular ordering that is congruent with the \CorrB{nontrivial splits of the two trees, i.e., in which each split corresponds to opposite pairs in the ordering}. For instance, if the quartet profile on $\{a,b,c,d\}$ contains exactly the trees with splits $ab|cd$ and $ad|bc$, then the quartet profile will be denoted by $(abcd)$ (or equivalently, e.g., $(bcda)$ or $(dcba)$). \CorrB{See also \cref{fig:preliminaries}d, where the quartet trees $x_9 x_{10} | x_{11} x_{12}$ and $x_9 x_{12} | x_{10} x_{11}$ form a quartet profile and both are congruent with the circular ordering $(x_9 x_{10} x_{11} x_{12})$, which also matches the ordering of the leaves around the left-most cycle of the network~$N$ in \cref{fig:preliminaries}b.} The leaf set of a quartet profile $q$ is denoted by $\LL (q)$ and the set of all quartet profiles of a network $\N$ is denoted by $\Q (\N)$. In general, a \emph{set of quartet profiles $\Q$ on $\X$} is a set such that $\X = \bigcup_{q\in \Q} \LL (q)$ and there is at most one quartet profile per 4-leaf subset of $\X$.

It follows from \cite{rhodes2025identifying} that the quartet profiles (referred to there as `4-taxon circular order information') of any \emph{outer-labeled planar} network (which include the level-1 networks we consider \CorrB{\cite{moulton2022planar}}) never contain all three possible quartet trees. We emphasize that knowing whether a quartet profile is displayed by a network amounts to knowing \emph{all} displayed quartets for the considered set of four leaves. \CorrB{Hence, a network has only trivial quartet profiles if and only if all its displayed trees induce the same set of quartets. Since quartets uniquely encode a tree~\cite{semple2003phylogenetics}, this is equivalent to the network having a unique displayed tree.}

\smallskip 

We end with a useful result specifying exactly which features of a network are uniquely determined and can be reconstructed from quartet profiles. Throughout this manuscript, we will use the notation~$\NN$ introduced in this proposition, which represents the main object we aim to construct. For convenience, we may occasionally refer to~$\NN$ as a ``network'', although it is not one in the strict sense of the term. By \emph{suppressing} a $k$-cycle in a semi-directed network, we mean replacing the cycle by a single vertex and suppressing resulting degree-2 vertices. For convenience, we refer to the operation of undirecting reticulation edges as \emph{suppressing reticulation vertices}.

\begin{proposition}[\cite{banos2019identifying, frohn2024reconstructing}]\label{prop:qp_reconstruct}
   Let $\N$ be a semi-directed level-1 network on $n$ leaves and denote by $\NN$ the mixed graph obtained from $\N$ by suppressing 2-cycles, 3-cycles and reticulation vertices in 4-cycles. Then, $\NN$ is uniquely characterized by $\Q (\N)$ and can be reconstructed from $\Q (\N)$ in $\bigO (n^2)$ time.
\end{proposition}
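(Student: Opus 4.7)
The plan is to establish the two claims separately: first, that $\Q(\N)$ uniquely determines $\NN$, and second, that $\NN$ can be reconstructed in $O(n^2)$ time given oracle access to $\Q(\N)$. For identifiability, I proceed hierarchically. I first observe that the tree-of-blobs of $\N$ is encoded by the trivial quartet profiles: a four-leaf set $\{a,b,c,d\}$ yields a trivial profile $ab|cd$ precisely when some cut-edge of $\N$ separates $\{a,b\}$ from $\{c,d\}$. Collecting these trivial $4$-splits gives a compatible split system from which the tree-of-blobs is recovered by standard Buneman-style reasoning.

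For each non-trivial blob---a cycle, in the level-$1$ setting---I then recover the cyclic ordering of pendant subtrees from the non-trivial profiles: any four leaves below four distinct cycle vertices yield a non-trivial profile $(abcd)$ that directly encodes their cyclic order around the cycle, and these local orderings paste consistently into a unique global ordering (up to reflection, which is not distinguishable in the semi-directed setting).

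The main obstacle, and the most interesting part, is identifying the reticulation vertex in cycles of length at least five. The key structural observation is that, among four-leaf subsets drawn one from each of four distinct cycle vertices' pendant subtrees, the profile is non-trivial if and only if one of the four leaves lies in the pendant subtree of the reticulation vertex. Both displayed trees of a cycle arise by deleting one of the two reticulation edges; when restricted to leaves not descended from the reticulation, the two restrictions agree (the reticulation's subtree is deleted from the sample, so its placement at either endpoint of the remaining path is immaterial), whereas including a reticulation-descended leaf causes it to attach at genuinely different positions in the two displayed trees. The reticulation is thus pinned down as the unique cycle vertex whose pendant subtree causes non-triviality. In a $4$-cycle this criterion degenerates---every profile on four pendant leaves reflects the cyclic order regardless of the reticulation's position---explaining why such reticulations must be suppressed in $\NN$, and similarly $2$- and $3$-cycles leave no trace on $\Q(\N)$.

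For the $O(n^2)$ running time, rather than enumerating all $\Theta(n^4)$ profiles, I would maintain a compact per-pair summary recording whether two leaves share a blob and, if so, their relative cycle position; these summaries can be populated from $O(n^2)$ carefully chosen quartet queries, after which a split-decomposition-style pass builds the tree-of-blobs and sorting pendant subtrees around each cycle finishes the reconstruction. Achieving quadratic rather than quartic work is the main algorithmic hurdle, and charging work to pairs of leaves rather than to quartets is the natural strategy.
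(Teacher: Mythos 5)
First, a framing remark: the paper does not prove this proposition itself --- it is imported from \cite{banos2019identifying,frohn2024reconstructing} --- so your attempt has to be judged against the actual content of the result rather than an in-paper argument. Judged that way, there is a genuine gap. Your first step rests on a false equivalence: it is \emph{not} true that $\{a,b,c,d\}$ has a trivial profile $ab|cd$ precisely when some cut-edge of $\N$ separates $\{a,b\}$ from $\{c,d\}$. Take four leaves hanging from four distinct vertices of a cycle of length at least five, none of them below the reticulation; both displayed trees restrict to the same caterpillar quartet, so the profile is trivial, yet no cut-edge of $\N$ induces that split. Consequently the collection of ``trivial $4$-splits'' is not the split system of the tree-of-blobs, and a Buneman-style assembly would over-resolve each cycle into a caterpillar on its non-reticulation sides instead of leaving it as a blob. (The paper itself flags exactly this pitfall in its discussion: a trivial quartet profile of a network is not necessarily induced by its tree-of-blobs --- the T-quartet versus B-quartet distinction.) The same example falsifies your second paragraph's claim that any four leaves below four distinct cycle vertices yield a non-trivial profile directly encoding their cyclic order; indeed that claim contradicts your own, correct, criterion in the third paragraph, namely that among such four-leaf sets the profile is non-trivial if and only if one of the leaves lies below the reticulation. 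The cyclic order therefore has to be reassembled from the non-trivial profiles (all of which contain a reticulation-side leaf) together with the trivial profiles giving the linear order of the remaining sides --- your third paragraph has the right structural observation to carry this out, but the identifiability argument as written is built on the two false claims above and would need to be reworked around it.

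The complexity half is also unsubstantiated. Since $\Q(\N)$ has size $\Theta(n^4)$, an $\mathcal{O}(n^2)$ bound indeed forces an oracle/query view, but you only assert that per-pair summaries (``same blob, relative cycle position'') can be populated by $\mathcal{O}(n^2)$ well-chosen profile queries, without specifying the queries or arguing they suffice to determine those summaries; that construction is precisely the nontrivial algorithmic content of \cite{frohn2024reconstructing} that the proposition cites, so as it stands this part is a plan rather than a proof.
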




\subsection{The CFN model and distances in phylogenetic networks}
Given a rooted phylogenetic network $\N=(V,E)$, for each $v\in V$ we associate a random variable $Y(v)$ with a binary state space, corresponding to purine or pyrimidine bases in DNA sequences. For each edge $e=(u,v)\in E$ that is not a reticulation edge, we define $p(e)$ as the probability that the states of $Y(u)$ and $Y(v)$ differ and call it the \emph{mutation probability} of $e$. 
We \niels{make the standard exclusion $p(e)\neq 0$ and $p(e)\neq 1/2$} because such a probability would mean that either there is no change on $e$ or variables $Y(v)$ and $Y(u)$ are random with respect to each other~\cite{erdos1999logs1}. Moreover, biologically, nucleotide states are more likely to remain unchanged than to mutate. Therefore, $p(e)\in(0,1/2)$.
Observe that probabilities $p(e)$ yield a Markov model on the displayed trees of $\N$, called the CFN model~\cite{neyman1971molecular}, which is an instance of the general Markov model for transition matrices
\begin{align*}
    M^e=\left(\begin{array}{cc}
    1-p(e) & p(e)\\
    p(e) & 1-p(e)
    \end{array}\right)
\end{align*}
and a uniform root distribution. For reticulation edges $e=(u,v)\in E$ we also define the \emph{inheritance probability} $\gamma(e)\in(0,1)$ as the proportion of \niels{sites in the alignment} of $v$ contributed by $u$. \CorrB{The following result is well-known in the literature (e.g. page 13 in~\cite{hendy1993spectral}).} 
\begin{lemma}\label{lem::treePath}
    For the path in a phylogenetic tree constituted by edges $e_1,\dots,e_m$ and with endpoints $v_1$ and $v_2$, we have
    \begin{align*}
        \mathbb{P}[Y(v_1)\neq Y(v_2)]=\frac{1}{2}\left(1-\prod_{i=1}^m(1-2p(e_i))\right)
    \end{align*}
    under the CFN model.
\end{lemma}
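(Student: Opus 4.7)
The plan is to prove the identity by induction on the length $m$ of the path. For the base case $m=1$, the path is a single edge $e_1$ with endpoints $v_1, v_2$, and by the definition of the mutation probability we have $\mathbb{P}[Y(v_1)\neq Y(v_2)] = p(e_1) = \tfrac{1}{2}(1-(1-2p(e_1)))$, matching the formula.

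For the inductive step, label the path $v_1 = u_0, u_1, \dots, u_{m-1}, u_m = v_2$ with $e_i = \{u_{i-1}, u_i\}$, and set $w = u_{m-1}$. The key observation, which comes from the symmetry of the CFN transition matrix $M^{e_m}$, is that $\mathbb{P}[Y(w)\neq Y(v_2)\mid Y(w) = s] = p(e_m)$ for both states $s$, so the event $\{Y(w)\neq Y(v_2)\}$ is independent of $Y(w)$ (and hence of $Y(v_1)$ by the Markov property of the underlying tree model). Conditioning on whether a mutation occurs on $e_m$ then yields
\begin{align*}
\mathbb{P}[Y(v_1)\neq Y(v_2)] &= \mathbb{P}[Y(v_1)\neq Y(w)]\,(1-p(e_m)) + \mathbb{P}[Y(v_1) = Y(w)]\,p(e_m).
\end{align*}
Setting $Q = \prod_{i=1}^{m-1}(1-2p(e_i))$ and applying the inductive hypothesis, the right-hand side simplifies to
\begin{align*}
\frac{1-Q}{2}(1-p(e_m)) + \frac{1+Q}{2}p(e_m) = \frac{1}{2}\bigl(1 - Q(1-2p(e_m))\bigr),
\end{align*}
which is exactly $\tfrac{1}{2}(1-\prod_{i=1}^m(1-2p(e_i)))$, completing the induction.

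An equivalent and perhaps cleaner route would be to observe that $Y(v_1)\neq Y(v_2)$ precisely when an odd number of mutations occur along $e_1,\dots,e_m$. Letting $D_i \in \{0,1\}$ indicate a mutation on $e_i$, the same symmetry argument above shows that $D_1,\dots,D_m$ are mutually independent Bernoulli variables with parameter $p(e_i)$, and then the identity $\mathbb{E}[(-1)^{D_i}] = 1-2p(e_i)$ together with $\mathbb{P}[\sum D_i \text{ odd}] = \tfrac{1}{2}(1 - \mathbb{E}[(-1)^{\sum D_i}])$ gives the formula in one line.

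The only real subtlety, regardless of which route is taken, is verifying the independence claim: it relies specifically on the fact that the CFN transition matrix is symmetric in its two states so that the indicator of a substitution on an edge is independent of the state at either endpoint. Once this is in hand, the remaining algebra is routine, so I do not expect any genuine obstacle in this short proof.
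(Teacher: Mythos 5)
Your proof is correct and follows essentially the same route as the paper: induction on the path length, peeling off the last edge and using the Markov property together with the symmetry of the CFN transition matrix. In fact your explicit conditioning on whether a mutation occurs on $e_m$ (and the equivalent parity-of-mutations observation) spells out the independence step that the paper's own proof treats implicitly via the multiplicativity of the factors $1-2p(e_i)$, so no gap remains.
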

\noindent Observe that \cref{lem::treePath} can be applied to every displayed tree $\T$ of $\N$. Hence, we can encode the magnitude of the mutation probabilities in the CFN model as edge lengths of $\N$ \CorrB{as follows:} for each displayed tree $\T$ of $\N$, we define a $n\times n$ matrix $D(\T)$ of pairwise distances between leaves $\mathcal{X}=\{x_1,\dots,x_n\}$ such that $D(\T)$ \CorrB{are additive distances} on $\T$. Specifically, for the unique path in $\T$ constituted by edges $e_1,\dots,e_m$ and with endpoints $x_i,x_j\in\mathcal{X}$, we define
\begin{align*}
    D(\T)_{ij}=-\frac{1}{2}\log\left(1-2\cdot\mathbb{P}[Y(x_i)\neq Y(x_j)]\right)=-\sum_{l=1}^m\frac{1}{2}\log\left(1-2p(e_l)\right)
\end{align*}
to adjust the distance matrix to our choice of mutation probabilities under the CFN model. \CorrB{This means, $D(\T)_{ij}$ is the sum of positive edge lengths $\ell(e_j)$, $j\in\{1,\dots,m\}$, defined in general by
\begin{align*}
    \ell(e)&=-\frac{1}{2}\log\theta(e)~&~&\forall\,e\in E,\,\theta(e)=1-2p(e).
\end{align*}
In other words, $D(\T)$ defines \CorrB{additive distances} on $\T$.} Notice that this definition of $D(\T)$ coincides with the paralinear distance~\cite{lake1994reconstructing} and the LogDet distance~\cite{steel1994recovering} for $\T$ under the CFN model.

Now, let $\mathcal{S}$ be a sequence alignment on $\mathcal{X}$ of length $k$ generated under the CFN model. \CorrB{Hence, $\mathcal{S}$ is a collection of binary vectors $s\in\{0,1\}^k$.} We let $d_H(i,j)$ denote the \emph{Hamming distance} between the sequences $s^i,s^j\in \mathcal{S}$ to define the \emph{dissimilarity score} of sequences $s^i$ and $s^j$ as $h^{ij}=d_H(i,j)/k$. Then, the expectation $\mathbb{E}[h^{ij}]$ can be seen as the probability $\mathbb{P}[Y(x_i)\neq Y(x_j)]$ between sequences $s^i$ and $s^j$. Therefore, we define the \emph{empirical distance} of $s^i$ and $s^j$ by
\begin{align*}
d_{ij}=-\frac{1}{2}\log\left(1-2h^{ij}\right)
\end{align*}
to ensure that $d_{ij}$ converges to $D(\T)_{ij}$ in probability as $k$ grows for any fixed displayed tree $\T$ of $\N$~\cite{steel1994recovering}. This means, for fixed inheritance probabilities $\gamma(e)$, any reticulation vertex $v$ in a rooted binary phylogenetic network $N$ and displayed trees $T_1$ and $T_2$ of $N$ which differ only in distinct reticulation edges $e_1$ and $e_2$ incident to $v$, respectively, we know that $d_{ij}$ converges to $\gamma(e_1)D(\T_1)_{ij}+\gamma(e_2)D(\T_2)_{ij}$ in probability as $k$ grows.

Using the empirical distances as input, \cite{erdos1999logs1} provided an absolute fast converging method to reconstruct the tree~$\T$ with probability $1-o(1)$ if
\begin{align}\label{k::LB}
    k>\frac{c\log n}{\left(1-\max\limits_{e\in E}\theta(e)\right)^2\left(\min\limits_{e\in E}\theta(e)\right)^{\text{depth}(\T)}}
\end{align}
for a constant $c$ and $$\text{depth}(T)=\max_{e=(v_1,v_2)\in E(T)}\max_{i\in\{1,2\}}\{\text{shortest topological distance from $v_i$ to a leaf after removing $e$}\}.$$
\begin{table*}[!t]
\centering
    \begin{tabular}{l|ll}
     \toprule
        & Range of mutation probabilities $p(e)$, $e\in E(T)$ & \\
     	depth$(T)$ & $\left[c_1,c_2\right]$ for constants $c_1,c_2\geq 0$ & $\left[\frac{1}{\log n},\frac{\log\log n}{\log n}\right]$ \\ \midrule \\ [-1em]
        $\Theta(1)$ & logarithmic & polylog \\ \\ [-1em]
        $\Theta(\log\log n)$ & polylog & polylog \\ \\ [-1em]
        $\Theta(\log n)$ & polynomial & polylog \\ \\ [-1em]
	\bottomrule
    \end{tabular}
    \caption{\leo{Asymptotic behaviour} of the righthand side of bound~\eqref{k::LB} for different \leo{growth-rates for} $\text{depth}(T)$ and different ranges for mutation probabilities $p(e)$, $e\in E(T)$, with~$n$ the number of taxa.}
    \label{tab::1}
\end{table*}
As a shorthand notation we denote $\text{depth}(N)$ as the largest chosen number $\text{depth}(T)$ among all displayed trees $T$ of $N$. In other words, $\text{depth}(N)$ is an upper bound across all internal vertices $v$ and displayed trees $T$ of $N$ on the shortest topological distance between $v$ and a leaf of the subtree of $T$ induced by $v$. Table~\ref{tab::1} shows different values for the bound~\eqref{k::LB} on the number of sites $k$. Notice that $\text{depth}(T)=\Theta(1)$ and $\text{depth}(T)=\Theta(\log n)$ occur for the completely unbalanced and completely balanced binary tree $T$, respectively. \niels{Additionally,} $\text{depth}(\T)=\mathcal{O}(\log\log n)$ typically occurs for Yule trees~\cite{erdos1999logs1}. Furthermore, if values $\theta(e)$ are restricted to be either large or small, then the bound on $k$ becomes logarithmic or polynomial, respectively~\cite{mossel2004phase}. 

\CorrB{We call the length of a shortest path between two vertices $v_i,v_j$ in a tree $T$ the \emph{graphical distance} between $v_i$ and $v_j$ and denote it by $D_{ij}$. The matrix of pairwise graphical distances between vertices in $T$ is denoted by~$D$. Here, the length of a path is either the number of edges of a path if $T$ is unweighted or the sum of the edge lengths of the path if $T$ is weighted. Then, we will make use of the following characterization of additive distances~\cite{buneman1974note}:}
\begin{lemma}[Four-point condition]
    A graph is a tree if and only if it is connected, \niels{3-cycle}-free and has graphical distance $D$ satisfying the four-point condition:
    \begin{align*}
        D_{ij}+D_{kl}\leq\max\left\{D_{ik}+D_{jl},D_{il}+D_{jk}\right\}
    \end{align*}
    for all vertices $v_i,v_j,v_k,v_l$ in the graph. Equivalently, the four-point condition requires that exactly one of the following properties holds:
    \begin{align*}
        D_{ij}+D_{kl}&< D_{ik}+D_{jl}=D_{il}+D_{jk},\\
        D_{ik}+D_{jl}&< D_{ij}+D_{kl}=D_{il}+D_{jk},\\
        D_{il}+D_{jk}&< D_{ik}+D_{jl}=D_{ij}+D_{kl}
    \end{align*}
    for all vertices $v_i,v_j,v_k,v_l$ in the graph.
\end{lemma}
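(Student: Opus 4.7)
The plan is to handle the two directions separately. The forward direction is routine; the reverse direction is by contradiction, and reduces to showing that a shortest cycle of length at least~$4$ always contains four vertices whose graphical distances violate the four-point condition.

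For $(\Rightarrow)$, suppose $G$ is a tree. Then $G$ is connected and acyclic, hence certainly 3-cycle-free. To verify the four-point condition, consider any four vertices~$v_i,v_j,v_k,v_l$ and the (unique) Steiner subtree they induce in~$G$. This subtree is a tree whose leaves are (a subset of) these four vertices, and it realizes exactly one of the three quartet topologies $ij|kl$, $ik|jl$, $il|jk$, possibly degenerating to a star. In the topology $ij|kl$ with a central edge of (possibly zero) length~$\lambda\geq 0$, a direct expansion along the unique paths gives $D_{ij}+D_{kl}+2\lambda = D_{ik}+D_{jl} = D_{il}+D_{jk}$, so the two largest of the three sums coincide and the condition holds; the star case yields all three sums equal.

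For $(\Leftarrow)$, suppose $G$ is connected and 3-cycle-free and its graphical distance satisfies the four-point condition, but that $G$ is not a tree. Then $G$ contains a cycle, and since no $3$-cycle exists, a shortest cycle~$C$ of $G$ has length $m\geq 4$. The key observation is that for any two vertices $u,w$ on $C$, the graphical distance $D(u,w)$ in $G$ equals the length of the shorter arc of $C$ between them: any strictly shorter $u$--$w$ path in $G$ could be concatenated with that shorter arc to produce a cycle of length strictly less than~$m$, contradicting the minimality of~$C$.

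It then remains to exhibit four vertices on~$C$ that violate the four-point condition. For $m=4$, take $v_1,v_2,v_3,v_4$ in cyclic order: the three sums are $2,4,2$, so the maximum is attained only once, which violates the condition. For $m\geq 5$, take $v_1,v_2,v_{\lfloor m/2\rfloor+1},v_{\lfloor m/2\rfloor+2}$ and check, separately for even and odd~$m$, that the ``diagonal'' sum $D(v_1,v_{\lfloor m/2\rfloor+1})+D(v_2,v_{\lfloor m/2\rfloor+2})$ is strictly larger than each of the other two sums. The only real obstacle is precisely this bookkeeping of arc lengths on~$C$ for the two parities of~$m$; once the distance-along-$C$ observation is in place, it is a short calculation, and it yields the desired contradiction.
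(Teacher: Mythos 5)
The paper never proves this lemma: it is stated as a known characterization of tree metrics and attributed to Buneman's 1974 note, so there is no in-paper argument to compare yours against; your proposal should be judged as a self-contained proof, and as such it is essentially correct and follows the standard route (Steiner topology of four vertices for the forward direction, a shortest cycle for the converse). One step needs a small repair: concatenating a strictly shorter $u$--$w$ path $P$ with the shorter arc $A$ of the shortest cycle $C$ gives a closed walk, not necessarily a cycle, so it does not immediately contradict the minimality of $C$. The correct argument is that the union of the two distinct $u$--$w$ paths $P$ and $A$ contains a cycle of length at most $|P|+|A| < 2a \le m$ (where $a=|A|$ is the shorter-arc length, so $2a\le m$), and this strict inequality is where you genuinely need the \emph{shorter} arc. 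With that fix, your distance-along-$C$ claim holds, and your chosen quadruple $v_1,v_2,v_{\lfloor m/2\rfloor+1},v_{\lfloor m/2\rfloor+2}$ does make the diagonal sum strictly largest in both parities (sums $2$, $2\lfloor m/2\rfloor$, and $2\lfloor m/2\rfloor-2$ or $2\lfloor m/2\rfloor-1$), so the four-point condition in its $\max$ form is violated, completing the contradiction.

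One further remark, about the statement rather than your argument: your forward direction correctly notes that a degenerate (star) quartet in a tree gives all three sums equal. This satisfies the first ($\max$) form of the condition but not the ``exactly one of the three strict inequalities'' form that the lemma declares equivalent; the two formulations differ precisely on such degenerate quadruples (and on quadruples with repeated vertices). So make explicit that you are proving the lemma with the $\max$ form of the four-point condition; the imprecision in the ``equivalently'' clause is in the paper's phrasing, not a gap in your proof.
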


\section{Inference rules and dyadic closure}\label{sec:dyadic_closure}

In this section we develop inference rules for quartet profiles and utilize them to characterize a small set of quartet profiles that suffice to reconstruct the whole network. Although our primary focus is on level-1 networks, we prove that several of the inference rules hold more generally for networks of any level, which could open further directions for research.

\subsection{Inference rules}

Inference rules for quartets of unrooted phylogenetic trees are rules of the type 
$ab|cd + ab|ce \mapsto ab|de$,
meaning that if $ab|cd$ and $ab|ce$ are quartets induced by a tree, this implies that $ab|de$ is a quartet induced by the same tree. Such inference rules have been extensively studied for phylogenetic trees \cite[e.g.][]{bandelt1986reconstructing,erdos1999logs1}, since they allow to extract more quartets from only limited quartet information available. More recently, inference rules for \emph{quarnets} (4-leaf subnetworks) of fully undirected networks have also been studied \cite{huber2018quarnet}. The method presented in this paper partially relies on new inference rules for quartet profiles. We note that since we study semi-directed networks and hence the subnetwork induced by fewer leaves might force the loss of information on reticulation events, these rules are inherently different from the undirected quarnet inference rules presented in \cite{huber2018quarnet}. 

\smallskip

Formally, an \emph{inference rule} for quartet profiles is a rule of the type $$q_1 + \cdots + q_k \mapsto q_1' + \cdots + q_l'$$
that can be applied to a set of distinct quartet profiles $\Q=\{q_1, \ldots, q_k\}$ and \CorrB{yields} a set of new quartet profiles $\Q'=\{q_1, \ldots, q_l\}$. \CorrB{Recall that each $q_i$ forms a set of quartet trees: if $|q_i| = 1$, it is identified with the split of its unique quartet tree, e.g.\ $ab | cd$, and if $|q_i| = 2$, it is represented using circular ordering notation, e.g.\ $(abcd)$.} We say that an inference rule is \emph{valid} if the following implication holds for all semi-directed networks $\N$ (possibly restricted to a certain subclass of semi-directed networks)
\begin{align*}
\Q\subseteq\Q(N)~~\Rightarrow~~\Q'\subseteq\Q(N).
\end{align*}
The \emph{order} of an inference rule is its input size~$k$. A second order inference rule is called \emph{dyadic}, while a third order rule is \emph{triadic}.

We first introduce two well-known dyadic inference rules for quartets of unrooted phylogenetic trees (see e.g. \cite{bandelt1986reconstructing,erdos1999logs1}):


\begin{equation}\label{eq:R1}
    ab|cd + ac|de \mapsto ab|de + bc|de;\tag{R1}
\end{equation}
\begin{equation}\label{eq:R2}
    ab|cd + ab|ce \mapsto ab|de .\tag{R2}
\end{equation}
In the following proposition we prove that any rule that is valid for quartets of an unrooted phylogenetic tree---which thus includes inference rules \ref{eq:R1} and \ref{eq:R2}---is valid for semi-directed networks, when considering (necessarily trivial) quartet profiles. We note in passing that this also means that these are valid inference rules for quarnet-splits (as defined in \cite{frohn2024reconstructing}).
\begin{proposition}\label{prop:rule1_valid}
    Any valid quartet inference rule for unrooted phylogenetic trees is a valid quartet profile inference rule for semi-directed networks.
\end{proposition}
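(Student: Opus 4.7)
The plan is a short direct argument that exploits the fact that every trivial quartet profile of $\N$ is displayed by \emph{every} displayed tree of $\N$. Fix a valid tree inference rule $q_1 + \cdots + q_k \mapsto q_1' + \cdots + q_l'$ and suppose $\{q_1,\ldots,q_k\} \subseteq \Q(\N)$ are (necessarily trivial) quartet profiles of the semi-directed network $\N$. The goal is to show that each $q_j'$ is also a trivial quartet profile of $\N$, i.e.\ that it appears in $\Q(\N)$.

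First I would take an arbitrary displayed tree $T$ of $\N$ and observe that, by definition of a quartet profile, every displayed quartet on $\LL(q_i)$ is an element of the set $q_i$. Since $q_i$ is trivial, this forces $T|_{\LL(q_i)} = q_i$ for each $i = 1,\ldots,k$. Hence the quartet set $\{q_1,\ldots,q_k\}$ is contained in the quartets of $T$. Applying the hypothesized validity of the inference rule in the tree setting gives $\{q_1',\ldots,q_l'\} \subseteq q(T)$, i.e.\ $T|_{\LL(q_j')} = q_j'$ for every $j$.

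Since $T$ was an arbitrary displayed tree of $\N$, this shows that for each $j$ the unique quartet displayed on $\LL(q_j')$ by any displayed tree of $\N$ is $q_j'$ itself. By the definition of the quartet profile of $\N$ on $\LL(q_j')$ as the set of \emph{all} displayed quartets on those four leaves, this set equals the singleton $\{q_j'\}$, so $q_j'$ is a (trivial) quartet profile of $\N$ and therefore lies in $\Q(\N)$.

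I do not foresee a substantive obstacle here: the argument is essentially bookkeeping, and the only subtlety is to make explicit why triviality of the input profiles is needed (so that each displayed tree is forced to realise the same quartet on the relevant four leaves). It is worth remarking in the write-up that the output quartet profiles are automatically trivial, which justifies the parenthetical ``necessarily trivial'' in the proposition statement and explains why this result, while useful, does not by itself produce any non-trivial profiles — that is the job of the new network-specific rules developed later in \cref{sec:dyadic_closure}.
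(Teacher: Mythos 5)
Your proposal is correct and follows essentially the same argument as the paper's proof: fix an arbitrary displayed tree, use triviality of the input profiles to see that the tree realises each $q_i$, apply the tree rule, and conclude from arbitrariness of the displayed tree that each $q_j'$ is a trivial quartet profile of $\N$. The only difference is presentational — you spell out explicitly that the output profiles are singletons, which the paper leaves implicit — so no substantive changes are needed.
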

\begin{proof}
Denote by~$R$ a quartet inference rule $q_1 + \cdots + q_k \mapsto q_1' + \cdots + q_l'$ (with $k,l \geq 1$) that is valid for any unrooted phylogenetic tree. Let $\N$ be a semi-directed network and assume that all $q_i$ are also trivial quartet profiles of $\N$ (i.e. there is only one displayed quartet tree for each of the corresponding four-leaf subsets). Next, let $\T$ be a displayed tree of $\N$ \CorrB{on the same leaf set}. Then, since all quartet profiles $q_i$ are trivial, all of them are also induced quartets of the tree $\T$. Hence, by the valid rule~$R$, all the $q_j'$ are induced quartets of $\T$. Since $\T$ was an arbitrary displayed tree of $\N$ \CorrB{on the same leaf set}, all $q_j'$ are quartets of any of the displayed trees of $\N$, and hence they are trivial quartet profiles of~$\N$.
\end{proof}


Building upon the two inference rules~\ref{eq:R1} and~\ref{eq:R2}, we introduce two novel inference rules (the first one being dyadic, and the second one triadic) that also consider nontrivial quartet profiles:
\begin{equation}\label{eq:R3}
    (abcd) + ac|de \mapsto ab|de + bc|de ;\tag{R3}
\end{equation}
\begin{equation}\label{eq:R4*}
    (abcd) + ab|de + bc|de \mapsto (abce). \tag{R4$^*$}
\end{equation}
In the following lemma we show that these inference rules are valid for any semi-directed network.
\begin{lemma}\label{lem:rule3_valid}
    The quartet profile inference rules \ref{eq:R3} and \ref{eq:R4*} are valid for any semi-directed network.
\end{lemma}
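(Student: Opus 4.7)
The plan is to prove both rules by analyzing the quartets induced by an arbitrary displayed tree $\T$ of $\N$. The key observation I will use is that a phylogenetic tree on five leaves is a caterpillar, so its structure (and therefore the induced quartet on every four-leaf subset) is completely determined by any two of its induced quartets that jointly constrain it enough. Both rules will be reduced to examining two such caterpillars on $\{a,b,c,d,e\}$.

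For rule~\ref{eq:R3}, take any displayed tree $\T$ of $\N$. Since $ac|de$ is a trivial profile of $\N$, $\T$ displays $ac|de$. Since $(abcd) \in \Q(\N)$, the quartet displayed by $\T$ on $\{a,b,c,d\}$ must be either $ab|cd$ or $ad|bc$. I would then restrict $\T$ to the five leaves $\{a,b,c,d,e\}$ and check that the combination $ab|cd + ac|de$ forces $\T|_{\{a,b,c,d,e\}}$ to be the caterpillar $((a,b)-c-(d,e))$, while $ad|bc + ac|de$ forces it to be $((b,c)-a-(d,e))$. In both caterpillars one reads off that the induced quartet on $\{a,b,d,e\}$ is $ab|de$ and on $\{b,c,d,e\}$ is $bc|de$. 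Since $\T$ was arbitrary, both $ab|de$ and $bc|de$ are trivial profiles of~$\N$, as required.

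For rule~\ref{eq:R4*}, I apply the same idea. Every displayed tree $\T$ of $\N$ displays $ab|de$ and $bc|de$ (trivial) and either $ab|cd$ or $ad|bc$ (from $(abcd)$). In the first case $\T|_{\{a,b,c,d,e\}}$ is forced to be $((a,b)-c-(d,e))$, which induces $ab|ce$ on $\{a,b,c,e\}$; in the second case it is $((b,c)-a-(d,e))$, which induces $ae|bc$. Since $(abcd)$ is the profile, both $ab|cd$ and $ad|bc$ are induced by some displayed tree of $\N$, so both $ab|ce$ and $ae|bc$ appear in the profile of $\N$ on $\{a,b,c,e\}$; moreover, every displayed tree falls into one of the two cases, so no displayed tree of $\N$ induces $ac|be$. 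Hence the profile on $\{a,b,c,e\}$ is exactly $\{ab|ce,\,ae|bc\}=(abce)$.

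The main subtlety I expect to encounter is the second half of the argument for~\ref{eq:R4*}: showing not just that $ab|ce$ and $ae|bc$ belong to $\Q(\N)$ but that together they form the \emph{entire} profile, which requires excluding $ac|be$. The forced caterpillar structure delivers exactly this, since the induced quartet on $\{a,b,c,e\}$ in each displayed tree is always one of the two desired ones. The caterpillar identifications themselves amount to a short check of which unrooted trees on five labelled leaves restrict correctly to the two given four-leaf quartets; for the $ab|cd$ case of each rule this can alternatively be phrased as a direct application of the classical tree rule~\ref{eq:R1}.
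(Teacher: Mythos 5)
Your proof is correct and follows essentially the same route as the paper's: both arguments split the displayed trees of $\N$ according to which of the two quartets in $(abcd)$ they induce and then derive the required quartets within each class, with your explicit identification of the unique five-leaf caterpillar playing the role of the paper's direct invocations of the classical tree rules \ref{eq:R1} and \ref{eq:R2}. Your added care at the end of the \ref{eq:R4*} argument---checking that both $ab|ce$ and $ae|bc$ genuinely occur (since $(abcd)$ being a profile forces both classes of displayed trees to be non-empty) and that no displayed tree induces $ac|be$---makes explicit a point the paper's proof leaves implicit.
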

\begin{proof}
We first consider rule \ref{eq:R3} and assume that $(abcd)$ and $ac|de$ are quartet profiles of $\N$.  Then, we can partition the displayed trees of $\N$ into the set $\mathcal{T}_1$ with quartets $ab|cd$, and the set $\mathcal{T}_2$ with quartets $ad|bc$. Since $ac|de$ is a trivial quartet profile of $\N$, it is a quartet of all of these displayed trees. We can now apply the valid quartet inference rule \ref{eq:R1} to the quartets in~$\mathcal{T}_1$, and obtain that $ab|de$ and $bc|de$ are quartets of those trees. After switching the roles of $a$ and $c$ we can rewrite~\ref{eq:R1} to $bc|ad + ac|de \mapsto ab|de + bc|de$, and we obtain that $ab|de$ and $bc|de$ are also quartets of the trees in $\mathcal{T}_2$. Since this covers all displayed trees, these two quartets are (trivial) quartet profiles of~$\N$.

For rule \ref{eq:R4*}, suppose that $(abcd)$, $ab|de$ and $bc|de$ are quartet profiles of~$\N$. We again have the two sets of displayed trees,
$\mathcal{T}_1$ with quartets $ab|cd$, and $\mathcal{T}_2$ with quartets $ad|bc$.
Furthermore, in any of these displayed trees, $ab|de$ and $bc|de$ are quartets. Now, note that we can rewrite rule \ref{eq:R2} to $ab|cd +ab|de \mapsto ab|ce$ by switching the roles of $c$ and $d$. Applying this rule to the trees in $\mathcal{T}_1$, gives us that $ab|ce$ are quartets in those trees. Similarly, we can also rewrite rule \ref{eq:R2} to $bc|da + bc|de \mapsto bc|ae$. Applying this rule to the trees in $\mathcal{T}_2$, gives us that $bc|ae$ are quartets in those trees. Since we have considered all displayed trees, this means that $(abce)$ is a quartet profile of~$\N$.
\end{proof}
In case the semi-directed network $N$ is level-1, inference rule \ref{eq:R4*} can be simplified to a dyadic inference rule:
\begin{equation}\label{eq:R4}
    (abcd) + bc|de \mapsto (abce).\tag{R4}
\end{equation}
This is useful, since exhaustively applying dyadic inference rules to a set of quartets requires less time than applying higher order inference rules. The following lemma shows validity.

\begin{lemma}\label{lem:rule4_valid}
     The quartet profile inference rule \ref{eq:R4} is valid for any semi-directed level-1 network.
\end{lemma}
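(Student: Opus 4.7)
The plan is to partition the displayed trees of $\N$ according to the two quartets making up the profile $(abcd)$, carry out an elementary split enumeration on five leaves to determine the induced quartet on $\{a,b,c,e\}$ in each case, and then invoke outer-labeled planarity of level-1 networks to exclude the remaining possibility.

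Since $(abcd)\in\Q(\N)$, every displayed tree of $\N$ induces either $ab|cd$ or $ad|bc$ on $\{a,b,c,d\}$; let $\mathcal{T}_1$ and $\mathcal{T}_2$ be the resulting two classes of displayed trees, both of which are nonempty. Because $bc|de$ is a trivial quartet profile, every displayed tree of $\N$ additionally displays $bc|de$ on $\{b,c,d,e\}$. Consequently, for each $T\in\mathcal{T}_1$ its restriction to $\{a,b,c,d,e\}$ is a 5-leaf unrooted binary tree displaying both $ab|cd$ and $bc|de$, and analogously for $\mathcal{T}_2$ with $ad|bc$ in place of $ab|cd$.

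I would then enumerate, in each case, the pairs of compatible nontrivial splits on $\{a,b,c,d,e\}$ whose restrictions realize the required two quartets. For $\mathcal{T}_1$, among the at most two splits on five leaves restricting to $ab|cd$ and the at most two restricting to $bc|de$, a direct compatibility check leaves exactly one valid pair, namely $\{a,b\}|\{c,d,e\}$ together with $\{a,b,c\}|\{d,e\}$; restricting this 5-leaf tree to $\{a,b,c,e\}$ gives the quartet $ab|ce$. For $\mathcal{T}_2$, the same enumeration shows that any compatible split system must contain $\{b,c\}|\{a,d,e\}$, and each of the (up to three) admissible completions restricts to the quartet $ae|bc$ on $\{a,b,c,e\}$. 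Hence every tree in $\mathcal{T}_1$ displays $ab|ce$ and every tree in $\mathcal{T}_2$ displays $ae|bc$, so the quartet profile of $\N$ on $\{a,b,c,e\}$ contains both $ab|ce$ and $ae|bc$.

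Finally, because $\N$ is level-1 and therefore outer-labeled planar, its quartet profile on any four leaves omits at least one of the three possible quartets by the result of \cite{rhodes2025identifying} cited in the preliminaries; this excludes $ac|be$ and forces the profile on $\{a,b,c,e\}$ to equal exactly $\{ab|ce, ae|bc\}=(abce)$, as required. The only nontrivial step is the 5-leaf split enumeration, which has to be performed carefully but is wholly combinatorial; the level-1 hypothesis is used only at the very end, which is precisely why the triadic rule \ref{eq:R4*} cannot in general be collapsed to \ref{eq:R4} without it.
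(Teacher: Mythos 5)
Your proof is correct, but it follows a genuinely different route from the paper's. The paper proves \cref{lem:rule4_valid} by a structural case analysis of the induced five-leaf level-1 network on $\{a,b,c,d,e\}$: after discarding 2- and 3-cycles it distinguishes whether this network is a sunlet (forcing $a$ below the reticulation) or has a 4-cycle with a cut-edge inducing $abc|de$, and reads off $(abce)$ from the explicit catalogue of admissible networks in \cref{fig:inference_rule_level1}; the level-1 hypothesis is thus used throughout. You instead argue at the level of displayed trees and split compatibility, in the same spirit as the paper's proof of \ref{eq:R3} and \ref{eq:R4*} in \cref{lem:rule3_valid}: every displayed tree lies in $\mathcal{T}_1$ or $\mathcal{T}_2$, and your five-leaf enumeration (which I checked) correctly forces the split $\{a,b\}|\{c,d,e\}$, hence the quartet $ab|ce$, for trees in $\mathcal{T}_1$, and the split $\{b,c\}|\{a,d,e\}$, hence $ae|bc$, for trees in $\mathcal{T}_2$. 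Two remarks. First, your closing appeal to outer-labeled planarity via \cite{rhodes2025identifying} is redundant: since $\mathcal{T}_1\cup\mathcal{T}_2$ exhausts the displayed trees and each such tree induces $ab|ce$ or $ae|bc$ on $\{a,b,c,e\}$, no displayed tree induces $ac|be$, so the profile equals $(abce)$ outright. Second, and as a consequence, your argument actually establishes \ref{eq:R4} for arbitrary binary semi-directed networks, not only level-1 ones, so your final sentence—that the level-1 hypothesis is what prevents collapsing \ref{eq:R4*} to \ref{eq:R4} in general—is not borne out by your own proof. What your approach buys is this extra generality and a purely combinatorial argument; what the paper's approach buys is an explicit description of all level-1 five-leaf networks consistent with the hypotheses, which makes the conclusion visible at a glance.
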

\begin{proof}
Suppose that $(abcd)$ and $bc|de$ are quartet profiles of a semi-directed level-1 network $\N$. It suffices to consider the case where $\N$ has exactly five leaves, otherwise we can consider the subnetwork induced by~$\{a,b,c,d,e\}$. Moreover, we can disregard 2- and 3-cycles, since they do not change the displayed quartets. We now consider two cases, depending on whether $\N$ is a sunlet \niels{or not}.

\textbf{Case 1:} \emph{$\N$ is a sunlet.} Since $bc|de$ is a trivial quartet profile, the subnetwork of $\N$ induced by $\{b,c,d,e\}$ also induces the split $bc|de$. \CorrB{Now note that a 4-leaf subnetwork of a sunlet network induces a nontrivial split if its leaf set excludes the leaf below the reticulation, and it does not induce a nontrivial split otherwise.} Hence, $a$ must be the leaf below the reticulation of the sunlet. Furthermore, the quartet profile $(abcd)$ indicates that the leaves of $\N$ are either ordered as $(a,b,c,d,e)$ or $(a,b,c,e,d)$ around the sunlet (see \cref{fig:inference_rule_level1}a). In both cases, $(abce)$ is a quartet profile of~$\N$.

\textbf{Case 2:} \emph{$\N$ is not a sunlet.} The existence of the quartet profile $(abcd)$ shows that $\N$ \emph{does} have a 4-cycle. Consequently, $\N$ consists of one four-cycle with an incident nontrivial cut-edge. Since $bc|de$ is a quartet profile, this cut-edge must induce the split $abc|de$. Hence, $\N$ is one of the four networks in \cref{fig:inference_rule_level1}b, and it follows that $(abce)$ is a quartet profile.
\end{proof}

\begin{figure}
    \centering
    \includegraphics{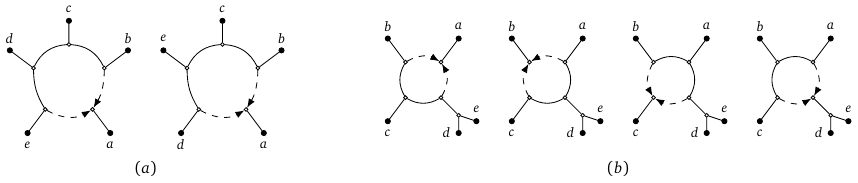}
    \caption{All semi-directed level-1 networks on $\{a,b,c,d,e\}$ with quartet profiles $(abcd)$ and $bc|de$ (disregarding possible 2- and 3-cycles). The networks in subfigure $(a)$ \CorrB{are sunlets that} have a 5-cycle and the networks in subfigure $(b)$ \CorrB{are not sunlets, having} a 4-cycle \CorrB{and a node forming its own internal blob}.}
    \label{fig:inference_rule_level1}
\end{figure}



\subsection{Representative quartet profiles and dyadic closure}

In this section, we present a small set of carefully selected quartet profiles that uniquely determine a semi-directed level-1 network. To do so, we first introduce some additional technical definitions.

\smallskip
Let $\N$ be a semi-directed level-1 network on $\X$ and $B$ an internal $k$-blob. Since $\N$ is level-1, such blobs are either cycles or single degree-3 vertices. Thus, they naturally induce a \emph{circular order} $\C = (X_1, \ldots , X_k)$ of subsets of $\X$ (with $X_1 | \ldots | X_k$ a $k$-partition of $\X$), \CorrB{defined up to reversal and cyclic permutations}. See \cref{fig:dyadic_closure_proof}a for an example. For each of the subsets $X_i$, we call the unique leaf $x_i \in X_i$ that is topologically the closest to $B$ (in terms of the number of edges) its \emph{representative leaf}. To ensure uniqueness, ties are resolved by choosing the leaf with the smallest index. The set of representative leaves $x_i$ corresponding to the circular order $\C$ is denoted by $\RR (\C)$.

Let $uv$ be a nontrivial cut-edge of a semi-directed network $\N$. \CorrB{Then,} $u$ (resp. $v$) is contained in a \CorrB{$(k+1)$-blob} $B_X$ (resp. \CorrB{$(l+1)$-blob} $B_Y$) for some \CorrB{$k+1\geq 3$ (resp. $l+1 \geq 3$)}. Recall that $uv$ induces a nontrivial split $X|Y$ of the leaves of~$\N$. Moreover, $B_X$ (resp. $B_Y$) induces a circular order $
\C_1 = (X_1 , \ldots , X_k , Y)$ (resp. $\C_2 = (X, Y_1 , \ldots , Y_l)$), where $X = \bigcup_i X_i$ (resp. $Y = \bigcup_j Y_j$). Then, we let $\D = (X_1 , \ldots , X_k) | (Y_1 , \ldots , Y_l)$ be the \emph{2-circular order} induced by $uv$, capturing both the induced split and the two induced circular orders. \CorrB{Note that $X_1$ and $X_k$ (resp. $Y_1$ and $Y_l$) are the two neighbors of $Y$ (resp. $X$) in $\C_1$ (resp. $\C_2$). In the 2-circular order $\D$, these are fixed as the outermost elements of the two sequences $(X_1 , \ldots , X_k)$ and $(Y_1 , \ldots , Y_l)$. Hence, unlike circular orders (which are defined up to cyclic permutation and reversal), $\D$ is defined only up to reversal of the two sequences, as it also records where the cut-edge attaches, namely between $X_1$ and $X_k$ and between $Y_1$ and $Y_l$.} See \cref{fig:dyadic_closure_proof}b and c for two examples. The set of representative leaves $\{x_1, \ldots, x_k, y_1, \ldots, y_l\}$ corresponding to $\D$ is again denoted by $\RR (\D)$, where $x_i \in \RR (\C_1)$ and $y_j \in \RR (\C_2)$.

\begin{figure}[htb]
    \centering
    \includegraphics{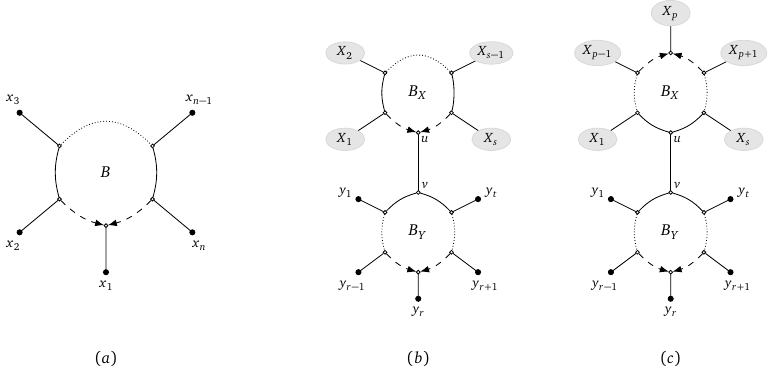}
    \caption{(a): A sunlet network on $\{x_1, \ldots, x_n \}$. The blob~$B$ induces the circular order $(\{x_1\}, \ldots, \{x_n\})$. (b/c): Two types of semi-directed level-1 networks on $X_1 \cup \ldots \cup X_s \cup \{y_1\} \cup \ldots \cup \{y_t\}$, with $s, t \geq 2$, $1 \leq r \leq t$ and $1 \leq p \leq s$ (for subfigure (c)). The $y_j$ represent leaves, and the $X_i$ represent subsets of leaves.
    The networks both have a nontrivial cut-edge $uv$ (with its endpoints in the blobs $B_X$ and $B_Y$) inducing the 2-circular order $(X_1, \ldots, X_s)|(\{y_1\}, \ldots, \{y_t\})$.}
    \label{fig:dyadic_closure_proof}
\end{figure}

\smallskip
We now extend the concept of a representative quartet from \cite{erdos1999logs1}---originally developed for rooted trees---to the context of quartet profiles and semi-directed level-1 networks. 

\begin{definition}[Representative quartet profile]\label{def:repr_quartet}
    Let $\N$ be a semi-directed level-1 network with leaf set $\X$. Then, the set $\R (\N) \subseteq \Q (\N)$ of \emph{representative quartet profiles} is constructed as follows. For each cut-edge with induced 2-circular order $\D = (X_1 , \ldots , X_k ) | (Y_1 , \ldots , Y_l)$ of $\X$ and $\RR (\D) = \{x_1, \ldots, x_k, y_1, \ldots, y_l\}$, 
    \begin{enumerate}[label=(\roman*)]
    \item $x_1 x_k | y_1 y_l \in \R (\N)$.
    \end{enumerate}
    For each $k$-blob $B$ ($k\geq 4$) with induced circular order $\C = (X_1 , \ldots , X_k)$ of $\X$ such that $X_1$ is the set of leaves below the reticulation \niels{in $B$} and $\RR (\C) = \{x_1, \ldots, x_k\}$,
    \begin{enumerate}[label=(\roman*), start=2]
        \item $x_i x_{i+1} | x_{i+2} x_{i+3} \in \R (\N) $ for all $i \in \{2, \ldots, k - 3 \}$;
        \item $(x_1 x_i x_{i+1} x_{i+2}) \in \R (\N)$ for all $i \in \{2, \ldots, k-2\}$.
    \end{enumerate}
\end{definition}

\CorrB{Note that the representative quartet profiles are uniquely determined and do not depend on the specific representation of the circular or 2-circular order (i.e., up to reversal and cyclic permutation, and reversal, respectively), as all such representations yield the same set.} As an example, consider the network in \cref{fig:repr_quartets}. Then, the quartet profiles $x_1 x_7 | x_9 x_{11}$, $x_1 x_2 | x_5 x_7$ and $(x_4 x_5 x_7 x_1)$ are representative quartet profiles of types (i), (ii) and (iii), respectively. Whereas, for example, $x_1 x_2 | x_5 x_{11}$, $x_1 x_5 |x_2 x_3$ and $(x_4 x_5 x_7 x_2)$ are not representative.

\begin{figure}[htb]
    \centering
    \includegraphics{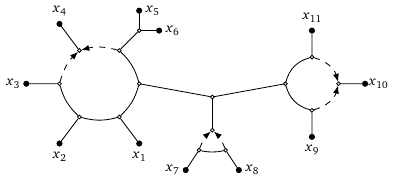}
    \caption{A semi-directed network on $\{x_1, \ldots, x_{11} \}$ used to illustrate the concept of a representative quartet profile.}
    \label{fig:repr_quartets}
\end{figure}

In general, there is exactly one representative quartet profile of type (i) for every nontrivial split of a network, and there are $(k-4) + (k-3) = 2k - 7$ representative quartet profiles of types (ii) and (iii) for every $k$-blob with $k\geq 4$. Hence, \CorrB{among all level-1 networks with a fixed number of leaves,} the size of $\R (\N)$ is \niels{largest} for a sunlet network---which has one large blob and no nontrivial splits. The following observation follows.
\begin{observation}\label{obs:representative_linear}
    If $\N$ is a semi-directed level-1 network on $n$ leaves, then $|\R (\N) | \leq 2n - 7$.
\end{observation}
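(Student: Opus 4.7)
My plan is a direct counting argument that combines the three types of representative profiles. First, I would partition the blobs of $\N$ into $m_1$ cycle blobs of length $k\geq 4$ and $m_2$ trivial blobs consisting of a single degree-$3$ vertex, and let $c$ denote the number of nontrivial cut-edges. Writing $k_i$ for the number of cut-edges incident to the $i$-th blob, the definition of $\R(\N)$ immediately gives
\[
|\R(\N)| \;=\; \sum_{i\,:\,k_i \geq 4}(2k_i - 7) \;+\; c,
\]
since each non-trivial blob of size $k$ contributes $(k-4)+(k-3)=2k-7$ profiles from types~(ii) and~(iii) combined, and each nontrivial cut-edge contributes exactly one profile of type~(i).

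Next, I would exploit the auxiliary \emph{blob-tree} obtained by contracting each blob of $\N$ to a single vertex. This is a tree on $m_1+m_2+n$ vertices, and among its $m_1+m_2+n-1$ edges, exactly $n$ are leaf-edges (one per taxon) while the remaining $c$ connect two blob-vertices. Consequently $c = m_1 + m_2 - 1$. A handshake count of blob-endpoints of cut-edges then gives $\sum_i k_i = n + 2c$, and since every trivial blob has $k_i = 3$,
\[
\sum_{i\,:\,k_i \geq 4} k_i \;=\; n + 2m_1 - m_2 - 2.
\]
Substituting this into the formula for $|\R(\N)|$, I expect the clean identity $|\R(\N)| = 2n - 5 - 2m_1 - m_2$, so the claim reduces to showing $2m_1 + m_2 \geq 2$.

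The final step is a brief case analysis assuming $n \geq 4$. If $m_1 = 0$, then $\N$ is a binary phylogenetic tree on $n \geq 4$ leaves and has at least $n-2 \geq 2$ internal degree-$3$ vertices, so $m_2 \geq 2$. If $m_1 = 1$ and $m_2 = 0$, then $\N$ is a sunlet and the bound is attained with equality. In all remaining cases ($m_1 \geq 2$, or $m_1 \geq 1$ with $m_2 \geq 1$), the inequality $2m_1 + m_2 \geq 2$ is immediate. The main obstacle I anticipate is pure bookkeeping: confirming that every trivial blob really has exactly three incident cut-edges (using that $\N$ is binary and that reticulation vertices always lie inside cycle blobs) and that the blob-tree is indeed a tree with the stated edge split. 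Once the identity $|\R(\N)| = 2n - 5 - 2m_1 - m_2$ is established, the bound and its tightness for sunlets both follow at once.
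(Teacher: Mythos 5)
Your argument is correct and, at its core, it is the same counting the paper performs---one type-(i) profile per nontrivial cut-edge and $(k-4)+(k-3)=2k-7$ profiles of types (ii)--(iii) per $k$-blob with $k\geq 4$---but you go further than the paper does: where the paper simply asserts that the sunlet is the extremal case, you derive the explicit identity $|\R(\N)| = 2n-5-2m_1-m_2$ from the blob tree and reduce the bound to $2m_1+m_2\geq 2$, which both proves the assertion and shows tightness for sunlets. Two bookkeeping points should be fixed. First, your partition of the blobs is not exhaustive for general semi-directed level-1 networks, which may contain $2$-cycles and $3$-cycles: a $3$-cycle is a $3$-blob and can simply be counted together with your trivial blobs (the identity becomes $|\R(\N)| \leq 2n-5-2m_1-m_2-m_3$ and the same case analysis goes through, since for $m_1=0$ the blob tree still has $n-2\geq 2$ internal blob-vertices), while a $2$-cycle is a $2$-blob, so by the paper's definition its incident cut-edges have no induced $2$-circular order and contribute no type-(i) profile; hence $2$-cycles can only lower the count (equivalently, suppressing them first does not decrease the quantity you bound). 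Second, your displayed equality should be a ``$\leq$'': distinct cut-edges or blobs could in principle produce the same quartet profile, and $\R(\N)$ is a set; since you only need an upper bound, this is harmless. With these adjustments (and the implicit assumption $n\geq 4$, which the statement needs anyway), your proof is complete and in fact supplies the extremal step that the paper leaves implicit.
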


\smallskip

In the remainder of this section we prove that the representative quartet profiles are sufficient to build up the complete set of quartet profiles, solely using dyadic inference rules. We formalize this by means of the following definition and stress that the triadic rule \ref{eq:R4*} is not being applied.

\begin{definition}[Dyadic closure]\label{def:dyadic_closure}
    Let $\Q$ be a set of quartet profiles. The set $\cl (\Q)$ of quartet profiles obtained from $\Q$ by exhaustively applying inference rules \ref{eq:R1}--\ref{eq:R4} is the \emph{dyadic closure} of $\Q$.
\end{definition}

\CorrB{Observe that the dyadic closure operation $\cl$ applied to a set of quartet profiles~$\Q$ indeed satisfies the standard closure axioms: 
\emph{extensivity} ($\Q \subseteq \cl(\Q)$), 
\emph{monotonicity} ($\Q_1 \subseteq \Q_2 \implies \cl(\Q_1) \subseteq \cl(\Q_2)$), and 
\emph{idempotency} ($\cl(\cl(\Q)) = \cl(\Q)$).}



\begin{lemma}\label{lem:dyadic_closure_sunlet}
    Let $\N$ be a sunlet network. If $\Q$ is a set of quartet profiles such that $\R (\N) \subseteq \Q \subseteq \Q (\N)$, then $\Q (\N) = \cl (\Q)$.
\end{lemma}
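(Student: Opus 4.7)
The plan is to label the leaves of the sunlet cyclically as $(x_1, x_2, \ldots, x_n)$ with $x_1$ below the reticulation, and then derive the trivial and nontrivial profiles of $\Q(\N)$ from $\R(\N)$ in two stages. A direct analysis of the two displayed trees of the sunlet---which are caterpillars that agree on $\{x_2, \ldots, x_n\}$ and differ only in where $x_1$ attaches---shows that $\Q(\N)$ consists exactly of the trivial profiles $x_a x_b | x_c x_d$ for all $2 \leq a < b < c < d \leq n$, together with the nontrivial profiles $(x_1 x_j x_m x_l)$ for all $2 \leq j < m < l \leq n$. Validity of the dyadic rules (\cref{prop:rule1_valid} and \cref{lem:rule3_valid,lem:rule4_valid}) gives $\cl(\Q) \subseteq \Q(\N)$ for free, so only the converse inclusion needs to be established.

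For the trivial profiles, I would observe that they coincide with the quartet set of the unrooted caterpillar $T$ on $\{x_2, \ldots, x_n\}$ whose leaf order matches the cyclic ordering. The type-(ii) representatives $x_i x_{i+1} | x_{i+2} x_{i+3} \in \R(\N)$ for $i = 2, \ldots, n-3$ are precisely the consecutive-split quartets of $T$. The classical dyadic inference lemma for caterpillars (implicit in~\cite{erdos1999logs1,bandelt1986reconstructing}; transferred to semi-directed networks via \cref{prop:rule1_valid}) guarantees that these consecutive quartets close to the full quartet set of $T$ under rules \ref{eq:R1} and \ref{eq:R2}, so every trivial profile of $\N$ lies in $\cl(\Q)$.

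For the nontrivial profiles, I would fix a target $(x_1 x_j x_m x_l)$ with $2 \leq j < m < l \leq n$. Since $m \in \{3, \ldots, n-1\}$, the type-(iii) representative $(x_1 x_{m-1} x_m x_{m+1}) \in \R(\N)$ exists. Applying rule \ref{eq:R4} with helper $x_{m-1} x_m | x_{m+1} x_l$, a trivial caterpillar quartet already in $\cl(\Q)$ by Step 1 (used when $l > m+1$), replaces $x_{m+1}$ by $x_l$ and yields $(x_1 x_{m-1} x_m x_l)$. Rewriting this as $(x_1 x_l x_m x_{m-1})$ and applying rule \ref{eq:R4} once more with helper $x_l x_m | x_{m-1} x_j$, a trivial quartet realising the caterpillar split $\{j, m-1\} | \{m, l\}$ (used when $j < m-1$), replaces $x_{m-1}$ by $x_j$ and produces $(x_1 x_l x_m x_j) = (x_1 x_j x_m x_l)$. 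The degenerate cases $l = m+1$ or $j = m-1$ are handled by skipping the corresponding substep. The main obstacle is the bookkeeping needed to verify that both helper profiles appear in $\cl(\Q)$ with the correct split direction after rewriting the circular orderings, but once Step 1 is in hand this reduces to checking that the relevant index pairs are separated on the caterpillar, which is routine.
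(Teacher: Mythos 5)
Your proof is correct, but it takes a genuinely different route for the nontrivial profiles. The trivial part coincides with the paper's: both identify the type-(ii) representatives with the representative (consecutive) quartets of the caterpillar $T = \N|_{\{x_2,\ldots,x_n\}}$ and invoke the dyadic-closure result of \cite{erdos1999logs1} for trees, transferred via \cref{prop:rule1_valid}. For the profiles containing $x_1$, however, the paper argues by induction on $n$: it restricts to the sunlet $\N' = \N|_{\{x_1,\ldots,x_{n-1}\}}$, notes $\R(\N') \subseteq \R(\N)$, and then uses one application of rule~\ref{eq:R4} per target to re-insert $x_n$, with a separate case when the target contains $x_{n-1}$ and $x_n$ simultaneously. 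You instead give a direct, non-inductive derivation: anchor at the type-(iii) representative $(x_1 x_{m-1} x_m x_{m+1})$ determined by the middle leaf $x_m$ of the target, and slide the two outer leaves to $x_l$ and $x_j$ via two applications of rule~\ref{eq:R4}, using trivial caterpillar quartets from the first step as helpers; I checked the index ranges, the split orientations of both helpers, the circular-order rewritings, and the degenerate cases $l=m+1$ and $j=m-1$, and all are sound. Your version buys a shorter, more explicit two-step derivation for each target profile with no induction bookkeeping, while the paper's inductive structure mirrors (and warms up for) the induction on nontrivial splits used in \cref{lem:spanning_quartet_profiles} and \cref{thm:dyadic_closure}, which is presumably why it was organized that way.
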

\begin{proof}
    Let $\X = \{x_1, \ldots, x_n\}$ be the leaves of~$\N$ such that $x_1$ is the leaf below the reticulation and such that $\N$ induces the circular ordering $(x_1, \ldots, x_n)$ of its leaves. See also \cref{fig:dyadic_closure_proof}a. 
    \CorrB{By monotonicity of the dyadic closure (i.e., $\Q_1 \subseteq \Q_2 \implies \cl(\Q_1) \subseteq \cl(\Q_2)$; see also the remark after \cref{def:dyadic_closure}), it suffices to consider the minimal case $\Q = \R(\N)$.}
    Throughout the proof, we denote by $\Q_t (\N)$ (resp. $\Q_c (\N)$) the trivial (resp. nontrivial) quartet profiles in $\Q( \N)$. We have already shown that the inference rules~\ref{eq:R1}-\ref{eq:R4} are valid for~$\N$ (\cref{prop:rule1_valid,lem:rule3_valid,lem:rule4_valid}), so $\cl(\Q) \subseteq \Q_t (\N) \cup \Q_c (\N)$. Thus, it suffices to show that $\cl(\Q) \supseteq \Q_t (\N) \cup \Q_c (\N)$. We prove the lemma by induction on $n$. The base cases with $n\leq 4$ follow trivially. 

    \CorrB{Let $u_i$ denote the non-leaf neighbor of $x_i$, $1 \leq i \leq n$. Note that $u_1$ is the reticulation. Let $T = \N|_{\{x_2, \ldots, x_n\}}$ and $\N' = \N|_{\{x_1, \ldots, x_{n-1}\}}$. All vertices except $x_1$ and $u_1$ lie on up-down paths between leaves of $T$, and all vertices except $x_n$ lie on such paths in $\N'$. Hence, by \cref{def:restriction}, $T$ is obtained by removing $x_1$ and $u_1$ and suppressing $u_2$ and $u_n$, while $\N'$ is obtained by removing $x_n$ and suppressing $u_n$. Thus, $T$ is a tree and $\N'$ is a sunlet.}
    
    \CorrB{Since $T$ is a tree, each quartet profile in $\R(T)$ is a trivial quartet profile. Moreover, each quartet profile of $N$ not involving $x_1$ is also trivial. Combining this with \cref{def:repr_quartet}---which implies that the leaf set of a representative quartet profile in $T$ corresponds to the leaf set of a representative quartet profile in $N$---, we obtain $\R (T ) \subseteq \R(\N)$.}
    It follows from \cite[Lem.\,2]{erdos1999logs1} that exhaustively applying inference rules \ref{eq:R1} and \ref{eq:R2} to $\R (T )$ results in $\Q(T)$: the set of all quartet profiles induced by the tree~$T$. Since $\Q(T) = \Q_t (\N)$, we get that $\cl(\Q) \supseteq \Q_t (\N)$.
    
    \CorrB{Since $N'$ is a sunlet and it induces the same circular ordering as $N$ except for the removal of the leaf~$x_n$, each quartet profile of $N'$ is a quartet profile of $N$. As before, in light of \cref{def:repr_quartet}, the leaf set of a representative quartet profile in $N'$ corresponds to the leaf set of a representative quartet profile in $N$. So, } 
    $\R(\N') \subseteq \R(\N) = \Q$. In particular, $\R(\N) = \R(\N') \cup \{ x_{n-3} x_{n-2} | x_{n-1} x_n, (x_1 x_{n-2} x_{n-1} x_n)\}$. Thus, by the induction hypothesis, $\cl (\Q) \supseteq \Q (\N')$. Since we also already have that $\cl(\Q) \supseteq \Q_t (\N)$, it remains to show that $q = (x_1 x_i x_j x_n) \in \cl (\Q)$ for all $1<i<j<n$. We consider two cases. (1): $j < n-1$. Then, both $(x_1 x_i x_j x_{n-1})$ (since it is in $\Q (\N')$) and $x_i x_j | x_{n-1} x_n$ (since it is in $\Q_t (\N)$) are in $\cl (\Q)$. Applying rule~\ref{eq:R4} to these two quartet profiles shows that $q \in \cl (\Q)$. (2): $j = n-1$. Then, we may also assume that $i<n-2$, otherwise $q$ is in $\Q$ and thus trivially in $\cl (\N)$. Both $(x_1 x_{n} x_{n-1} x_{n-2})$ (since it is in $\R(\N)$) and $x_n x_{n-1} | x_{n-2} x_i$ (since it is in $\Q_t (\N)$) are in $\cl (\Q)$, so applying rule~\ref{eq:R4} shows that $q \in \cl (\Q)$.
\end{proof}

To prove the main theorem in this section (\cref{thm:dyadic_closure}), the following lemma will first show a slightly weaker result about the dyadic closure. In particular, given a semi-directed level-1 network $\N$, we call a quartet profile $q \in \Q(\N)$ a \emph{spanning quartet profile} if there exists a circular order~$\C$ or 2-circular order~\CorrB{$\D$} induced by an edge or a blob of $\N$
such that $\LL (q) \subseteq \RR (\C)$ \CorrB{or $\LL (q) \subseteq \RR (\D)$, respectively}. That is, $q$ is a quartet profile containing only representative leaves of some induced (2-)circular order of $\N$. We denote the set of all spanning quartet profiles of~$\N$ by $\QS (\N)$ and note that $\R (\N) \subseteq \QS (\N)$.

\begin{figure}[htb]
    \centering
    \includegraphics{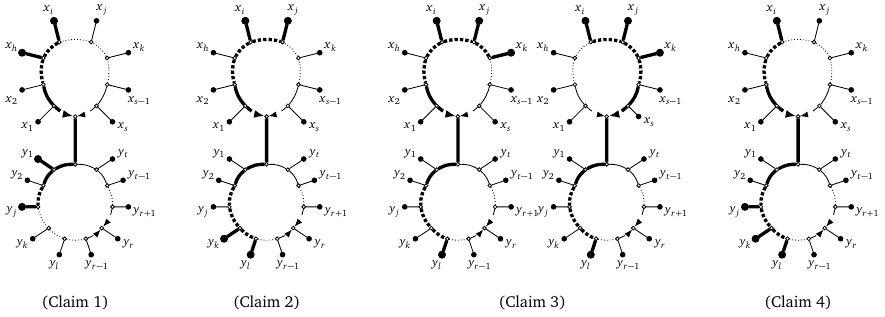}
    \caption{\CorrB{Illustration supporting the four claims in the proof of \cref{lem:spanning_quartet_profiles}. Each subfigure depicts one option of how the considered quartet profile is displayed in the shown network. The quartet profiles are (from left to right): \(x_h x_i | y_1 y_j\), \(x_i x_j | y_k y_l\), \((x_i x_j x_k y_l)\), and \(x_i y_j | y_k y_l\). The subfigure corresponding to Claim~3 shows both quartets \(x_i x_j | x_k y_l\) and \(x_j x_k | x_i y_l\) contributing to the nontrivial quartet profile \((x_i x_j x_k y_l)\).}}
    \label{fig:spanning_quartets_proof}
\end{figure}

\begin{lemma}\label{lem:spanning_quartet_profiles}
    Let $\N$ be a semi-directed level-1 network. If $\Q$ is a set of quartet profiles such that $\R (\N) \subseteq \Q \subseteq \Q (\N)$, then $\QS (\N) \subseteq \cl (\Q)$. 
\end{lemma}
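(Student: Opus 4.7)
The plan is to fix an arbitrary (2-)circular order $\C$ of $\N$ and show that every quartet profile of $\N$ whose four leaves lie in $\RR(\C)$ belongs to $\cl(\Q)$; since $\QS(\N)$ is the union of these sets over all such $\C$, this yields the lemma. When $\C = (X_1, \ldots, X_k)$ is the circular order of a $k$-blob $B$ with $k \geq 4$ and $\RR(\C) = \{x_1, \ldots, x_k\}$, the restriction $\N_B := \N|_{\RR(\C)}$ is a sunlet on these leaves, and its own representative quartet profiles are of types (ii) and (iii) only (sunlets have no nontrivial splits) and coincide exactly with the type (ii) and (iii) entries contributed by $B$ to $\R(\N)$. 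Hence $\R(\N_B) \subseteq \R(\N) \subseteq \Q$, and \cref{lem:dyadic_closure_sunlet} applied to $\N_B$ gives $\Q(\N_B) \subseteq \cl(\R(\N_B)) \subseteq \cl(\Q)$, which covers every spanning quartet profile whose leaves lie in $\RR(\C)$.

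It remains to handle $\C = \D = (X_1, \ldots, X_k) \mid (Y_1, \ldots, Y_l)$ induced by a nontrivial cut-edge $uv$ with blob endpoints $B_X$ and $B_Y$. By the blob case just handled (applied to the circular orders of $B_X$ and $B_Y$), every quartet profile of $\N$ on four leaves from $\{x_1, \ldots, x_k\}$ or from $\{y_1, \ldots, y_l\}$ is already in $\cl(\Q)$; moreover $x_1 x_k \mid y_1 y_l \in \R(\N) \subseteq \Q$. The remaining spanning quartet profiles on $\RR(\D)$ are \emph{mixed}: either a trivial 2-plus-2 split of the form $x_i x_j \mid y_m y_n$, or a 3-plus-1 profile of the form $x_i x_j \mid x_m y_n$ and its $Y$-symmetric. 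I propose to derive these by propagating the cut-edge representative outward using the dyadic rules. To obtain a 2-plus-2 profile, I would apply rule \ref{eq:R3} to the blob-internal nontrivial profile $(x_1 x_i x_k x_j)$ (obtained from the $B_X$ step) together with the trivial profile $x_1 x_k \mid y_1 y_l$, which shifts the $X$-side pair from $\{x_1, x_k\}$ to $\{x_i, x_j\}$; iterating this and performing the symmetric shift on the $Y$-side produces every $x_i x_j \mid y_m y_n$. For 3-plus-1 profiles, I would then combine a suitable 2-plus-2 profile from the previous step with a nontrivial three-$x$ blob profile via rule \ref{eq:R4}.

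The main obstacle is the combinatorial bookkeeping for the 3-plus-1 case: the correct derivation depends on the cyclic position of $y_n$ in $B_Y$ relative to the cut-edge, and on whether one of $x_i, x_j, x_m$ is the reticulation leaf of the sunlet associated with $B_X$, which plays a distinguished role in \cref{lem:dyadic_closure_sunlet}. I would address this by first deriving those 3-plus-1 profiles in which the single opposite-side leaf is a cut-edge-adjacent representative (e.g.\ $y_1$ or $y_l$), and then transferring to an arbitrary $y_n$ using rule \ref{eq:R2} in conjunction with the 2-plus-2 profiles already in hand. The validity of the inference rules \ref{eq:R3} and \ref{eq:R4} established in \cref{lem:rule3_valid,lem:rule4_valid} guarantees that each such derivation step remains within $\Q(\N)$, so the resulting chain of deductions indeed terminates inside $\cl(\Q)$.
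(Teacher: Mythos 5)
Your first step (restricting to the representative leaves of a $k$-blob, $k\geq 4$, and invoking \cref{lem:dyadic_closure_sunlet}) is essentially sound and is a reasonable non-inductive way to obtain the ``pure'' spanning profiles, up to one small wrinkle: $\N|_{\RR(\C)}$ need not literally be a sunlet, since a representative leaf can sit below the reticulation of a cycle on its path to the blob, in which case the restriction retains that cycle as a 2-cycle; this is harmless (suppress 2-cycles, which changes neither the quartet profiles nor the representatives), but it should be said. The genuine gap is in the cut-edge (mixed) case, which is where the lemma's real content lies. Your key derivation applies rule \ref{eq:R3} to $(x_1 x_i x_k x_j)$ together with $x_1 x_k \mid y_1 y_l$, but these two profiles share only the two leaves $x_1,x_k$, whereas every dyadic rule \ref{eq:R1}--\ref{eq:R4} requires its two input profiles to overlap in exactly three leaves; no rule in the closure can combine them, so the ``shift'' producing $x_i x_j \mid y_m y_n$ is not a licensed inference. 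Moreover, the input $(x_1 x_i x_k x_j)$ need not exist at all: a quartet profile on four $X$-side representatives is nontrivial only if one of the corresponding parts lies below the reticulation of $B_X$, and when the leaves below that reticulation are on the $Y$-side (the paper's ``type~1'' configuration) every $X$-only profile is trivial. (Its cyclic order would in any case be $(x_1 x_i x_j x_k)$, not $(x_1 x_i x_k x_j)$.) The same objections hit the 3-plus-1 step, which you explicitly leave as ``bookkeeping'': it presupposes the unproved 2-plus-2 step, assumes availability of a nontrivial three-$x$ profile, and does not handle the case split on the position of the reticulation of $B_X$.

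By contrast, the paper bridges the two sides of the cut-edge through mixed profiles that always share three leaves with something already derived: it proceeds by induction on the number of nontrivial splits, picks a lowest blob $B_Y$ (so the $Y$-side parts are single leaves), and in Claims~1--4 builds the mixed profiles by routing through the distinguished leaves $y_1$ and the $X$-representative $x_h$ closest to $B_Y$, with an explicit case distinction (type~1 versus type~2) on where the reticulation of $B_X$ points. If you want to keep your non-inductive framing, the bridging profiles you actually have available are the cut-edge representative $x_1 x_k \mid y_1 y_l$ and the blob-spanning profiles involving the single $Y$-representative of $B_X$'s circular order (and symmetrically the single $X$-representative of $B_Y$'s); showing that these suffice to reach every $x_i x_j \mid y_m y_n$ and every 3-plus-1 profile via three-leaf-overlap rules is precisely the work your proposal defers, so as written the proof is incomplete.
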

\begin{proof}
    As in the previous lemma, \CorrB{it suffices to consider the minimal case $\Q = \R (\N)$ due to the monotonicity of the dyadic closure. We may assume that the number of leaves~$n$ of $\N$ is at least~4, otherwise the result follows trivially.} Since the presence of 2-cycles does not influence the quartet profiles, we will also assume that $N$ has no 2-cycles. We prove the theorem by induction on the number of nontrivial splits.

    \CorrB{The base case, where $\N$ has no nontrivial splits, implies that $\N$ has a single internal blob to which all leaves are attached. Further, this internal blob cannot be a single vertex, since $\N$ is binary and $n\geq 4$. Thus, $\N$ is a sunlet and the result then follows from the previous lemma.}

    Now assume that $\N$ has at least one nontrivial split.
    Let $B_Y$ be a \emph{lowest} internal blob of $\N$, i.e., $B_Y$ has only one incident nontrivial cut-edge~$uv$ (with $v$ in~$B_Y$) such that $v$ is not the reticulation of $B_Y$. Let $B_X$ be the blob that contains $u$. See \cref{fig:dyadic_closure_proof}b and c for an illustration. For simplicity, we also assume that both $B_X$ and $B_Y$ are nontrivial, i.e., if they are 3-blobs they are not single vertices. Note that this transformation does not change any of the quartet profiles. Let $\D = (X_1 , \ldots , X_s )| ( \{y_1\} , \ldots , \{y_t \} )$ be the induced 2-circular order of $uv$, where $Y = \{y_1, \ldots, y_t\}$, $X = X_1 \cup \ldots \cup X_s$ and $\RR( \D ) = \{x_1, \ldots, x_s, y_1, \ldots, y_t\}$ contains the representative leaves. Here, $y_r$ with $1 \leq r \leq t$ is the leaf below the reticulation of $B_Y$. Note that $\N$ can be of two different types (see again \cref{fig:dyadic_closure_proof}), depending on whether the leaves in $Y$ are below the reticulation of $B_X$ (type~1, shown in \cref{fig:dyadic_closure_proof}b), or whether some set $X_p$ (with $1 \leq p \leq s$) is below the reticulation of $B_X$ (type~2, shown in \cref{fig:dyadic_closure_proof}c). 
    

    Let $\N_X = \N|_{X \cup \{y_1\}}$ and let $\N_Y = \N|_{Y \cup \{x_{h}\}}$, where $x_h \in \{x_1, \ldots, x_s \}$\CorrB{, $1 \leq h \leq s$,} is the leaf topologically closest to the blob~$B_Y$, \CorrB{with ties resolved by choosing the leaf with the smallest index}. Then, $\R (\N) = \R (\N_X) \cup \R (\N_Y) \cup \{x_1 x_s | y_1 y_t \}$. \CorrB{Also observe that $\N_Y$ is a sunlet inducing a unique circular order~$\C$ of all singleton sets of leaves in $Y \cup \{x_h\}$. Then, each leaf of $N_Y$ is a representative leaf of the singleton set it is contained in. So, each quartet profile of $N_Y$ is a spanning quartet profile, i.e., $\QS(N_Y) = \Q (N_Y)$. Combining this with} the induction hypothesis, we then get that $\QS(\N_X) \cup \Q (\N_Y) \cup \{x_1 x_s | y_1 y_t \} \subseteq \cl (\Q) $. We now prove four claims, showing that certain quartet profiles in $\QS(\N)$ are in $\cl(\Q)$. Together with the fact that \CorrB{$\QS (\N_X)\subseteq \cl (\Q)$ and $\Q(\N_Y) \subseteq \cl (\Q)$}, Claims~2-4 then show that $\QS(\N) \subseteq \cl (\Q)$. \CorrB{For each claim, \cref{fig:spanning_quartets_proof} illustrates how the corresponding quartet profile is displayed in the network, in the case where the network is of type~1 and each set \(X_i\) is a singleton \(\{x_i\}\). For the figure, we assume \(h < i\) and \(r - 1 > l\), such that \(x_h\) appears to the left of the leaves \(x_i, x_j, x_k\), and \(y_r\) appears to the right of the leaves \(y_j, y_k, y_l\) in the circular ordering induced by the shown planar representation.}

    \textbf{Claim 1:} \emph{$q \in \cl (\Q)$ if $\LL (q) = \{x_h, x_i, y_1, y_j \}$ with $1 \leq i \leq s$ such that $i \neq h$ and $1 < j \leq t$.} Without loss of generality, we assume that $1 \leq h < i \leq s$. Note that every such $q$ has the form $x_h x_i | y_1 y_j$ \CorrB{(see \cref{fig:spanning_quartets_proof})}. Observe that $q_1 = x_1 x_s | y_1 y_t$ is in $\cl (\Q)$. If $p \in \{1, i, s\}$ then $q_2= (x_1 x_i x_s y_1)$ is in $\cl (\Q)$ (Case~1(i)). Otherwise, without loss of generality, we may assume that $q_2 = x_1 x_i | x_s y_1$ is in $\cl (\Q)$ (Case~1(ii)). In Case~1(i) (resp. Case~1(ii)), we can apply rule~\ref{eq:R3} (resp. rule~\ref{eq:R1}) to $q_1$ and $q_2$, to obtain that $q_3 = x_1 x_i | y_1 y_t$ is in $\cl (\Q)$. (Note that this follows trivially if $i=s$, since then $q_3 = q_1$). If $h = 1$, the proof is done, so we assume that $h > 1$. An analogous proof then shows that $q_4 = x_1 x_h | y_1 y_t$ is in $\cl (\Q)$. Applying rule~\ref{eq:R2} to $q_3$ and $q_4$ then shows that $q$ is in $\cl (\Q)$.

    \textbf{Claim 2:} \emph{$q \in \cl (\Q)$ if $\LL (q) = \{x_i, x_j, y_k, y_l \}$ with $1 \leq i < j \leq s$ and $1 \leq k < l \leq t$.} Again, every such $q$ has the form $x_i x_j | y_k y_l$ \CorrB{(see \cref{fig:spanning_quartets_proof})}. First assume that $k > 1$, the proof being analogous for $k = 1$ by changing $y_1$ to $y_2$ throughout. By the previous claim, we know that $q_1 = x_h x_i | y_1 y_k$ and $q_2 = x_h x_j | y_1 y_k$ are in $\cl (\Q)$. Applying rule~\ref{eq:R2} shows $q_3 = x_i x_j | y_1 y_k \in \cl (\Q)$. (This follows directly if $h \in \{i, j\}$.) Similarly, it follows that $q_4 = x_i x_j | y_1 y_l$ is in $\cl (\Q)$. Then, again applying rule~\ref{eq:R2} to $q_3$ and $q_4$ gives that $q = x_i x_j | y_k y_l$ is in $\cl (\Q)$.

    \textbf{Claim 3:} \emph{$q \in \cl (\Q)$ if $\LL (q) = \{x_i, x_j, x_k, y_l \}$ with $1 \leq i < j < k \leq s$ and $1 \leq l \leq t$.} If $\N$ is of type~1 or of type~2 with $p \in \{i,j,k\}$, then $q$ has the form $(x_i x_j x_k y_l)$ (Case~3(i), \CorrB{see \cref{fig:spanning_quartets_proof}}). If $\N$ is of type~2 with $p \not\in \{i,j,k\}$, then we may assume without loss of generality that $q$ has the form $x_i x_j | x_k y_l$ (Case~3(ii)). Next, observe that in Case~3(i) (resp. Case~3(ii)) $q_1=(x_i x_j x_k y_1)$ (resp. $q_1 = x_i x_j | x_k y_1$) is in $\Q (\N_X) \subseteq \cl (\Q)$ and that $q_2 = x_j x_k | y_1 y_l$ is in $\cl (\Q)$ by Claim~2. Then, applying rule~\ref{eq:R4} (resp. rule~\ref{eq:R2}) to $q_1$ and $q_2$ proves that $q \in \cl (\Q)$.

    \textbf{Claim 4:} \emph{$q \in \cl (\Q)$ if $\LL (q) = \{x_i, y_j, y_k, y_l \}$ with $1 \leq i  \leq s$ and $1 \leq j < k < l \leq t$.} If $r \in \{j, k, l\}$, then $q$ has the form $(x_i y_j y_k y_l)$ (Case~4(i)). Otherwise, we may assume without loss of generality that $q$ has the form $x_i y_j | y_k y_l$ (Case~4(ii), \CorrB{see \cref{fig:spanning_quartets_proof}}). In Case~4(i) (resp. Case~4(ii)), $q_1 = (x_h y_j y_k y_l)$ (resp. $q_1 = x_h y_j | y_k y_l$) is in $\Q(\N_Y) \subseteq \cl (\Q)$, so we may assume that $i \neq h$. Then, $q_2 = x_h x_i | y_k y_l$ is in $\cl (\Q)$ by Claim~2. Applying rule~\ref{eq:R4} (resp. rule~\ref{eq:R4}) to $q_1$ and $q_2$ proves that $q \in \cl (\Q)$.
\end{proof}

\begin{theorem}\label{thm:dyadic_closure}
    Let $\N$ be a semi-directed level-1 network. If $\Q$ is a set of quartet profiles such that $\R (\N) \subseteq \Q \subseteq \Q (\N)$, then $ \Q (\N) = \cl (\Q)$.
\end{theorem}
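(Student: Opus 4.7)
The plan is to prove the theorem by induction on the number of non-trivial splits of $\N$, using \cref{lem:spanning_quartet_profiles} as the main engine. In the base case, $\N$ has no non-trivial splits and is therefore a sunlet. Every leaf then belongs to $\RR(\C)$ for $\C$ the circular order of the unique internal blob, so every quartet profile of $\N$ is spanning and the inclusion $\Q(\N) \subseteq \cl(\Q)$ follows immediately from \cref{lem:spanning_quartet_profiles}; the reverse inclusion is given by the validity of the inference rules (\cref{prop:rule1_valid}, \cref{lem:rule3_valid} and \cref{lem:rule4_valid}).

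For the inductive step, I reuse the setup from the proof of \cref{lem:spanning_quartet_profiles}: pick a lowest internal blob $B_Y$, its unique incident non-trivial cut-edge $uv$, induced 2-circular order $\D = (X_1, \ldots, X_s) | (\{y_1\}, \ldots, \{y_t\})$, and the sub-networks $\N_X = \N|_{X \cup \{y_1\}}$ and $\N_Y = \N|_{Y \cup \{x_h\}}$. Both sub-networks have strictly fewer non-trivial splits than $\N$. The key sub-claim is that $\R(\N_X), \R(\N_Y) \subseteq \cl(\Q)$: every representative quartet profile of $\N_X$ either coincides with one of $\R(\N)$ (when its (2-)circular order is inherited unchanged from $\N$) or arises from the blob $B_X$ with the $Y$-part replaced by $\{y_1\}$, in which case its four leaves all belong to $\RR(\D)$ so that the quartet is in $\QS(\N) \subseteq \cl(\Q)$ by \cref{lem:spanning_quartet_profiles}. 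The inductive hypothesis then yields $\Q(\N_X) \cup \Q(\N_Y) \subseteq \cl(\Q)$, covering every quartet profile whose four leaves lie entirely in $X \cup \{y_1\}$ or in $Y \cup \{x_h\}$.

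What remains is to derive every quartet profile $q$ on four leaves $\{a,b,c,d\}$ with at least one leaf in $X \setminus \{x_h\}$ and at least one in $Y \setminus \{y_1\}$. Because $uv$ is a cut-edge, every such $q$ is trivial and separates its $X$-leaves from its $Y$-leaves. I perform a case analysis paralleling Claims 2--4 of the proof of \cref{lem:spanning_quartet_profiles}, distinguishing by the number of leaves on each side of $uv$. In each case, the derivation combines a spanning quartet on representative leaves (available via the spanning lemma) with an appropriate sub-network quartet in $\Q(\N_X)$ or $\Q(\N_Y)$, then applies rule \ref{eq:R2} to swap intermediate representatives ($y_1$ or $x_h$) for the required non-representative leaves of $\{a,b,c,d\}$.

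The main obstacle is the bookkeeping when several non-representative leaves sit on the same side of $uv$: a single application of rule \ref{eq:R2} does not perform the full swap, and one has to chain together several swap steps through carefully chosen intermediate quartets, ensuring that each intermediate lies in $\Q(\N_X) \cup \Q(\N_Y) \cup \QS(\N)$ and hence in $\cl(\Q)$ at the time it is used. Verifying that such a chain exists in every configuration, and identifying the correct sequence of rule applications (from \ref{eq:R1}--\ref{eq:R4}) in each case, is the technical crux of the argument beyond the spanning lemma.
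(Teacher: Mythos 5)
Your overall strategy (induction on the number of nontrivial splits, the same decomposition into $\N_X$ and $\N_Y$ at a lowest blob, and reducing to quartet profiles whose leaves cross the cut-edge $uv$) is exactly the paper's, but there is a genuine gap in the crossing case. You assert that every remaining profile $q$ with leaves on both sides of $uv$ ``is trivial and separates its $X$-leaves from its $Y$-leaves.'' This is false. If three leaves $x'_i, x'_j, x'_k$ lie in distinct sets $X_i, X_j, X_k$ around the blob $B_X$ and $Y$ hangs below the reticulation of $B_X$ (type~1 in the paper's setup), the profile on $\{x'_i, x'_j, x'_k, y_l\}$ is the \emph{nontrivial} profile $(x'_i x'_j x'_k y_l)$; similarly, when the leaf $y_r$ below the reticulation of $B_Y$ is among three $Y$-leaves, the profile on $\{x'_i, y_j, y_k, y_l\}$ is the nontrivial $(x'_i y_j y_k y_l)$. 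Only the two-two case $x'_i x'_j \mid y_k y_l$ is always a trivial split across $uv$. Consequently, your plan of chaining applications of rule~\ref{eq:R2} to ``swap'' $y_1$ or $x_h$ for non-representative leaves cannot produce these profiles, since \ref{eq:R2} only outputs trivial profiles; the paper's proof handles exactly these cases (its Claims~6 and~7, mirroring Claims~3 and~4 of \cref{lem:spanning_quartet_profiles}) by invoking rule~\ref{eq:R4} applied to a nontrivial profile already in $\Q(\N_X)$ or $\Q(\N_Y)$ together with a trivial profile obtained from the two-two case.

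Beyond this misclassification, the part you label ``the technical crux''---exhibiting, in every configuration, a chain of rule applications through intermediates lying in $\Q(\N_X)\cup\Q(\N_Y)\cup\QS(\N)$---is precisely the content of the paper's Claims~5--7 and is not carried out in your proposal, so even for the two-two case the argument is only sketched. To repair the proof, split the crossing quartets by how many leaves lie on each side of $uv$: handle $\{x'_i,x'_j,y_k,y_l\}$ first with rules~\ref{eq:R1} and~\ref{eq:R2} (replacing one representative leaf at a time via profiles of the form $x_i x'_i \mid x_m y_1$ from $\Q(\N_X)$, and treating the case $i=j$ separately), and then derive the three-one cases from a profile in $\Q(\N_X)$ or $\Q(\N_Y)$ together with a two-two profile, using \ref{eq:R4} when the target is nontrivial and \ref{eq:R2} when it is trivial. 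Your base case and the sub-claim $\R(\N_X),\R(\N_Y)\subseteq\cl(\Q)$ are fine (indeed $\R(\N_X)\cup\R(\N_Y)\subseteq\R(\N)$), so the induction scaffolding itself is sound.
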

\begin{proof}
    \CorrB{Since the inference rules~\ref{eq:R1}-\ref{eq:R4} are valid for~$\N$ (\cref{prop:rule1_valid,lem:rule3_valid,lem:rule4_valid})we have that $\Q(\N) \supseteq \cl (\Q)$. It remains to show that $\Q (\N) \subseteq \cl (\Q)$.}
    
    \CorrB{By \cref{lem:spanning_quartet_profiles}, we have $\QS(\N) \subseteq \cl(\Q)$. Using monotonicity and idempotency of the dyadic closure (see the paragraph after \cref{def:dyadic_closure}), this implies $\cl(\QS(\N)) \subseteq \cl(\cl(\Q)) = \cl(\Q)$. 
    Therefore, if we show $\Q(\N) \subseteq \cl(\QS(\N))$, the inclusion $\Q(\N) \subseteq \cl(\Q)$ will follow. So, it suffices to prove $\Q (\N) \subseteq \cl (\Q)$ for $\Q = \QS(\N)$.}
    We prove the theorem by induction on the number of nontrivial splits, with the base case again following from \cref{lem:dyadic_closure_sunlet}. 
    
    Consider the exact same setup as in the proof of \cref{lem:spanning_quartet_profiles}. Then, by the induction hypothesis, we have that $\Q (\N_X) \cup \Q(\N_Y) \subseteq \cl (\Q)$. We slightly extend the notation by using $x'_i$ to indicate an arbitrary leaf in the set $X_i$, as opposed to the notation $x_i$ that was used for the representative leaf of $X_i$. With slight abuse of notation, when considering leaves $x'_i$ and $x'_j$, we allow $i=j$, in which case the leaves are meant to be two distinct leaves in $X_i = X_j$. We now prove three claims for quartet profiles $q \in \Q(\N)$.

    \textbf{Claim 5:} \emph{$q \in \cl (\Q)$ if $\LL (q) = \{x'_i, x'_j, y_k, y_l \}$ with $1 \leq i, j  \leq s$ and $1 \leq k < l \leq t$.} \CorrB{Observe that the subnetwork $\N|_{\LL (q)}$ always induces the split $x'_i x'_j | y_k y_l$ (see also \cref{fig:dyadic_closure_proof}b and c). Hence, every displayed quartet of $\N$ induces the split $x'_i x'_j | y_k y_l$, implying that the quartet profile~$q$} has the form $x'_i x'_j | y_k y_l$. Since $x_i x_j | y_k y_l$ is in $\cl (\Q)$, we can assume without loss of generality that $x'_i \neq x_i$. We consider two cases.
    
    \emph{Case~(i):} $i \neq j$. Observe that $x_i x'_i | x_j y_1$ is in $\Q (\N_X) \subseteq \cl (\Q)$ and $x_i x_j | y_1 y_l$ is in $\cl (\Q)$. Then, by rule~\ref{eq:R1}, $q_1 = x_i x'_i | y_1 y_l$ and $q_2 = x'_i x_j | y_1 y_l$ are in $\cl (\Q)$. By symmetry, $q_3 = x'_j x_i | y_1 y_l$ is then also in $\cl (\Q)$. (Note that this follows trivially if $x_j = x'_j$.) Applying rule~\ref{eq:R2} to $q_1$ and $q_3$ gives that $q_4 = x'_i x'_j | y_1 y_l$ is in $\cl (\Q)$. If $k=1$, we are done. Otherwise, again using symmetry, we get that $q_5 = x'_i x'_j | y_1 y_k$ is in $\cl (\Q)$. Applying rule~\ref{eq:R2} to $q_4$ and $q_5$ gives that $q$ is in $\cl (\Q)$. 
    
    \emph{Case~(ii):} $i = j$. Then, $x'_i$ and $x'_j$ are two distinct leaves in $X_i = X_j$. Let $m \neq i, j$ be arbitrary, and note that $x_m$ always exists since $i=j$. Observe that $x_i x'_i | x_m y_1$ is in $\Q (\N_X) \subseteq \cl (\Q)$ and $x_i x_m | y_1 y_l$ is in $\cl (\Q)$. Then, by rule~\ref{eq:R1}, $q_1 = x_i x'_i | y_1 y_l$ and $q_2 = x'_i x_m | y_1 y_l$ are in $\cl (\Q)$. Similarly, by swapping $x'_i$ for $x'_j$, $q_3 = x'_j x_m | y_1 y_l$ is in $\cl (\Q)$ (which is trivial if $x'_j = x_i$). Applying rule~\ref{eq:R2} to $q_2$ and $q_3$ gives that $q_4 = x'_i x'_j | y_1 y_l$ is in $\cl (\Q)$. The rest of the proof is analogous to Case~(i).

    \textbf{Claim 6:} \emph{$q \in \cl (\Q)$ if $\LL (q) = \{x'_i, x'_j, x'_k, y_l \}$ with $1 \leq i, j, k  \leq s$ and $1 \leq l \leq t$.} The proof of this claim is analogous to the proof of Claim~3 in the proof of \cref{lem:spanning_quartet_profiles}, but instead relying on Claim~5 instead of Claim~2.

    \textbf{Claim 7:} \emph{$q \in \cl (\Q)$ if $\LL (q) = \{x'_i, y_j, y_k, y_l \}$ with $1 \leq i \leq s$ and $1 \leq j < k < l \leq t$.} Again, the proof of this claim is analogous to the proof of Claim~4 in the proof of \cref{lem:spanning_quartet_profiles}, but instead relying on Claim~5 instead of Claim~2.

    To conclude the proof of the theorem, observe that any quartet profile in $\Q (\N) \setminus (\Q (\N_X) \cup \Q(\N_Y) \cup \QS(\N))$ is shown to be in $\cl (\Q)$ by one of the three claims.
\end{proof}

\CorrB{Recall that \cref{prop:qp_reconstruct} stated that given a semi-directed level-1 network $\N$, the mixed graph~$\NN$ obtained from $\N$ by suppressing 2-cycles, 3-cycles and reticulation vertices in 4-cycles, is uniquely characterized by quartet profiles. Combining the previous theorem with \cref{prop:qp_reconstruct} and \cref{obs:representative_linear}, we obtain the following corollary, where we use the term \emph{quarnet} for a 4-leaf subnetwork of a network.}\footnote{In \cite{frohn2024reconstructing} it was shown that any algorithm that sequentially adds leaves to reconstruct a semi-directed level-1 network from quarnets must rely on $\Omega (n \log n)$ quarnets in the worst case. Although part~(b) of \cref{cor:linear_encoding} may appear to contradict this bound, it does not violate the result presented there. In particular, the bound from \cite{frohn2024reconstructing} is a lower bound on the \emph{query complexity}, meaning that for any algorithm relying on queries to an oracle (which returns quarnets of the network to be reconstructed), there exists an adversarial strategy that forces the algorithm to make $\Omega (n \log n)$ queries. In contrast, \cref{cor:linear_encoding}b instead says that once a network is known, a linear sized set of quarnets exists that completely characterizes it.}

\begin{corollary}\label{cor:linear_encoding}
Let $\N$ be a semi-directed level-1 network on $n \geq 4$ leaves. Then,
\begin{enumerate}[label={(\alph*)},noitemsep]
    \item there exists a set of $\bigO(n)$ quartet profiles of~$\N$ that uniquely characterizes $\NN$;
    \item there exists a set of $\bigO(n)$ quarnets of~$\N$ that uniquely characterizes~$\N$.
\end{enumerate}
\end{corollary}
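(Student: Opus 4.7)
The plan is to deduce part~(a) directly from the structural results already developed, and to obtain part~(b) by augmenting the quartet profiles of (a) with a bounded number of quarnets that recover the features of $\N$ suppressed when passing to $\NN$.

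For part~(a), combining \cref{thm:dyadic_closure} with \cref{prop:qp_reconstruct} shows that the set of representative quartet profiles $\R(\N)$ already determines $\NN$: the former yields $\cl(\R(\N)) = \Q(\N)$ and the latter says that $\Q(\N)$ uniquely characterizes $\NN$. The bound $|\R(\N)| \leq 2n - 7$ from \cref{obs:representative_linear} then immediately produces the required set of size $\bigO(n)$.

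For part~(b), I would start from the $\bigO(n)$ quartet profiles of (a), which already pin down $\NN$, and then add three families of quarnets in order to recover the suppressed features. For each internal $3$-blob of $\NN$, pick the three representative leaves on its incident branches together with one further leaf; the induced quarnet (\cref{def:restriction}) either exhibits the blob as a degree-$3$ tree vertex or as a $3$-cycle with its reticulation made explicit. For each $4$-blob of $\NN$, pick the four representative leaves on its four branches; the induced quarnet is a $4$-leaf sunlet whose reticulation edges locate the reticulation of the $4$-cycle, information that is not visible in the quartet profile alone. Finally, for each nontrivial split $A | B$ of $\N$, pick two leaves from each side; the induced quarnet reveals whether a $2$-cycle is sitting on the corresponding cut-edge. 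Because a level-1 network on $n$ leaves has $\bigO(n)$ blobs and $\bigO(n)$ cut-edges, the augmented collection remains of size $\bigO(n)$.

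The main obstacle will be verifying that each selected $4$-tuple of leaves genuinely preserves the targeted feature in the induced subnetwork, rather than collapsing it via suppression of degree-$2$ vertices. For $3$- and $4$-blobs, using representatives on distinct branches forces every vertex of the blob to lie on an up-down path between chosen leaves, so the blob survives intact and, via its oriented reticulation edges, pins down the reticulation. For $2$-cycles, one has to argue that both external edges of the $2$-cycle lie on some up-down path between the chosen leaves, which follows from taking two leaves on each side of the split induced by those external edges. These verifications reduce to case checks based on \cref{def:restriction} and the identifiability of the relevant features.
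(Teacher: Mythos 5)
Your part~(a) is exactly the paper's argument: \cref{thm:dyadic_closure} together with \cref{prop:qp_reconstruct} shows that $\R(\N)$ uniquely determines $\NN$, and \cref{obs:representative_linear} gives the linear bound. Your part~(b) also follows the paper's route (the quartet profiles pinning down $\NN$, one quarnet per 3-blob and per 4-blob to recover suppressed 3-cycles and the reticulation positions of 4-cycles, and quarnets across cut-edges to recover 2-cycles), and your check via \cref{def:restriction} that the representatives on distinct branches keep each blob, with its directed reticulation edges, intact in the induced quarnet is sound.

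The one genuine gap is your restriction to \emph{nontrivial} splits when recovering 2-cycles. A 2-cycle is a 2-blob whose two incident cut-edges induce the same split of $\X$, and this split can be trivial: for instance a 2-cycle sitting directly above a leaf $x$, whose incident cut-edges are the pendant edge of $x$ and one further cut-edge, both inducing $\{x\} \mid \X \setminus \{x\}$. Such a 2-cycle affects no quartet profile, and none of your 3-blob, 4-blob or nontrivial-split quarnets is forced to contain $x$, so in this case your collection does not uniquely characterize $\N$. The paper sidesteps this by taking one quarnet for \emph{each} split of the network, trivial splits included, and the same fix works for your construction: for a trivial split $\{x\} \mid \X \setminus \{x\}$ take a quarnet on $x$ and any three leaves from the other side (every up-down path leaving $x$ crosses both parallel edges of the 2-cycle, so it survives in the induced quarnet). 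The count remains $\bigO(n)$, since $\N$ has only $\bigO(n)$ cut-edges.
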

\begin{proof}
    \CorrB{By \cref{obs:representative_linear}, there are $\bigO(n)$ representative quartet profiles. By the previous theorem, the full set of quartet profiles can be obtained from these, and by \cref{prop:qp_reconstruct}, this set uniquely characterizes~$\NN$. This yields part~(a).} 

    \CorrB{For part~(b), we first note that by \cref{prop:displayed_subnet_commutative}, a quartet profile on leaf set $\{a,b,c,d\}$---that is, the set of 4-leaf subtrees induced by $\{a,b,c,d\}$ in all displayed trees of $\N$---can equivalently be obtained as the set of displayed trees of the 4-leaf subnetwork / quarnet $\N|_{\{a,b,c,d\}}$. Hence, by part~(a), there exists a set of $\bigO( n)$ quarnets that uniquely charachterizes~$\NN$.}

    \CorrB{To recover the full network structure~$\N$ (i.e., to determine the resolution of 2-cycles, 3-cycles, and reticulation in 4-cycles), it suffices to consider specific quarnets. For each 4-cycle, a single quarnet with one leaf on each of the four sides of the 4-cycle determines the placement of the reticulation. For each 3-blob in~$\NN$ inducing a circular order $(X_1,X_2,X_3)$, one quarnet with one leaf from each of two parts and two leaves from the remaining part suffices to determine its structure.}

    \CorrB{Finally, 2-cycles can only be placed on cut-edges of~$\N$, as otherwise the resulting network would not be level-1. To detect 2-cycles on a trivial cut-edge, it suffices to consider any quarnet containing the corresponding leaf. For a nontrivial cut-edge inducing a 2-circular order $(X_1,\ldots,X_k)\,|\,(Y_1,\ldots,Y_\ell)$, it suffices to consider a quarnet containing two leaves from (distinct parts of) the $X$-side and two leaves from (distinct parts of) the $Y$-side. The corresponding quarnet then contains a nontrivial cut-edge, and the presence or absence of a 2-cycle on this cut-edge in the quarnet directly reflects its presence or absence in the original network. Consequently, this determines whether a 2-cycle is present on the given cut-edge of~$\N$.}
    
    \CorrB{Altogether, since there are $\bigO (n)$ blobs and cut-edges in~$\N$, this yields a linear-sized set of quarnets that uniquely characterizes~$\N$.}
\end{proof}

\section{The Dyadic Closure Method}\label{sec:DCM}

In this section we present an algorithm that reconstructs (most of) a semi-directed level-1 network from a sequence alignment on $\mathcal{X}$. We follow a strategy similar as in \cite{erdos1999logs1} and as in the analysis of the `short quartet support method' in~\cite{warnow2001absolute}. 

As a first ingredient we present the \dcnc (\textsc{Dyadic Closure Network Construction}) method (see \cref{alg:dcnc}), which takes as input a set of quartet profiles, and outputs a level-1 network, \textsc{Inconsistent} (if two quartet profiles are incompatible), or \textsc{Insufficient} (if the quartet profiles do not provide enough information to construct a network). From \cref{thm:dyadic_closure} we can draw an immediate conclusion for the DCNC method, formulated in \cref{cor:DCNC1}. \CorrB{Recall from \cref{prop:qp_reconstruct} that we use the notation~$\NN$ to denote the mixed graph obtained from a semi-directed level-1 network~$\N$ by suppressing 2-cycles, 3-cycles and reticulation vertices in 4-cycles.}

\begin{algorithm}[!t]
\caption{\textsc{Dyadic Closure Network Construction} (\dcnc) method}\label{alg:dcnc}
\Input{a set of quartet profiles $\Q$ on $\X$}
\Output{a semi-directed level-1 network $\N'$ on $\X$ with 2-cycles, 3-cycles and reticulation vertices in 4-cycles suppressed, \textsc{Insufficient}, or \textsc{Inconsistent}}
compute the dyadic closure $\cl (\Q)$ analogous to \cite{erdos1999logs1} \\
\If{$\cl (\Q)$ contains more than one quartet profile for some set of four leaves}{\Return{\textsc{Inconsistent}}}
\ElseIf{$|\cl (\Q)| < {n \choose 4}$}{\Return{\textsc{Insufficient}}}
\Else{
$\N' \gets$ semi-directed level-1 network with 2-cycles, 3-cycles and reticulation vertices in 4-cycles suppressed, obtained by applying the algorithm from \cite{frohn2024reconstructing} to $\Q$ \\
\Return{$\N'$}
}
\end{algorithm}

\begin{corollary}\label{cor:DCNC1}
     Let $N$ be a semi-directed level-1 network on $\mathcal{X}$ and let $\Q$ be a set of quartet profiles such that $\Q_R(N)\subseteq\Q\subseteq\Q(N)$. Then, the \CorrB{network~$\N'$ returned by the DCNC method satisfies $\N' = \NN$}.
\end{corollary}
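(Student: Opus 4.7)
The plan is to show that each of the three possible outputs of the DCNC method resolves as intended under the given hypothesis, by chaining together \cref{thm:dyadic_closure} and \cref{prop:qp_reconstruct}. Since these two results already do most of the heavy lifting, the proof is essentially a verification that the algorithm's branch conditions trigger correctly.

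First I would invoke \cref{thm:dyadic_closure}: because $\Q$ satisfies $\R(N) \subseteq \Q \subseteq \Q(N)$, the dyadic closure computed in the first line of the algorithm satisfies $\cl(\Q) = \Q(N)$. The remainder of the proof then follows from structural properties of $\Q(N)$.

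Next I would rule out the two non-constructive branches. For the \textsc{Inconsistent} branch: by definition, $\Q(N)$ contains at most one quartet profile per 4-leaf subset of $\X$, so $\cl(\Q) = \Q(N)$ cannot contain two distinct profiles on the same four leaves. For the \textsc{Insufficient} branch: again by definition, $\Q(N)$ contains a quartet profile for \emph{every} 4-leaf subset of $\X$, so $|\cl(\Q)| = |\Q(N)| = \binom{n}{4}$, and the inequality test fails. Hence the algorithm proceeds to the final branch and invokes the reconstruction algorithm of~\cite{frohn2024reconstructing} on $\cl(\Q) = \Q(N)$. By \cref{prop:qp_reconstruct}, the output of this algorithm is precisely $\NN$, as required.

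There is no real obstacle here beyond bookkeeping: the corollary is essentially the algorithmic repackaging of \cref{thm:dyadic_closure} together with \cref{prop:qp_reconstruct}. The only subtlety worth mentioning is the need to confirm that the algorithm of~\cite{frohn2024reconstructing}, when applied inside DCNC, is actually called on the full set $\Q(N)$ rather than the (possibly smaller) input $\Q$; this is ensured by the fact that the closure has already been computed and the subsequent reconstruction step acts on the enlarged set.
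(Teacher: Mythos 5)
Your proposal is correct and matches the paper's (implicit) argument: the corollary is stated as an immediate consequence of \cref{thm:dyadic_closure}, which gives $\cl(\Q) = \Q(\N)$, so the \textsc{Inconsistent} and \textsc{Insufficient} branches cannot trigger and the final step returns $\NN$ by \cref{prop:qp_reconstruct}. Your closing remark is also on point: the pseudocode's last line literally says the algorithm of~\cite{frohn2024reconstructing} is applied to $\Q$, but the intended (and needed) input is the computed closure $\cl(\Q) = \Q(\N)$, exactly as you read it.
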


\begin{algorithm}[!t]
\caption{\textsc{Level-1 Quartet Profile Construction} (\lqpc) method}\label{alg:L1QPC}
\Input{a real number $w\geq 0$, an empirical distance matrix $d$ from a sequence alignment on $\mathcal{X}$}
\Output{a set of quartet profiles $\mathcal{Q}_w$ of width at most $w$}
\For{$\{x_i,x_j,x_k,x_l\}\subseteq\mathcal{X}$ with width at most $w$ and $d_{ij}+d_{kl}\leq d_{il}+d_{jk}\leq d_{ik}+d_{jl}$}{
$\Delta_1 \leftarrow (d_{il} + d_{jk}) - (d_{ij} + d_{kl})$\;
$\Delta_2 \leftarrow (d_{ik} + d_{j\ell})- (d_{il} + d_{jk}) $\;
\If{$\Delta_1=\Delta_2$}{Add trivial quartet profiles $x_ix_j|x_kx_l$, $x_ix_k|x_jx_l$ and $x_ix_l|x_jx_k$ to $\mathcal{Q}_w$.}
\ElseIf{$\Delta_1>\Delta_2$}{Add trivial quartet profile $x_ix_j|x_kx_l$ to $\mathcal{Q}_w$.}
\Else{Add quartet profile $(x_ix_jx_kx_l)$ to $\mathcal{Q}_w$.}
} 
\Return{$Q_w$}
\end{algorithm}

\begin{algorithm}[!b]
\caption{\textsc{Dyadic Closure} (\dc) method}\label{alg:dcm}
\Input{a sequence alignment $S$ on $\X$}
\Output{a semi-directed level-1 network $\N'$ on $\X$ with 2-cycles, 3-cycles and reticulation vertices in 4-cycles suppressed, or \textsc{Fail}}
\For{$x_i,x_j\in\mathcal{X}$}{ 
    Compute dissimilarity score $h^{ij}$ from $\mathcal{S}$\;
    $d_{ij}\leftarrow -\frac{1}{2}\log(1-2h^{ij})$\;
}
$\N' \gets$ \textsc{None} \\
\For(\tcp*[f]{perform a binary search}){dissimilarity score $w$ derived from $\mathcal{S}$}{
$\Q_w \gets \lqpc (w,d)$\\
\If{$\dcnc (\Q_w) = \textsc{Insufficient}$}{increase $w$}
\ElseIf{$\dcnc (\Q_w) = \textsc{Inconsistent}$}{decrease $w$}
\ElseIf{$\dcnc (\Q_w) = \NN_w$ for some semi-directed level-1 network $\N_w$}{
$\N' \gets \NN_w$\\
increase $w$
}
}
\If{$\N' = $ \textsc{None}}{
\Return{\textsc{Fail}}
}
\Return{$\N'$}
\end{algorithm}

\begin{definition}[width of a quartet]
    For a sequence alignment $\mathcal{S}$ on $\X$ the \emph{width} of any \niels{quartet tree or quartet profile on four leaves} $Q\subseteq\X$ is defined as the maximum dissimilarity score $h^{ij}$ of sequences $s^i$ and $s^j$ among all $x_i,x_j\in Q$. 
\end{definition}

Our second ingredient, the \lqpc (\textsc{Level-1 Quartet Profile Construction}) method (see \cref{alg:L1QPC}), 
\niels{infers a set of quartet profiles from a sequence alignment on $\mathcal{X}$}, where the quartets considered have bounded width.
Intuitively, increasing the input parameter $w$ for the \lqpc method increases the number of quartet profiles that are returned. This decreases the confidence that all inferred quartet profiles are correct, as there is a higher probability that at least one profile is incorrect. On the other hand, a small value of $w$ increases this confidence, but forces \lqpc to return only a small set of short-width quartet profiles which might be insufficient to guarantee the inference of the dyadic closure (see \cref{thm:dyadic_closure}).

Now, we are ready to state the main algorithm (see \cref{alg:dcm}). The \dc method does a binary search over the space of parameters $w$ to find a large enough set of quartet profiles such that the representative quartet profiles are also returned, but small enough such that the dyadic closure can be inferred correctly. The algorithm iteratively applies the previous algorithms \lqpc and \dcnc, making it a \niels{polynomial-time} 
algorithm:
\begin{proposition}\label{prop:DCNC2}
    The \dc method can be implemented to run in $\bigO (n^5 \cdot \log n)$ time.
\end{proposition}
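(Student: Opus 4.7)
The plan is to bound the running time of each of the three ingredients of \cref{alg:dcm} separately (empirical distance computation, the outer binary search, and the cost per iteration), and then multiply. First, computing all empirical distances $d_{ij}$ from a sequence alignment of length $k$ requires $\bigO(n^2 k)$ work for the Hamming distances plus $\bigO(n^2)$ evaluations of $-\tfrac{1}{2}\log(1-2h^{ij})$; since we may assume $k$ is polynomial in $n$ (otherwise the input itself would already dominate), this step is absorbed in what follows and in any case does not exceed $\bigO(n^5\log n)$.

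Next, I would show that the outer loop performs at most $\bigO(\log n)$ iterations. The candidate values of $w$ are the distinct dissimilarity scores $h^{ij}$, of which there are at most $\binom{n}{2}=\bigO(n^2)$. Sorting these once gives an ordered list, after which a standard binary search uses $\bigO(\log n)$ probes: whenever $\dcnc(\Q_w)$ reports \textsc{Insufficient} one moves to a larger $w$, whenever it reports \textsc{Inconsistent} one moves to a smaller $w$, and successful reconstructions (which are all of the same $\NN$ by \cref{cor:DCNC1} once $w$ is in the correct range) allow the search to continue upward to certify completion. Thus the outer loop contributes only a logarithmic factor.

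I would then bound the per-iteration cost, which I expect to be dominated by the closure inside $\dcnc$. The call $\lqpc(w,d)$ loops over $\bigO(n^4)$ four-element subsets of $\X$ and performs constant-time arithmetic on each via the four-point comparison, so $\lqpc$ runs in $\bigO(n^4)$ time and returns at most $\bigO(n^4)$ quartet profiles. Inside $\dcnc$, the reconstruction step from a consistent set of quartet profiles runs in $\bigO(n^2)$ time by \cref{prop:qp_reconstruct}, so the bottleneck is the dyadic closure. Following the scheme of \cite{erdos1999logs1}, I would maintain for each ordered pair $(a,b)$ a list of the quartet profiles currently known on leaf sets containing $\{a,b\}$, together with a queue of newly derived profiles. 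When a new profile on $\{a,b,c,d\}$ is produced, the dyadic rules \ref{eq:R1}--\ref{eq:R4} are triggered by pairing it with profiles sharing two leaves with it, and each such pairing yields at most a constant number of output profiles in constant time. Since there are $\bigO(n^4)$ possible quartet profiles in total and each is discovered at most once, and each discovered profile is combined with $\bigO(n)$ others sharing two fixed leaves, the closure runs in $\bigO(n^5)$ time; consistency checks (more than one profile on the same four-element set) are performed on insertion in $\bigO(1)$ amortized time using a lookup table indexed by the four-element leaf sets.

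Combining, each iteration of the binary search costs $\bigO(n^5)$ and there are $\bigO(\log n)$ iterations, giving the claimed $\bigO(n^5\log n)$ bound. The main obstacle I anticipate is verifying that the new nontrivial inference rules \ref{eq:R3} and \ref{eq:R4} can be integrated into the Erdős--Steel--Székely-style propagation scheme without breaking the $\bigO(n^5)$ closure bound; the key observation is that each of the four rules has at most two inputs, each input shares at least two leaves with every relevant partner, and each rule produces at most two outputs, so the same charging argument (charge each derivation to the pair of profiles producing it, indexed by a common pair of leaves) goes through with at most a constant-factor overhead.
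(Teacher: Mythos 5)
Your proposal takes essentially the same route as the paper: a binary search over the $\bigO(n^2)$ dissimilarity scores gives $\bigO(\log n)$ calls, and each call costs $\bigO(n^4)$ for \lqpc, $\bigO(n^5)$ for the dyadic closure implemented as in \cite[Thm.\,5]{erdos1999logs1} (unchanged because \ref{eq:R3} and \ref{eq:R4} are also dyadic), plus $\bigO(n^2)$ for the reconstruction step via \cref{prop:qp_reconstruct}. One small correction to your charging argument: the $\bigO(n)$ partners per newly derived profile come from the fact that the two inputs of each of \ref{eq:R1}--\ref{eq:R4} share \emph{three} leaves (their union has five leaves), so one indexes by shared triples; indexing by shared pairs alone would give $\bigO(n^2)$ partners and only an $\bigO(n^6)$ bound.
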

\begin{proof}
    Since the \dc method performs a binary search over $n^2$ distances, both \lqpc and \dcnc are called at most $\bigO (\log n )$ times. Clearly, \lqpc takes $\bigO (n^4)$ time, \niels{whereas} \dcnc can be implemented to run in $\bigO (n^5)$ time. Specifically, the implementation outlined in \cite[Thm.\,5]{erdos1999logs1} shows that the dyadic closure (using only rules \ref{eq:R1} and \ref{eq:R2}) can be computed in $\bigO (n^5)$ time. Since our additional rules \ref{eq:R3} and \ref{eq:R4} are also dyadic, this time complexity remains unchanged. Furthermore, the algorithm in \cite{frohn2024reconstructing} runs in $\bigO (n^2)$ time (\cref{prop:qp_reconstruct}), proving that $\dcnc$ indeed takes $\bigO (n^5)$ time.
\end{proof}

In the next section we show that the \dc method works as intended. This means, we will show that there exists a set of quartet profiles $\mathcal{Q}_w$ and a semi-directed level-1 network $\NN$ (with 2-cycles, 3-cycles and reticulation vertices in 4-cycles suppressed) returned by Algorithm~\ref{alg:dcm} such that
\begin{enumerate}
    \item $\mathcal{Q}_R(\N)\subseteq\mathcal{Q}_w$,
    \item $\mathcal{Q}_w\subseteq\mathcal{Q}(\N)$,
    \item the \lqpc method finds $Q_w$ during the binary search with high probability.
\end{enumerate}
The following technical result establishes condition~(1). We will extend this construction to show that conditions~(2) and~(3) hold with high probability given a sufficient number of sites $k$ in the sequence alignment~$\mathcal{S}$ on $\mathcal{X}$.

\begin{lemma}\label{lem::Qw}
    Let $\N=(V,E)$ be a semi-directed level-1 network on $\mathcal{X}$, $\mathcal{S}$ be a sequence alignment on $\mathcal{X}$ and let
    \begin{align*}
        w>\frac{1}{2}\left(1-\min_{e\in E}\theta(e)^{\text{depth}(\N)+lc(\N)}\right).
    \end{align*}
    Assume for all $x_i, x_j \in \X$ that if $\mathbb{P}\left[Y(x_i)\neq Y(x_j)\right]< w$ in each displayed tree of $\N$, then $h^{ij}<w$, where $h^{ij}$ is the dissimilarity score between the sequences in $\mathcal{S}$ associated to $x_i$ and $x_j$. Then there exists $\epsilon>0$ such that
    \begin{align*}
        \mathcal{Q}_R(\N)\subseteq\mathcal{Q}_{w-\epsilon}
    \end{align*}
    for the set of quartet profiles $\mathcal{Q}_{w-\epsilon }$ of width at most $w-\epsilon$.
\end{lemma}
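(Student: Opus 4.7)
The plan is to prove that every representative quartet profile in $\mathcal{Q}_R(\N)$ has width strictly less than $w$ under the sequence alignment $\mathcal{S}$, from which the existence of a suitable $\epsilon$ follows by a finite-maximum argument.

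The central structural step is to establish the following path-length claim: for every representative quartet profile $q \in \mathcal{Q}_R(\N)$, every pair of leaves $x_i, x_j \in \mathcal{L}(q)$, and every displayed tree $T$ of $\N$, the unique path in $T$ between $x_i$ and $x_j$ has topological length at most $\text{depth}(\N) + lc(\N)$. I would prove this by splitting into the three types in the definition of $\mathcal{R}(\N)$. For type~(i) quartets $x_1 x_k \mid y_1 y_l$ associated to a cut-edge $uv$ with induced 2-circular order, the path between two quartet leaves decomposes into short pendant portions connecting the representative leaves to the remnants of the blobs $B_X, B_Y$ in $T$, joined by an inside-blob portion that uses the cut-edge and (possibly) parts of the blob remnants. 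For types~(ii) and~(iii), all four leaves are representatives of a single $k$-blob $B$, which becomes a path after one reticulation edge is deleted in $T$. In every case the pendant portions are bounded by $\text{depth}(\N)$, since each representative leaf is by definition the topologically closest leaf to its blob, and the total length inside blob remnants is bounded by $lc(\N)$ thanks to the level-1 structure.

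With the path-length claim in hand, \cref{lem::treePath} gives the uniform bound
\begin{align*}
    \mathbb{P}\bigl[Y(x_i)\neq Y(x_j)\bigr] \;\leq\; \tfrac{1}{2}\Bigl(1 - \bigl(\min_{e\in E}\theta(e)\bigr)^{\text{depth}(\N) + lc(\N)}\Bigr)
\end{align*}
over all displayed trees of $\N$ and all relevant leaf pairs. The hypothesis on $w$ in the lemma is exactly the assertion that this right-hand side is strictly less than $w$, so the theoretical pairwise mutation probability is below $w$ in every displayed tree. The standing assumption of the lemma then yields $h^{ij} < w$ for each such pair. Since $\mathcal{Q}_R(\N)$ involves only finitely many leaf pairs, the maximum $M$ of the corresponding dissimilarity scores satisfies $M < w$, and setting $\epsilon = (w - M)/2 > 0$ guarantees that every representative quartet profile has width at most $M < w - \epsilon$, giving $\mathcal{Q}_R(\N) \subseteq \mathcal{Q}_{w - \epsilon}$ as required.

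The principal obstacle is the path-length claim. Making the informal ``pendant plus inside-blob'' split rigorous requires a careful use of both the representative-leaf definition and the somewhat unusual definition of $\text{depth}(\N)$ via shortest leaf-distances after removing an edge in a displayed tree, together with the observation that only a single reticulation edge per blob is suppressed when forming a displayed tree. A case analysis by quartet type, inspecting in each case how the four leaves sit relative to the one or two blobs involved, should complete the argument.
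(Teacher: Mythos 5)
Your proposal is correct and follows essentially the same route as the paper: a case analysis over the three types of representative quartet profiles, an (implicit) bound of $\text{depth}(\N)+lc(\N)$ on the relevant path lengths in every displayed tree, an application of \cref{lem::treePath} to get $\mathbb{P}[Y(x_i)\neq Y(x_j)]<w$, the lemma's hypothesis to convert this into $h^{ij}<w$, and a finite-maximum argument to extract $\epsilon$. The only difference is one of explicitness: the paper simply asserts the path-length bound inside the displayed inequality (using only $\text{depth}(\N)$ for type~(ii) quartets), whereas you correctly identify it as the step requiring the careful pendant-versus-blob decomposition.
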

\begin{proof}
We consider quartet profiles of types (i), (ii) and (iii). First, let $x_1x_k|y_1y_l\in\mathcal{Q}_R(N)$ and let $\T$ be any displayed tree of $N$. Then, the path $e_1e_2\dots e_m$ connecting taxa $z_i,z_j\in\{x_1,x_k,y_1,y_l\}$ in $\T$ yields
\begin{align*}
    \mathbb{P}\left[Y(z_i)\neq Y(z_j)\right]&=\frac{1}{2}\left(1-\prod_{i=1}^m(1-2p(e_i))\right)\leq\frac{1}{2}\left(1-\min_{e\in E}\theta(e)^{\text{depth}(\T)+lc(\N)}\right)\\
    &\leq\frac{1}{2}\left(1-\min_{e\in E}\theta(e)^{\text{depth}(\N)+lc(\N)}\right)<w.
\end{align*}
We assumed that $h^{ij}<w$ for $\mathbb{P}\left[Y(x_i)\neq Y(x_j)\right]<w$. Hence, there exists $\epsilon>0$ such that
\begin{align*}
    x_1x_k|y_1y_l\in\mathcal{Q}_{w-\epsilon}.
\end{align*}
Next, let $x_ix_{i+1}|x_{i+2}x_{i+3}\in\mathcal{Q}_R(N)$. Then, similar to our first case
\begin{align*}
    \mathbb{P}\left[Y(z_i)\neq Y(z_j)\right]\leq\frac{1}{2}\left(1-\min_{e\in E}\theta(e)^{\text{depth}(\N)}\right)<w,
\end{align*}
leading to the same conclusion. Finally, $(x_1x_ix_{i+1}x_{i+2})\in\mathcal{Q}_R(N)$ is analogous to our first case.
\end{proof}

\section{Performance of the DC method under the CFN model}\label{sec:DCM_analysis}
First, we provide sufficient conditions for the \lqpc method to return the correct set of quartet profiles $\mathcal{Q}_w$. To this end, for a semi-directed level-1 network $\N$ under the CFN model we define a function $s$ which maps the reticulation edges $e$ of $\N$ to the positive real numbers by equations
\begin{align}
    \frac{\gamma(e)}{1-\gamma(e)}\cdot\frac{\log\min_{e'\in E}\theta(e')}{\log\max_{e'\in E}\theta(e')}+\epsilon&=
    \begin{cases}
        \frac{2s(e)^2-7s(e)}{2s(e)^2+s(e)-1} &\text{if } \frac{\gamma(e)}{1-\gamma(e)}\cdot\frac{\log\min_{e'\in E}\theta(e')}{\log\max_{e'\in E}\theta(e')}<1\\
        \frac{4s(e)^3+4s(e)^2-s(e)-1}{12s(e)^2+4s(e)+1} &\text{otherwise}
    \end{cases}\label{eq::def:s}
\end{align}
for a sufficiently small $\epsilon>0$ such that $s(e)>1/2$. Observe that function $s$ is well-defined because the left hand side of \cref{eq::def:s} is a positive real number defined by our model parameters and the right hand side either converges to $1$ or diverges for increasing $s(e)$.
\begin{restatable}{lemma}{lemLFPMsuccess}
\label{prop::L1FPMsuccess}
Let $\N$ be a semi-directed level-1 network on leafset $\mathcal{Y}=\{x_i,x_j,x_k,x_l\}$ with \niels{$(x_i x_j x_k x_l)\in\mathcal{Q}_R(\N)$,} 
containing a cycle of length four and displayed trees $\T_1$ and $\T_2$. Let $e$ be the reticulation edge from $\N$ present in $\T_1$ and let $d$ be the matrix of dissimilarity scores on $\mathcal{Y}$ defined by a sequence alignment on $\mathcal{Y}$. 
Then, for the width $w$ of $\mathcal{Y}$, the L1QPC method infers \niels{$(x_i x_j x_k x_l)$} 
from $d$ if the inheritance probability~$\gamma(e)$ satisfies
 $D=\gamma(e) D(\T_1)+(1-\gamma(e))D(\T_2)$ such that
\begin{align}
    \max\left\{|d_r-D_r|\,:\,r\in\{ij,kl,ik,jl\}\right\}
    &<-\frac{\gamma(e)}{4s(e)}\log\max_{e'\in E}\theta(e'),\label{L1QPCsuccess::con1}\\
    \max\left\{|d_r-D_r|\,:\,r\in\{il,jk,ik,jl\}\right\}
    &<\frac{\gamma(e) -1}{4s(e)}\log\max_{e'\in E}\theta(e').\label{L1QPCsuccess::con2}
\end{align}
\end{restatable}

\CorrB{The proof of \cref{prop::L1FPMsuccess} is deferred to Appendix~\ref{sec:appendix} as it is rather technical.} Throughout this section we will make repeated use of the following Azuma-Hoeffding concentration inequalities~\cite{alon1992probabilistic}:
\begin{lemma}\label{AzumaHoeffding}
    Let $k$ be a positive integer, let $S$ be any set and let $Y=(Y_1,\dots,Y_k)$ be independent random variables taking values in $S$. For some $t\in\mathbb{R}$, let $L:S^k\to\mathbb{R}$ be a function such that $|L(a)-L(b)|\leq t$ whenever $a$ and $b$ differ in only one coordinate. Then, for $\lambda >0$,
    \begin{align*}
        \mathbb{P}\left[L(Y)-\mathbb{E}[L(Y)]\geq\lambda\right]\leq\exp\left(-\frac{\lambda^2}{2t^2k}\right)~~~~~~\text{and}~~~~~~\mathbb{P}\left[L(Y)-\mathbb{E}[L(Y)]\leq -\lambda\right]\leq\exp\left(-\frac{\lambda^2}{2t^2k}\right).
    \end{align*}
\end{lemma}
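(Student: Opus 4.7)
The plan is to prove this by the standard bounded-differences (Doob martingale) technique, essentially following the McDiarmid-style derivation of Azuma's inequality for functions of independent random variables.

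First I would introduce the Doob martingale associated with $L(Y)$: set $M_0 := \mathbb{E}[L(Y)]$ and, for $1 \leq i \leq k$, let $M_i := \mathbb{E}[L(Y) \mid Y_1,\ldots,Y_i]$. By the tower property, $(M_i)_{i=0}^{k}$ is a martingale with respect to the filtration generated by $(Y_1,\ldots,Y_k)$; since $M_k = L(Y)$, the telescoping identity gives $L(Y) - \mathbb{E}[L(Y)] = \sum_{i=1}^{k} D_i$ with $D_i := M_i - M_{i-1}$.

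Next I would bound each martingale difference $D_i$. Fix an assignment $(y_1,\ldots,y_{i-1})$ and define
\[
g(y) \;:=\; \mathbb{E}\bigl[L(Y) \,\big|\, Y_1 = y_1,\ldots, Y_{i-1} = y_{i-1},\, Y_i = y\bigr].
\]
Because the $Y_j$ are independent, the conditional distribution of $(Y_{i+1},\ldots,Y_k)$ given $(Y_1,\ldots,Y_i)$ does not depend on $Y_i$. Thus for any $y,y' \in S$ we may couple $g(y)$ and $g(y')$ by integrating against the same realization of $(Y_{i+1},\ldots,Y_k)$, and the hypothesis $|L(a)-L(b)| \leq t$ for inputs differing in one coordinate yields $|g(y)-g(y')| \leq t$. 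Consequently $M_i$, as a function of $Y_i$ with the earlier coordinates fixed, takes values in an interval of width at most $t$, so $D_i = g(Y_i) - \mathbb{E}[g(Y_i) \mid Y_1,\ldots,Y_{i-1}]$ lies in such an interval and has conditional mean zero.

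Finally I would apply the standard Azuma inequality to the martingale $(M_i)$: since each $D_i$ is bounded in an interval of width $t$ (and thus satisfies a Hoeffding-type conditional moment generating function bound $\mathbb{E}[e^{s D_i} \mid \mathcal{F}_{i-1}] \leq e^{s^2 t^2 / 8}$), the usual Chernoff argument applied to $\sum_i D_i$ gives
\[
\mathbb{P}\bigl[L(Y) - \mathbb{E}[L(Y)] \geq \lambda\bigr] \;\leq\; \exp\!\left(-\frac{\lambda^2}{2 t^2 k}\right).
\]
The lower-tail bound follows by applying the same argument to $-L$, which satisfies the identical bounded-difference hypothesis with the same constant $t$. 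The one step requiring care is the coupling argument used to control $|g(y)-g(y')|$: independence of the $Y_j$ is exactly what makes the conditional law of the later coordinates free of $Y_i$, and without it the differences $D_i$ need not have range bounded by $t$ even when $L$ does.
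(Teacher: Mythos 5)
Your proof is correct and is the standard bounded-differences argument: the paper itself does not prove this lemma but simply cites it (to Alon and Spencer's \emph{The Probabilistic Method}), and your Doob-martingale derivation is exactly the textbook route behind that citation. The only point worth noting is a harmless mismatch of constants: once you have the conditional moment generating function bound $\mathbb{E}[e^{s D_i} \mid \mathcal{F}_{i-1}] \leq e^{s^2 t^2/8}$, the Chernoff optimization actually yields the stronger McDiarmid exponent $\exp\left(-2\lambda^2/(t^2 k)\right)$, which of course implies the stated bound $\exp\left(-\lambda^2/(2t^2k)\right)$; alternatively, observing that your differences satisfy $|D_i|\leq t$ and invoking Azuma directly reproduces the paper's constant verbatim. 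Either way there is no gap, and your remark that independence is what makes the conditional law of the later coordinates free of $Y_i$ (so that $|g(y)-g(y')|\leq t$) correctly identifies the one step where care is needed.
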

To simplify the continued use of \cref{AzumaHoeffding} we consider $k$ to be the number of sites of a sequence alignment $\mathcal{S}$ on $\mathcal{X}$ under the CFN model. Let $Y=(Y_1,\dots,Y_k)$ be independent random variables taking values $Y_{ij}=s_i^j$ for $s^j\in\mathcal{S}$, $i\in\{1,\dots,k\}$ and, for all $i,j\in\{1,\dots,n\}$, define the function $L:\{0,1\}^{n\times k}\to\mathbb{R}$ by
\begin{align*}
    M\mapsto \frac{1}{k}d_H\left(\left(M_{hi}\right)_{h\in\{1,\dots,n\}},\left(M_{hj}\right)_{h\in\{1,\dots,n\}}\right),
\end{align*}
i.e., $|L(M)-L(N)|\leq 1/k$ for all $M,N\in\{0,1\}^{n\times k}$ which differ in only one coordinate. Then, we arrive at following special case of \cref{AzumaHoeffding}:
\begin{corollary}\label{cor::AzumaHoeffding}
    For $x_i,x_j\in\mathcal{X}$, the number of sites $k$ and $\lambda >0$,
    \begin{align*}
        \mathbb{P}\left[h^{ij}-\mathbb{E}[h^{ij}]\geq\lambda\right]\leq\exp\left(-\frac{\lambda^2k}{2}\right)~~~~~~\text{and}~~~~~~\mathbb{P}\left[h^{ij}-\mathbb{E}[h^{ij}]\leq -\lambda\right]\leq\exp\left(-\frac{\lambda^2k}{2}\right).
    \end{align*}
\end{corollary}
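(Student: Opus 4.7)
The plan is to apply \cref{AzumaHoeffding} directly to the function $L$ and random vector $Y=(Y_1,\dots,Y_k)$ already set up immediately before the corollary, since essentially all of the bookkeeping has been done and it remains only to verify the hypotheses and compute the resulting bound.

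First, I would note that under the CFN model the sites (columns) of the alignment $\mathcal{S}$ evolve independently, so the random variables $Y_1,\dots,Y_k$ (each taking values in $\{0,1\}^{n}$, the column of site states across taxa) are indeed independent, as required by \cref{AzumaHoeffding}. Next I would verify the bounded-differences condition with constant $t=1/k$: if two matrices $M,M'\in\{0,1\}^{n\times k}$ differ in only one coordinate (that is, in a single entry), then the Hamming distance $d_H\bigl((M_{hi})_h,(M_{hj})_h\bigr)$ and $d_H\bigl((M'_{hi})_h,(M'_{hj})_h\bigr)$ differ by at most~$1$, so after dividing by $k$ the values of $L$ differ by at most $1/k$. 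This is exactly the $t$ needed to make the Azuma-Hoeffding bound match the target.

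Then I would observe that by construction $L(Y)=h^{ij}$, so applying both tail bounds of \cref{AzumaHoeffding} yields
\begin{align*}
    \mathbb{P}\left[h^{ij}-\mathbb{E}[h^{ij}]\geq\lambda\right]\leq\exp\!\left(-\frac{\lambda^{2}}{2(1/k)^{2}k}\right)=\exp\!\left(-\frac{\lambda^{2}k}{2}\right),
\end{align*}
and symmetrically for the lower tail. This is exactly the statement of the corollary.

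Since the preceding paragraph in the excerpt has already set up $Y$ and $L$ and argued that the single-coordinate Lipschitz constant of $L$ is $1/k$, there is no genuine obstacle here; the only thing worth being careful about is that the ``coordinates'' in \cref{AzumaHoeffding} refer to whole columns of the alignment (i.e., the $Y_i$ collect all $n$ taxon states at site~$i$), which is what makes the independence assumption hold under CFN and what makes the per-coordinate change in $L$ at most $1/k$ rather than something larger.
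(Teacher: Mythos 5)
Your proposal is correct and matches the paper's own argument: the paper's proof consists precisely of the setup paragraph preceding the corollary, which introduces the independent site variables $Y_1,\dots,Y_k$, the normalized Hamming-distance function $L$ with single-coordinate change bounded by $1/k$, and then applies \cref{AzumaHoeffding} with $t=1/k$ to get $\exp\left(-\lambda^2 k/2\right)$ for both tails. Your added remark that the coordinates are whole alignment columns (ensuring independence under the CFN model and the $1/k$ bound) is exactly the intended reading, so there is nothing to correct.
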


Now, by bounding the compound changing probabilities $\mathbb{P}[Y(v_1)\neq Y(v_2)]$ (see \cref{lem::treePath}) and fixing the inheritance probabilities $\gamma(e)$ we can derive a lower bound on the probability that the \lqpc method returns the correct quartet profile.
\begin{proposition}\label{prop::successProb}
Let $\N=(V,E)$ be a semi-directed level-1 network and let $E_{\max}$ be an upper bound on the compound changing probability over all paths in $\N$ between taxa $\{x_i,x_j,x_k,x_l\}$. Suppose $k$ sites evolve under the CFN model on $\N$ and assume $N_{|(x_i,x_j,x_k,x_l)}$ contains a reticulation edge $e$. Then, the probability that \lqpc fails to return the correct quartet profile on taxa $\{x_i,x_j,x_k,x_l\}$ for a fixed inheritance probability $\gamma(e)\in(0,1)$ is at most
\begin{align*}
12\exp\left(- \frac{k}{8}\left(1-\max_{e'\in E}\theta(e')^{\min\{\gamma(e),1-\gamma(e)\}/(2s(e))}\right)^2(1-2E_{\max})^2\right).
\end{align*}
Equivalently, \lqpc returns the correct quartet profile with probability $1-o(1)$ if 
\begin{align}\label{gamma:bounds}
    \min\{\gamma(e),1-\gamma(e)\}\geq 2s(e)\cdot\frac{\log\left(1-\sqrt{M/k}\right)}{\log\max\limits_{e'\in E}\theta(e')}
\end{align}
where $M$ is a sufficiently large positive number increasing in $E_{\max}$ with $k>M$.

\end{proposition}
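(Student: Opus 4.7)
The strategy is to combine Lemma~\ref{prop::L1FPMsuccess} with the Azuma-Hoeffding bound from Corollary~\ref{cor::AzumaHoeffding}, transferring deviations from empirical dissimilarities $h^r$ to empirical distances $d_r$ via a Lipschitz estimate, and then simplifying the resulting threshold using the elementary inequality $1-e^{-x}\leq x$. Throughout, write $\mu = \min\{\gamma(e),1-\gamma(e)\}$ and $\theta_{\max} = \max_{e'\in E}\theta(e')$, and index the six relevant pairs by $r\in\mathcal{R}:=\{ij,kl,ik,jl,il,jk\}$.

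\emph{Step 1: reduction to a uniform deviation bound.} I would first observe that if $\gamma(e)$ on the right-hand sides of \eqref{L1QPCsuccess::con1} and \eqref{L1QPCsuccess::con2} is replaced by the smaller quantity $\mu$, both inequalities are simultaneously weakened. Hence a sufficient condition for Lemma~\ref{prop::L1FPMsuccess} to fire is that, for every $r\in\mathcal{R}$,
\begin{equation*}
    |d_r - D_r| < \delta := \frac{\mu}{4\,s(e)}\,\bigl|\log\theta_{\max}\bigr|.
\end{equation*}
Thus the event ``L1QPC outputs the wrong profile'' is contained in the union over $r\in\mathcal{R}$ of the events $\{|d_r-D_r|\geq\delta\}$.

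\emph{Step 2: concentration and the factor $12$.} The map $\varphi(x) = -\tfrac12\log(1-2x)$ has derivative $\varphi'(x)=1/(1-2x)$, so on the interval $[0,E_{\max}]$ it is Lipschitz with constant $1/(1-2E_{\max})$. Since $d_r = \varphi(h^r)$ and $\mathbb{E}[h^r]\leq E_{\max}<1/2$, identifying $D_r$ with the asymptotic value of $d_r$ yields $|d_r-D_r|\leq (1-2E_{\max})^{-1}\,|h^r-\mathbb{E}[h^r]|$. Consequently $|d_r-D_r|\geq \delta$ implies $|h^r-\mathbb{E}[h^r]|\geq \delta\,(1-2E_{\max})$, and Corollary~\ref{cor::AzumaHoeffding} together with a two-sided union bound gives
\begin{equation*}
    \mathbb{P}\bigl[|d_r-D_r|\geq\delta\bigr] \leq 2\exp\!\left(-\tfrac{k}{2}\,\delta^{2}(1-2E_{\max})^{2}\right).
\end{equation*}
Unioning over the six pairs produces the factor $12$ in the claim.

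\emph{Step 3: simplification of the threshold.} Applying the inequality $1-e^{-x}\leq x$ to $x = \tfrac{\mu}{2\,s(e)}|\log\theta_{\max}|$ gives
\begin{equation*}
    \delta \;=\; \tfrac{1}{2}\cdot\tfrac{\mu}{2s(e)}|\log\theta_{\max}| \;\geq\; \tfrac{1}{2}\bigl(1-\theta_{\max}^{\mu/(2s(e))}\bigr),
\end{equation*}
so that $\delta^{2}/2\geq \tfrac{1}{8}\bigl(1-\theta_{\max}^{\mu/(2s(e))}\bigr)^{2}$, and the bound from Step~2 is majorised by the expression in the proposition.

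\emph{Step 4: the ``equivalently'' clause.} Condition~\eqref{gamma:bounds} rearranges (using $\log\theta_{\max}<0$ and $\log(1-\sqrt{M/k})<0$) to $\theta_{\max}^{\mu/(2s(e))}\leq 1-\sqrt{M/k}$, hence $\bigl(1-\theta_{\max}^{\mu/(2s(e))}\bigr)^{2}\geq M/k$. Substituting into the bound of Step~3 yields failure probability at most $12\exp\!\bigl(-M(1-2E_{\max})^{2}/8\bigr)$, which is $o(1)$ provided $M$ grows large enough as a function of $E_{\max}$.

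The main obstacle I expect is the careful bookkeeping in Step~2: the quantity $D_r$ in Lemma~\ref{prop::L1FPMsuccess} is a mixture of LogDet distances whereas $d_r$ converges to the LogDet of a mixture of dissimilarities, and one must argue that the Lipschitz transfer of deviations nevertheless goes through with a constant depending only on $E_{\max}$. The remaining analytic steps (the single-variable inequality $1-e^{-x}\leq x$ and the algebraic rearrangement of~\eqref{gamma:bounds}) are routine once this transfer is in place.
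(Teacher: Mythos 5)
Your skeleton matches the paper's: invoke \cref{prop::L1FPMsuccess}, replace $\gamma(e)$ by $\mu=\min\{\gamma(e),1-\gamma(e)\}$ to get a single deviation threshold for all six pairs, union-bound the six events, apply \cref{cor::AzumaHoeffding}, and invert the resulting exponential to obtain \eqref{gamma:bounds}. Steps~1, 3 and 4 are fine. The genuine gap is Step~2. The inequality $|d_r-D_r|\leq (1-2E_{\max})^{-1}|h^r-\mathbb{E}[h^r]|$ is false as stated, for two separate reasons. First, $D_r=\gamma D(\T_1)_r+(1-\gamma)D(\T_2)_r$ is a convex combination of LogDet distances, while $d_r=\varphi(h^r)$ concentrates around $\varphi(\gamma\mathbb{E}_1[h^r]+(1-\gamma)\mathbb{E}_2[h^r])$; by concavity of $\log$ these differ by a Jensen gap, so even on the event $h^r=\mathbb{E}[h^r]$ the left-hand side is strictly positive while the right-hand side is zero. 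You flag this as ``the main obstacle'' but do not resolve it, and it is precisely the step the paper spends its effort on: it writes $|d_r-D_r|$ exactly as $\tfrac12\bigl|\log\bigl((1-2h^r)/((1-2\mathbb{E}_1[h^r])^{\gamma}(1-2\mathbb{E}_2[h^r])^{1-\gamma})\bigr)\bigr|$ and replaces the weighted geometric mean by $1-2\max\{\mathbb{E}_1[h^r],\mathbb{E}_2[h^r]\}$ before applying concentration. Second, the Lipschitz constant $1/(1-2E_{\max})$ is only valid on $[0,E_{\max}]$, whereas the deviation event you are bounding consists exactly of outcomes where $h^r$ exceeds its mean, possibly approaching $1/2$, where $\varphi'$ is unbounded; so the containment $\{|d_r-D_r|\geq\delta\}\subseteq\{|h^r-\mathbb{E}[h^r]|\geq\delta(1-2E_{\max})\}$ fails in the upper tail. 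This is why the paper's proof splits its Case~2 into sub-cases 2.1/2.2 (conditioning on whether $h^r-\mathbb{E}[h^r]$ exceeds $\epsilon(1-2\mathbb{E}[h^r])$), which is also where its per-pair factor $2$, and hence the $12$, comes from; your factor $12$ instead comes from a two-sided bound, but that bookkeeping is incidental.

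The good news is that the target of Step~2 can be reached without any Lipschitz estimate: since $\varphi$ is monotone, the event $d_r-D_r\geq\delta$ (with $D_r$ handled via the max device above) is \emph{exactly} the event $h^r-\mathbb{E}[h^r]\geq\tfrac12(1-2\mathbb{E}[h^r])\bigl(1-e^{-2\delta}\bigr)$, and with $2\delta=\tfrac{\mu}{2s(e)}|\log\theta_{\max}|$ this threshold is $\tfrac12(1-2\mathbb{E}[h^r])\bigl(1-\theta_{\max}^{\mu/(2s(e))}\bigr)\geq\tfrac12(1-2E_{\max})\bigl(1-\theta_{\max}^{\mu/(2s(e))}\bigr)$, which plugged into \cref{cor::AzumaHoeffding} gives the exponent $\tfrac{k}{8}\bigl(1-\theta_{\max}^{\mu/(2s(e))}\bigr)^2(1-2E_{\max})^2$ of the proposition directly and makes your Step~3 slack ($1-e^{-x}\leq x$) unnecessary. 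As written, however, your intermediate bound $2\exp\bigl(-\tfrac{k}{2}\delta^2(1-2E_{\max})^2\bigr)$ is stronger than what the argument justifies, so the proof is incomplete at its central step even though the final expression happens to be correct.
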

\begin{proof}
The quartet profile \niels{$(x_i x_j x_k x_l)\in\Q_R(\N_{|\{x_i,x_j,x_k,x_l\}})$}
is not trivial because $N_{|(x_i,x_j,x_k,x_l)}$ contains a reticulation edge $e$. Let $\gamma$ and $s$ denote $\gamma(e)$ and $s(e)$, respectively. 
Then, we know from \cref{prop::L1FPMsuccess} that L1QPC returns \niels{$(x_i x_j x_k x_l)$} 
for $\gamma\in(0,1)$ such that for displayed trees $\T_1$ and $\T_2$ of $\N_{|\{x_i,x_j,x_k,x_l\}}$ and $D=\gamma D(\T_1)+(1-\gamma)D(\T_2)$,
\begin{align*}
    \max\left\{|d_r-D_r|\,:\,r\in\{ij,kl,ik,jl\}\right\}
    &<-\frac{\gamma}{4s}\log\max_{e\in E}\theta(e),\\
    \max\left\{|d_r-D_r|\,:\,r\in\{il,jk,ik,jl\}\right\}
    &<\frac{\gamma -1}{4s}\log\max_{e\in E}\theta(e).
\end{align*}
Then, the probability that \lqpc fails to return the correct quartet profile on taxa $\{x_i,x_j,x_k,x_l\}$ for a fixed inheritance probability $\gamma\in(0,1)$ and $m=\max_{e\in E}\theta(e)$ is upper bounded by
\begin{align*}
&\sum_{r\in\{ij,kl\}}\mathbb{P}\left[|d_r-D_r|\geq-\frac{\gamma}{4s}\log m\right]\\
&~~+\sum_{r\in\{il,jk\}}\mathbb{P}\left[|d_r-D_r|\geq\frac{\gamma -1}{4s}\log m\right]\\
&~~+\sum_{r\in\{ik,jl\}}\mathbb{P}\left[|d_r-D_r|\geq -\frac{\max\{\gamma,1-\gamma\}}{4s}\log m\right].
\end{align*}
Recall that $d_r=-\log(1-2h^r)/2$, $D(\T_i)_r=-\log(1-2\mathbb{E}_i[h^r])/2$, $i\in\{1,2\}$, where $\mathbb{E}_i[h^r]$ denotes the expectation $\mathbb{E}[h^r]$ in $\T_i$. Then, for $r\in\{ij,kl,il,jk,ik,jl\}$,
\begin{align*}
|d_r-D_r|&=\frac{1}{2}\left|\log(1-2h^r)-\gamma\log(1-2\mathbb{E}_1[h^r])-(1-\gamma)\log(1-2\mathbb{E}_2[h^r])\right|\\
&=\frac{1}{2}\left|\log\frac{1-2h^r}{(1-2\mathbb{E}_1[h^r])^{\gamma}(1-2\mathbb{E}_2[h^r])^{1-\gamma}}\right|.
\end{align*}
\begin{description}
\item[Case 1:] $d_r\leq D_r$. Without loss of generality $r\in\{ij,kl\}$. Then, we want to calculate
\begin{align}\label{ineq::eventCase1}
    \mathbb{P}\left[\frac{1}{2}\log\frac{1-2h^r}{(1-2\mathbb{E}_1[h^r])^\gamma(1-2\mathbb{E}_2[h^r])^{1-\gamma}}\geq-\frac{\gamma}{4s}\log m\right]
\end{align}
which is equivalent to
\begin{align}\label{ineq::evenCase1.1}
    \mathbb{P}\left[1-2h^r\geq(1-2\mathbb{E}_1[h^r])^\gamma(1-2\mathbb{E}_2[h^r])^{1-\gamma}m^{-\gamma/(2s)}\right].
\end{align}
Let $\mathbb{E}[h^r]=\max\{\mathbb{E}_1[h^r],\mathbb{E}_2[h^r]\}$. Then, probability~\eqref{ineq::evenCase1.1} is upper bounded by
\begin{align*}
    \mathbb{P}\left[1-2h^r\geq(1-2\mathbb{E}[h^r])\,m^{-\gamma/(2s)}\right].
\end{align*}

After shifting the inequality by $(2\mathbb{E}[h^r]-1)$, we arrive at
\begin{align*}
\mathbb{P}\left[h^r-\mathbb{E}[h^r]\leq -\frac{1}{2}\left(m^{-\gamma/(2s)}-1\right)(1-2\mathbb{E}[h^r])\right].
\end{align*}
Since $m^{\gamma/(2s)}\in(0,1)$, we can apply Corollary~\ref{cor::AzumaHoeffding} to upper bound \eqref{ineq::eventCase1} by
\begin{align*}
\exp\left(-\frac{k}{8}\left(m^{-\gamma/(2s)}-1\right)^2(1-2\mathbb{E}[h^r])^2\right).
\end{align*}
\item[Case 2:] $d_r> D_r$. Without loss of generality $r\in\{ij,kl\}$. Then, we want to calculate
\begin{align}\label{ineq::eventCase2}
    \mathbb{P}\left[\frac{1}{2}\log\frac{(1-2\mathbb{E}_1[h^r])^{\gamma}(1-2\mathbb{E}_2[h^r])^{1-\gamma}}{1-2h^r}\geq-\frac{\gamma}{4s}\log m\right].
\end{align}
Analogous to Case~1, for $\mathbb{E}[h^r]=\max\{\mathbb{E}_1[h^r],\mathbb{E}_2[h^r]\}$, we can upper bound~\eqref{ineq::eventCase2} by
\begin{align}
    \mathbb{P}\left[h^r-\mathbb{E}[h^r]\geq\frac{1}{2}\left(m^{-\gamma/(2s)}-1\right)(1-2h^r)\right].\label{ineq::eventCase2:2}
\end{align}
Let $ \epsilon = \left(1-m^{\gamma/(2s)}\right)/2\in(0,1)$.
\begin{description}
\item[Case 2.1:] $h^r-\mathbb{E}[h^r]\leq\epsilon(1-2\mathbb{E}[h^r])$. Then,
\begin{align*}
(1-2h^r)\geq(1-2\mathbb{E}[h^r])-2\epsilon(1-2\mathbb{E}[h^r])=(1-2\mathbb{E}[h^r])(1-2\epsilon).
\end{align*}
Hence, \eqref{ineq::eventCase2:2} is upper bounded by
\begin{align*}
\mathbb{P}\left[h^r-\mathbb{E}[h^r]\geq \frac{1}{2}\left(m^{-\gamma/(2s)}-1\right)(1-2\epsilon)(1-2\mathbb{E}[h^r])\right].
\end{align*}
Thus, Corollary~\ref{cor::AzumaHoeffding} yields \CorrB{the following upper bound} for~\eqref{ineq::eventCase2}:
\begin{align*}
&\exp\left(- \frac{k}{8}\left(m^{-\gamma/(2s)}-1\right)^2(1-2\epsilon)^2(1-2\mathbb{E}[h^r])^2\right)\\
&=\exp\left(- \frac{k}{8}\left(m^{-\gamma/(2s)}-1\right)^2m^{\gamma/s}(1-2\mathbb{E}[h^r])^2\right).
\end{align*}
\item[Case 2.2:] $h^{ij}-\mathbb{E}[h^r]>\epsilon(1-2\mathbb{E}[h^r])$. \CorrB{Then, we apply Corollary~\ref{cor::AzumaHoeffding}} directly to obtain
\begin{align*}
\mathbb{P}\left[h^r-\mathbb{E}[h^r]\geq\epsilon(1-2\mathbb{E}[h^r])\right]\leq\exp\left(-\frac{k}{2}\epsilon^2(1-2\mathbb{E}[h^r])^2\right).
\end{align*}
\end{description}
Hence, in \CorrB{total,} \eqref{ineq::eventCase2} is upper bounded by
\begin{align*}
2\exp\left(- \frac{k}{8}\left(1-m^{\gamma/(2s)}\right)^2(1-2\mathbb{E}[h^r])^2\right).
\end{align*}
\end{description}
We conclude that the probability of failure for any fixed choice of $\gamma$ is upper bounded by
\begin{align*}
    &\sum_{r\in\{ij,kl\}}2\exp\left(- \frac{k}{8}\left(1-m^{\gamma/(2s)}\right)^2(1-2\mathbb{E}[h^r])^2\right)\\
    &+\sum_{r\in\{il,jk\}}2\exp\left(- \frac{k}{8}\left(1-m^{(1-\gamma)/(2s)}\right)^2(1-2\mathbb{E}[h^r])^2\right)\\
    &+\sum_{r\in\{il,jk\}}2\exp\left(- \frac{k}{8}\left(1-m^{\max\{\gamma,1-\gamma\}/(2s)}\right)^2(1-2\mathbb{E}[h^r])^2\right)\\
    &\leq 12\exp\left(- \frac{k}{8}\left(1-m^{\min\{\gamma,1-\gamma\}/(2s)}\right)^2(1-2E_{\max})^2\right).
\end{align*}
If we upper bound this probability by a small $\epsilon>0$, then we obtain bounds on $\gamma$:
\begin{align*}
    12\exp\left(- \frac{k}{8}\left(1-m^{\min\{\gamma,1-\gamma\}/(2s)}\right)^2(1-2E_{\max})^2\right)&\leq\epsilon\\
    \Leftrightarrow~~~m^{\min\{\gamma,1-\gamma\}/(2s)}&\leq 1-\sqrt{\frac{8}{k}\ln\left(\frac{12}{\epsilon}\right)(1-2E_{\max})^{-2}}\\
    \text{or}~~~m^{\min\{\gamma,1-\gamma\}/(2s)}&\geq 1+\sqrt{\frac{8}{k}\ln\left(\frac{12}{\epsilon}\right)(1-2E_{\max})^{-2}}.
\end{align*}
Then, for the constant $M:=8\ln(12/\epsilon)(1-2\mathbb{E}_{\max})^{-2}>1$ and $k>M$, we use the fact that $\log m<0$ to arrive at
\begin{align*}
    \min\{\gamma,1-\gamma\}\geq \frac{2s}{\log m}\log\left(1-\sqrt{M/k}\right)~~~\text{or}~~~
    \min\{\gamma,1-\gamma\}\leq \frac{2s}{\log m}\log\left(1+\sqrt{M/k}\right)<0.
\end{align*}
Thus, our claim follows.
\end{proof}

Finally, we arrive at our main result. \niels{Here, for our choice of function $s$ at the beginning of the section, we denote $s_{\max}$ as the maximum value in the image of $s$. This value increases the closer inheritance probabilities are to binary states. Hence, bounds on the inheritance probabilities yield bounds on $s_{\max}$, too.}

\begin{theorem}\label{thm::CFN}
Suppose $k$ sites evolve under the CFN model on a phylogenetic level-1 network $\N$. Then, the \dc method returns $\NN$ with probability $1-o(1)$, if
\begin{align}\label{kN::LB}
k>\frac{c\cdot\log n}{\left(1-\max\limits_{e\in E}\theta(e)^{\min\{\gamma_{\min},1-\gamma_{\max}\}/s_{\max}}\right)^2\left(\min\limits_{e\in E}\theta(e)\right)^{2(\text{depth}(\N)+lc(\N))}}
\end{align}
where $c$ is a fixed constant, $\gamma_{\min}$ and $\gamma_{\max}$ are the minimum and maximum
attained inheritance probability, respectively. Equivalently, the \dc method returns $\NN$ with probability $1-o(1)$, if inequality~\eqref{gamma:bounds} holds for all inheritance probabilities.
\end{theorem}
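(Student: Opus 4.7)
The plan is to verify that for the stated $k$ the three conditions listed after \cref{prop:DCNC2} all hold simultaneously with probability $1-o(1)$, so that the binary search inside \dc encounters a width $w$ at which $\Q_R(\N) \subseteq \Q_w \subseteq \Q(\N)$; then \cref{cor:DCNC1} and \cref{prop:qp_reconstruct} immediately deliver $\NN$. The whole argument is driven by two union bounds (one per pair of taxa, one per 4-subset of taxa), each combined with an existing concentration inequality.

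First, fix $w$ slightly larger than $\tfrac{1}{2}\bigl(1-\min_{e\in E}\theta(e)^{\text{depth}(\N)+lc(\N)}\bigr)$, which is the threshold appearing in \cref{lem::Qw}. To apply that lemma I need the ``anti-concentration'' implication: whenever $\mathbb{P}[Y(x_i)\neq Y(x_j)]<w$ on every displayed tree of $\N$ (and hence, by averaging through inheritance probabilities, also in expectation on the mixture), the empirical dissimilarity $h^{ij}$ also lies below $w$. By \cref{cor::AzumaHoeffding}, a single pair violates this with probability at most $\exp(-k\lambda^2/2)$, where $\lambda$ is the gap between $w$ and the true expectation; the gap is bounded below in terms of $\min_e\theta(e)^{\text{depth}(\N)+lc(\N)}$. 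A union bound over the at most $\binom{n}{2}$ pairs makes the probability of condition~(1) failing at most $n^2\exp(-k\lambda^2/2)$, which is $o(1)$ once $k$ is logarithmic in $n$ divided by $\lambda^2$.

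Second, \cref{prop::successProb} bounds the probability that \lqpc misclassifies a single quartet of bounded width by a constant times $\exp\!\bigl(-\tfrac{k}{8}(1-\max_e\theta(e)^{\min\{\gamma,1-\gamma\}/(2s)})^2(1-2E_{\max})^2\bigr)$ (trivial quartet profiles are handled even more easily because no cycle is present, so the four-point condition error analysis is the standard one from~\cite{erdos1999logs1}). For any quartet contained in $\Q_w$, every pairwise path in $\N$ between its four taxa has compound mutation probability at most $\tfrac{1}{2}(1-\min_e\theta(e)^{\text{depth}(\N)+lc(\N)})$, so $1-2E_{\max}\geq \min_e\theta(e)^{\text{depth}(\N)+lc(\N)}$; this squared produces the exponent $2(\text{depth}(\N)+lc(\N))$ in the theorem. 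Using the worst-case $\gamma_{\min}$, $1-\gamma_{\max}$ and $s_{\max}$ to upper bound the per-quartet failure probability, and then unioning over the at most $\binom{n}{4}\leq n^4$ quartets, the probability that \lqpc misclassifies any quartet is at most $n^4\cdot 12\exp(-k\alpha^2\beta^2/8)$, where $\alpha=1-\max_e\theta(e)^{\min\{\gamma_{\min},1-\gamma_{\max}\}/s_{\max}}$ and $\beta=\min_e\theta(e)^{\text{depth}(\N)+lc(\N)}$ (constants such as the hidden factor $2$ in the exponent of $\theta$ get absorbed into the constant $c$). Solving $n^4\exp(-ck\alpha^2\beta^2)=o(1)$ for $k$ yields exactly bound~\eqref{kN::LB}. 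Consequently, with probability $1-o(1)$, every quartet profile produced by \lqpc at this width is correct, giving condition~(2), and in particular condition~(3) is realised because one of the $\bigO(\log n)$ binary-search steps queries such a width (or a sufficiently close one, up to the additive $\epsilon$ in \cref{lem::Qw}).

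The main technical point is aligning the two union bounds so that the same $k$ serves both: the pairwise bound is coarser (it needs only that $h^{ij}$ not overshoot a macroscopic threshold), while the quartet-level bound through \cref{prop::successProb} is the binding one because it must resolve the extra comparison $\Delta_1$ vs.\ $\Delta_2$ inside the four-point test, and this resolution scales with the inheritance probability via $s_{\max}$. Hence the worst-case choice $\min\{\gamma_{\min},1-\gamma_{\max}\}/s_{\max}$ dominates the denominator in \eqref{kN::LB}. The equivalent formulation in the last sentence of the theorem simply substitutes the explicit inheritance-probability bound~\eqref{gamma:bounds} from \cref{prop::successProb}; once the pair- and quartet-level union bounds both succeed, \cref{cor:DCNC1} reconstructs $\NN$ and the proof is complete.
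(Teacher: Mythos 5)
Your overall route is the same as the paper's: establish condition~(1) via \cref{lem::Qw} plus a pairwise Azuma--Hoeffding union bound, establish condition~(2) via \cref{prop::successProb} (with trivial profiles delegated to \cite{erdos1999logs1}) and a union bound over the $\bigO(n^4)$ quartets, and then invoke \cref{cor:DCNC1}; the final algebra producing \eqref{kN::LB} also matches. However, there is one step that does not hold as written: you claim that for any quartet contained in $\Q_w$, the compound changing probabilities between its taxa are at most $\tfrac{1}{2}\bigl(1-\min_{e}\theta(e)^{\text{depth}(\N)+lc(\N)}\bigr)$, so that $1-2E_{\max}\geq\min_{e}\theta(e)^{\text{depth}(\N)+lc(\N)}$ when applying \cref{prop::successProb}. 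Membership in $\Q_w$ only constrains the \emph{empirical} dissimilarity scores $h^{ij}$; it says nothing directly about the expectations. The concentration event you actually set up is one-sided (``true probability $<w$ implies $h^{ij}<w$''), which is what \cref{lem::Qw} needs, but the quartet-level bound needs the \emph{reverse} implication: that no pair with $\mathbb{E}[h^{ij}]$ close to $1/2$ sneaks below the empirical width threshold. Without it, $E_{\max}$ for a selected quartet can be arbitrarily close to $1/2$, the factor $(1-2E_{\max})^2$ in \cref{prop::successProb} degenerates, and your per-quartet failure bound (and hence the union bound) collapses.

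The paper closes exactly this hole with a buffer construction: it picks $\tau<\tfrac{1}{6}\min_e\theta(e)^{\text{depth}(\N)+lc(\N)}$, defines nested thresholds $\tfrac12-\tau$, $\tfrac12-2\tau$, $\tfrac12-3\tau$ (the sets $S_{2\tau}$, $Z_\tau$ and $\Q_{\max}$), and runs the Azuma--Hoeffding argument in \emph{both} directions --- pairs with $\mathbb{E}[h^{ij}]<\tfrac12-3\tau$ land in $S_{2\tau}$, while pairs with $\mathbb{E}[h^{ij}]\geq\tfrac12-\tau$ are excluded from $S_{2\tau}$ --- so that, on this event, every quartet actually passed to \lqpc has all expected dissimilarities below $\tfrac12-\tau$, i.e.\ $1-2E_{\max}>2\tau$, which is then substituted into the bound of \cref{prop::successProb} and yields the $(\min_e\theta(e))^{2(\text{depth}(\N)+lc(\N))}$ factor. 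Your sketch can be repaired with the same two-sided concentration plus threshold-separation device (the tools you already use suffice), but as stated the key inequality $1-2E_{\max}\geq\min_e\theta(e)^{\text{depth}(\N)+lc(\N)}$ is unjustified, so the proposal has a genuine gap at precisely the point where the exponent in \eqref{kN::LB} is produced.
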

\begin{proof}
We show that there exists a set of quartet profiles $\Q_w$ of width $w$ such that $\Q_R(\N)\subseteq\Q_w\subseteq\Q(\N)$ and the \lqpc method returns $\Q_w$ with high probability. To this end, let 
\begin{align*}
    \tau&\in\left(0,\frac{1}{6}\min_{e\in E}\theta(e)^{\text{depth}(\N)+lc(\N)}\right),\\
    S_{\tau}&=\left\{\{x_i,x_j\}\,:\,h^{ij}<\frac{1}{2}-\tau\right\},\\
    Z_{\tau}&=\left\{\{x_i,x_j,x_k,x_l\}\subseteq\X\,:\,h^r<\frac{1}{2}-2\tau~\forall\,r\in\{ij,ik,il,jk,jl,kl\}\right\}.
\end{align*}
Observe that $\{x_i,x_j\}\in S_{2\tau}$ for $x_i,x_j\in Q$, $Q\in Z_{\tau}$, and
\begin{align*}
    &\mathbb{P}\left[\exists\,w\in\{h^{ij}\,:\,1\leq i,j\leq n\}\,:\,\Q_R(\N)\subseteq \Q_w\subseteq\mathcal{Q}(N)\right]\\
    &\geq\mathbb{P}\left[\Q_R(\N)\subseteq Z_{\tau}\text{ and L1QPC returns the correct quartet profile for $Q\in Z_{\tau}$}\right].
\end{align*}
Consider $x_i,x_j\in\mathcal{X}$.
\begin{description}
    \item[Case 1:] $\mathbb{E}[h^{ij}]\geq 1/2-\tau$. Then, from \cref{cor::AzumaHoeffding} it follows
    \begin{align*}
        \mathbb{P}\left[\{x_i,x_j\}\in S_{2\tau}\right]=\mathbb{P}\left[h^{ij}<\frac{1}{2}-2\tau\right]\leq\mathbb{P}\left[h^{ij}-\mathbb{E}[h^{ij}]\leq -\tau\right]\leq \exp\left(-\tau^2k/2\right).
    \end{align*}
    \item[Case 2:] $\mathbb{E}[h^{ij}]<1/2-3\tau$. Then,  from \cref{cor::AzumaHoeffding} it follows
    \begin{align*}
        \mathbb{P}\left[\{x_i,x_j\}\notin S_{2\tau}\right]=\mathbb{P}\left[h^{ij}\geq\frac{1}{2}-2\tau\right]\leq\mathbb{P}\left[h^{ij}-\mathbb{E}[h^{ij}]\geq \tau\right]\leq\exp\left(-\tau^2k/2\right).
    \end{align*}
\end{description}
Hence, we conclude
\begin{align}
    &\mathbb{P}\left[\{x_i,x_j\}\in S_{2\tau}~\forall\,x_i,x_j\in\mathcal{X},\,\mathbb{E}[h^{ij}]<\frac{1}{2}-3\tau\right]+\mathbb{P}\left[\{x_i,x_j\}\notin S_{2\tau}~\forall\,x_i,x_j\in\mathcal{X},\,\mathbb{E}[h^{ij}]\geq\frac{1}{2}-\tau\right]\nonumber\\
    &\geq 1-2\cdot\binom{n}{2}\exp\left(-\tau^2k/2\right).\label{Cbound}
\end{align}
As shorthand notation we write $\mathbb{P}[A]$ for the sum of probabilities in~\eqref{Cbound}. Assume that $\{x_i,x_j\}\in S_{2\tau}$ for all $x_i,x_j\in\X$ with $E[h^{ij}]<1/2-3\tau$ in some displayed tree of $N$. This assumption is well founded due to bound~\eqref{Cbound} and implies
\begin{align*}
    h^{ij},E_{\min}^{ij}&<\frac{1}{2}\left(1-\min_{e\in E}\theta(e)^{\text{depth}(\N)+lc(\N)}\right)~&~&\forall\,x_i,x_j\in\X.
\end{align*}
Then, $\Q_R(\N)\subseteq Z_{\tau}$ by \cref{lem::Qw}. Therefore,
\begin{align*}
    &\mathbb{P}\left[\Q_R(\N)\subseteq Z_{\tau}\text{ and L1QPC returns the correct quartet profile for $Q\in Z_{\tau}$}\right]\\
    &\geq\mathbb{P}\left[\Q_R(\N)\subseteq Z_{\tau},\text{ L1QPC returns the correct quartet profile for $Q\in Z_{\tau}$ and event $A$ occurs}\right]\\
    &\geq\mathbb{P}\left[\text{L1QPC returns the correct quartet profile for $Q\in Z_{\tau}$ and event $A$ occurs}\right].
\end{align*}
Now, let $\Q_{\max}$ denote the set of quartets $Q\subseteq\X$ such that $\max\{E[h^{ij}]\,:\,x_i,x_j\in Q\}<1/2-\tau$. 
Without loss of generality \lqpc returns a non-trivial quartet profile $(x_ix_jx_kx_l)$ for $Q$. For a reticulation edge $e$ in $N_{|\{x_i,x_j,x_k,x_l\}}$, \niels{$(x_i x_j x_k x_l)$} 
is the correct quartet profile with probability at least
\begin{align}\label{ineq::nonTrivialsuccess}
    1-12\exp\left(- \frac{k}{8}\left(1-\max_{e'\in E}\theta(e')^{\min\{\gamma(e),1-\gamma(e)\}/(2s(e))}\right)^2\tau^2\right)
\end{align}
by \cref{prop::successProb}. Otherwise, $(x_ix_jx_kx_l)$ is trivial and Theorem~8 in~\cite{erdos1999logs1} applies which equates to setting $\gamma(e)=s(e)=1$ in formula~\eqref{ineq::nonTrivialsuccess}. Therefore, we can apply bound~\eqref{Cbound} to infer that $Z_r\subseteq\Q_{\max}$ with high probability. Hence,
\begin{align*}
&\mathbb{P}\left[\text{L1QPC returns the correct quartet profile for $Q\in Z_{\tau}$ and event $A$ occurs}\right]\\
    &\geq\mathbb{P}\left[\text{L1QPC returns the correct quartet profile for $Q\in Q_{\max}$ and event $A$ occurs}\right]\\
    &\geq 1-12\exp\left(- \frac{k}{8}\left(1-\max_{e\in E}\theta(e)^{\min\{\gamma_{\min},1-\gamma_{\max}\}/s_{\max}}\right)^2\tau^2\right)-(n^2-n)\exp\left(-\frac{k}{2}\tau^2\right)
\end{align*}
where the second inequality follows from the Bonferonni inequality. Thus, there exists a constant $c$ such that
\begin{align*}
    &\mathbb{P}\left[\exists\,w\in\{h^{ij}\,:\,1\leq i,j\leq n\}\,:\,Q_R(N)\subseteq Q_w\subseteq\mathcal{Q}(N)\right]\\
    &\geq 1-n^c\exp\left(-\frac{k}{8}\left(1-\max_{e\in E}\theta(e)^{\min\{\gamma_{\min},1-\gamma_{\max}\}/s_{\max}}\right)^2\tau^2\right).
\end{align*}
\end{proof}

\begin{corollary}\label{cor:CFN}
    Suppose $k$ sites evolve under the CFN model on a phylogenetic level-1 network $\N$ with fixed inheritance probabilities. Then, for depth$(N)=\mathcal{O}(\log\log n)$, a constant upper bound on the number of edges in each cycle in $N$ and constant bounds on the mutation probabilities, the \dc method returns $\NN$ with probability~$1-o(1)$, if $k=\,\text{polylog}(n)$.
\end{corollary}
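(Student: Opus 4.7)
The plan is to apply \cref{thm::CFN} directly and verify that, under the given hypotheses, every factor in bound~\eqref{kN::LB} is either a constant or at most polylogarithmic in $n$, so that the sufficient lower bound on $k$ collapses to $\text{polylog}(n)$.

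The first and only mildly nontrivial step is showing that $s_{\max} = \Theta(1)$. In the defining equation~\eqref{eq::def:s}, the left-hand side $\tfrac{\gamma(e)}{1-\gamma(e)} \cdot \tfrac{\log \min_{e'} \theta(e')}{\log \max_{e'} \theta(e')} + \epsilon$ is a constant under the hypotheses: the inheritance probabilities are fixed, so $\gamma(e)$ is bounded away from $0$ and $1$, and the ratio of logarithms is a positive constant since the mutation probabilities are bounded away from both $0$ and $1/2$. Because the right-hand side of~\eqref{eq::def:s} is monotonic in $s(e)$ and either converges to $1$ or diverges as $s(e)$ grows (as already noted when defining~$s$), the equation pins $s(e)$ to a single finite value in a bounded range, uniformly over all reticulation edges. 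Hence $s_{\max} = \Theta(1)$.

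Consequently $\alpha := \min\{\gamma_{\min}, 1-\gamma_{\max}\}/s_{\max}$ is a positive constant, and since $\max_{e \in E} \theta(e) \in (0,1)$ is bounded away from $1$, the factor $\bigl(1 - \max_{e \in E} \theta(e)^{\alpha}\bigr)^{2}$ appearing in the denominator of~\eqref{kN::LB} is a strictly positive constant. For the remaining factor, the assumptions $\text{depth}(\N) = \bigO(\log \log n)$ and $lc(\N) = \bigO(1)$ give $2(\text{depth}(\N) + lc(\N)) = \bigO(\log \log n)$, so writing $\theta_{\min} := \min_{e} \theta(e) \in (0,1)$, which is again a constant, one computes
\begin{align*}
    \theta_{\min}^{2(\text{depth}(\N) + lc(\N))} = \exp\!\bigl(\bigO(\log\log n) \cdot \log \theta_{\min}\bigr) = (\log n)^{-\bigO(1)},
\end{align*}
whose reciprocal is $\text{polylog}(n)$. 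Substituting everything into~\eqref{kN::LB} yields the sufficient bound $k = \bigO(\log n) \cdot \text{polylog}(n) = \text{polylog}(n)$, and the corollary follows from \cref{thm::CFN}. The remaining manipulations beyond the boundedness of $s_{\max}$ are just asymptotic bookkeeping.
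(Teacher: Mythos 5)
Your proof is correct and follows exactly the route the paper intends: the corollary is stated as a direct consequence of \cref{thm::CFN}, with fixed inheritance probabilities and constant mutation-probability bounds forcing $s_{\max}$ (and hence the first factor in the denominator of~\eqref{kN::LB}) to be constant, and $\text{depth}(\N)=\bigO(\log\log n)$ plus constant cycle length making $\left(\min_e\theta(e)\right)^{2(\text{depth}(\N)+lc(\N))}$ inverse-polylogarithmic, so the bound collapses to $\text{polylog}(n)$. Your explicit justification that $s_{\max}=\Theta(1)$ via the monotone/limiting behaviour of the right-hand side of~\eqref{eq::def:s} is exactly the observation the paper relies on when it notes that bounds on the inheritance probabilities yield bounds on $s_{\max}$.
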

\begin{proof}
\CorrB{
    \cref{thm::CFN} tells us that, for a fixed constant $c$,
    \begin{align}\label{corBound}
       \frac{c\cdot\log n}{\left(1-\max\limits_{e\in E}\theta(e)^{\min\{\gamma_{\min},1-\gamma_{\max}\}/s_{\max}}\right)^2\left(\min\limits_{e\in E}\theta(e)\right)^{2(\text{depth}(\N)+lc(\N))}}
    \end{align}
    is a strict lower bound on the given number of sites $k$ for the \dc method to return $\NN$ with high probability. Since we assume that inheritance probabilities are fixed, $\gamma_{\min}$ and $\gamma_{\max}$ satisfy inequality~\eqref{gamma:bounds}. From \cref{prop::successProb} we know that the inheritance probabilities adhering to inequality~\eqref{gamma:bounds} are fully characterized by parameters $k$, $E_{\max}$ and $\theta(e)=1-2p(e)$, $e\in E(N)$. The latter two are bounded by constants because we assume that the mutation probabilities $p(e)$ admit constant bounds. Recall from the beginning of this section that $s_{\max}$ is a function of probabilities $p(e)$ and $\gamma(e)$, $e\in E(N)$, implying that $s_{\max}$ is constantly bounded, too. Finally, we assume depth$(N)=\mathcal{O}(\log\log n)$ and that lc$(N)$ is constant. This means, for suitable constants $a$ and $b$, bound~\eqref{corBound} simplifies to
    \begin{align*}
        \frac{c\cdot\log n}{a^2\cdot b^{\mathcal{O}(\log\log n)}}=\frac{c\cdot\log n}{a^2\cdot (\log n)^{\mathcal{O}(1)}}=\mathcal{O}\left((\log n)^{c'}\right)
    \end{align*}
    for some constant $c'$.
}
\end{proof}

\cref{tab::2} in the introduction broadens \cref{cor:CFN} to further conditions to derive lower bounds on the number of sites of a sequence alignment from \cref{thm::CFN}. \CorrB{For example, if we assume that our measures depth$(N)$ and lc$(N)$ of the network topology are independent of the number of taxa, i.e., are bounded by constants, then \cref{cor:CFN} can be strengthened to only require a logarithmic sequence length. In other words, the number of sites sufficient to reconstruct the correct binary level-1 phylogenetic network is logarithmic in the number of taxa of the network. The same argument can be extended in the opposite direction: if depth$(N)$ and lc$(N)$ grow logarithmically in the number of taxa, then \cref{cor:CFN} can be relaxed to conclude that the sufficient sequence length is a polynomial in the number of taxa. \cref{tab::2} details further bounds in case we vary the bounds on mutation probabilities, too. Observe that the bounds on inheritance probabilities are solely defined by the bounds on mutation probabilities. Hence, to derive conclusions on the sufficient sequence length we do not consider any further assumptions on the inheritance probabilities.}

\section{Discussion}\label{sec:discussion}

In this manuscript we have provided a first analysis \CorrB{of} the amount of genomic data required to accurately reconstruct phylogenetic networks, \leo{without a homoplasy-free restriction.}
While our results are based on a fairly simple model of evolution and 
\leo{depend on bounds on}
 several parameters, they offer an optimistic outlook, showing that the required sequence length scales logarithmically, polynomially, or polylogarithmically with the number of taxa. In particular, these results suggest that accurate inference of evolutionary histories in the presence of reticulate events is possible in certain cases, even with a restricted amount of genomic data available. Moreover, we hope to have  established a necessary bridge between convergence questions in the extensively studied context of tree inference and their mostly unstudied counterparts in network inference. This connection opens the door to rigorous analyses of broader classes of networks and more complex evolutionary models. It may also lead to new insights into the behavior and reliability of existing software tools for network reconstruction.

Although the analysis of our \dc method assumes a CFN model of evolution, the method itself can be applied much more generally, relying only on a distance matrix between taxa as input. Hence, as a first step, one might extend our analysis to related models of evolution, such as the Jukes-Cantor, Kimura-2-Parameter and Kimura-3-Parameter models, or possibly models allowing for ILS (incomplete lineage sorting). Another direction could be to instead consider different types of distances that can be obtained from genomic data (see e.g. \cite{allman2019nanuq,allman2024nanuq+, xu2023identifiability}).

Extending our results to networks more general than level-1 will require additional work on the algorithm itself. Although most of our proposed inference rules are valid for networks of any level, we have not established closure properties---a key component of our method---when going beyond level-1. In particular, we do not expect our current set of inference rules to suffice, since networks beyond level-1 
are not all
outer-labeled planar 
and some quartet profiles could contain all three possible quartet trees on four leaves, a case that is not covered by our current set of rules. A 
natural next step may 
be to first study the tree-of-blobs of a network---the tree obtained by contracting each 
blob to a single vertex---which could serve as a bridge to more general network classes. It must be noted that one cannot simply rely on existing theory for trees however, since a trivial quartet profile of a network is not necessarily induced by its tree-of-blobs (see also the distinction between \emph{T-quartets} and \emph{B-quartets} from \cite{allman2023tree}).

We end by noting that
our inference rules naturally extend to semi-directed quarnets (4-leaf subnetworks).
Closure properties for this case may be an interesting direction for future research (as studied for undirected quarnet inference rules in \cite{huber2018quarnet}).

\paragraph{Funding.} This work was partially supported by grant~OCENW.M.21.306 of the Dutch Research Council~(NWO).

\paragraph{Competing interests.} On behalf of all authors, the corresponding author states that there is no conflict of interest.

\paragraph{Data availability.} No datasets were generated or analysed for this study. All information needed to evaluate the findings is contained within the manuscript.


\appendix

\section{Displaying trees and inducing subnetworks are commutative}\label{sec:commutative}

\CorrB{In the following proposition we prove that taking displayed trees and taking subnetworks are commutative operations. That is, the set of displayed trees of a subnetwork is the same as the set of induced subtrees of the displayed trees of the full network.}

\begin{proposition}\label{prop:displayed_subnet_commutative}
    \CorrB{Let $\N$ be a semi-directed network on $\X$ and let $\T$ be an unrooted phylogenetic tree on $\Y \subseteq \X$ with $|\Y| \geq 2$. Then, $\T$ is a displayed tree of $\N|_\Y$ if and only if $\T$ is a subtree induced by $\Y$ of some displayed tree on $\X$ of $\N$.}
\end{proposition}
\begin{proof}
\CorrB{First note that each reticulation vertex~$r$ of $\N|_\Y$ corresponds to a unique reticulation vertex~$r'$ in $\N$. We denote the set of these reticulation vertices in $\N$ by $R$. Let $P$ be the set of reticulation vertices of $\N$ not in $R$. Since we allow parallel edges, a reticulation vertex~$r$ of $\N$ is in $R$ if and only if it is on an up-down path in $\N$ between two leaves in~$\Y$.}

\CorrB{Suppose that $\T$ is a displayed tree of $\N|_\Y$. Without loss of generality, suppose we label the edges such that $\T$ can be obtained from $\N|_\Y$ by deleting the reticulation edge $e_i^1$ for each pair of reticulation edges $(e_i^1, e_i^2)$ (with reticulation vertex $r_i \in R$) in $\N|_\Y$. Let $\T'$ be the displayed tree of $\N$ obtained by removing the edge $e_i^1$ for each $r'_i \in R$, and arbitrarily removing one reticulation edge for each reticulation vertex in $P$. Then, $\T$ is
an induced subtree
\leo{$\T'|_\Y$}
of $\T'$,
proving the forward implication.}


\CorrB{For the backward implication, suppose that $\T$ is an induced subtree of some displayed tree $\T'$ of $\N$. Without loss of generality, assume that we label the edges such that $\T'$ can be obtained from $\N$ by deleting the reticulation edge $e_i^1$ for each pair of reticulation edges $(e_i^1, e_i^2)$ (with reticulation vertex $r_i \in R \cup P$) in $\N$. Similar to before, $\T$ can then be obtained from $\N|_\Y$ by removing the reticulation edge $e_i^1$ for each $r_i \in R$. This proves the other implication.}
\end{proof}

\section{Proof of Lemma 5.1}\label{sec:appendix}

Recall, for a semi-directed level-1 network $\N$ under the CFN model we have a well-defined function $s$ which maps the reticulation edges $e$ of $\N$ to the positive real numbers by equations
\begin{align*}
    \frac{\gamma(e)}{1-\gamma(e)}\cdot\frac{\log\min_{e'\in E}\theta(e')}{\log\max_{e'\in E}\theta(e')}+\epsilon&=
    \begin{cases}
        \frac{2s(e)^2-7s(e)}{2s(e)^2+s(e)-1} &\text{if } \frac{\gamma(e)}{1-\gamma(e)}\cdot\frac{\log\min_{e'\in E}\theta(e')}{\log\max_{e'\in E}\theta(e')}<1\\
        \frac{4s(e)^3+4s(e)^2-s(e)-1}{12s(e)^2+4s(e)+1} &\text{otherwise}
    \end{cases}
\end{align*}
for a sufficiently small $\epsilon>0$ such that $s(e)>1/2$.

\lemLFPMsuccess*
\begin{proof}
Let $\gamma$ and $s$ denote the inheritance probability and function $s(e)$ for the reticulation edge $e$ from $N$ present in $\T_1$.
Let $\ell_1,\ell_2,\ell_3,\ell_4$ be the edge lengths of the internal edges $e_1,e_2,e_3,e_4$ in $\N$ on paths from $x_i$ to $x_j$, from $x_j$ to $x_k$, from $x_k$ to $x_l$ and from $x_l$ to $x_i$, respectively. 
Without loss of generality $e_3\in E(\T_1)\setminus E(T_2)$ and $e_4\in E(\T_2)\setminus E(T_1)$. Assume $\gamma\in(0,1)$ such that $D=\gamma D(\T_1)+(1-\gamma)D(\T_2)$ and conditions~\eqref{L1QPCsuccess::con1} and~\eqref{L1QPCsuccess::con2}.
Then, \niels{$(x_i x_j x_k x_l) \in\mathcal{Q}_R(\N)$} 
only if
\begin{align}
D_{ik}+D_{jl} - \left(D_{ij} + D_{kl}\right) &=\gamma(\ell_1+2\ell_2+\ell_3)+(1-\gamma)(2\ell_1+\ell_2+\ell_4)\nonumber\\
&~~-\left[\gamma(\ell_1+\ell_3)+(1-\gamma)(2\ell_1+\ell_2+\ell_4)\right]\nonumber\\
&=(2-\gamma)\ell_1+(\gamma +1)\ell_2+\gamma \ell_3+(1-\gamma)\ell_4\nonumber\\
&~~-\left[(2-\gamma)\ell_1+(1-\gamma)\ell_2+\gamma \ell_3+(1-\gamma)\ell_4\right]=2\gamma \ell_2,\label{con::diff1}\\
D_{ik}+D_{jl} - \left(D_{il} + D_{jk}\right) &=\gamma(\ell_1+2\ell_2+\ell_3)+(1-\gamma)(2\ell_1+\ell_2+\ell_4)\nonumber\\
&~~-\left[\gamma(\ell_1+2\ell_2+\ell_3)+(1-\gamma)(\ell_2+\ell_4)\right]\nonumber\\
&=(2-\gamma)\ell_1+(\gamma+1)\ell_2+\gamma \ell_3+(1-\gamma)\ell_4\nonumber\\
&~~-\left[\gamma \ell_1+(\gamma +1)\ell_2+\gamma \ell_3+(1-\gamma)\ell_4\right]=2(1-\gamma)\ell_1,\label{con::diff2}\\
D_{il}+D_{jk}-(D_{ij}+D_{kl})&=\gamma(\ell_1+2\ell_2+\ell_3)+(1-\gamma)(\ell_2+\ell_4)\nonumber\\
&~~-\left[\gamma(\ell_1+\ell_3)+(1-\gamma)(2\ell_1+\ell_2+\ell_4)\right]\nonumber\\
&=2\gamma\ell_2-2(1-\gamma)\ell_1.\label{con::diff3}
\end{align}
Without loss of generality $d_{ij}+d_{kl}\leq d_{il}+d_{jk}$. Then, our claim follows if we show that
\begin{align*}
    d_{il}+d_{jk}-d_{ij}-d_{kl}&<d_{ik}+d_{jl}-d_{il}-d_{jk}
\end{align*}
\CorrB{Equivalently, we show that $2(d_{il}+d_{jk})<d_{ik}+d_{jl}+d_{ij}+d_{kl}.$ To this end, for}
$f=-\frac{1}{2}\log\max_{e\in E}\theta(e)\leq-\frac{1}{2}\log(1-2p(e))$, $e\in E$, we get
\begin{align*}
    d_{ij}+d_{kl}&\stackrel{\eqref{L1QPCsuccess::con1}}{>}D_{ij}+D_{kl}-\frac{\gamma}{s}f\\
    &\stackrel{\eqref{con::diff3}}{=}D_{il}+D_{jk}-2\gamma\ell_2+2(1-\gamma)\ell_1-\frac{\gamma}{s}f\\
    &\stackrel{\eqref{L1QPCsuccess::con2}}{>}d_{il}+d_{jk}+\left(2-\frac{1}{s}\right)(1-\gamma)\ell_1-\left(2+\frac{1}{s}\right)\gamma\ell_2.
\end{align*}
Since $d_{ij}+d_{kl}\leq d_{il}+d_{jk}$, it follows \CorrB{that}
\begin{align}
\left(2-\frac{1}{s}\right)(1-\gamma)\ell_1<\left(2+\frac{1}{s}\right)\gamma\ell_2.\label{ineq::wlog}
\end{align}
Moreover, we have
\begin{align*}
d_{ik}+d_{jl}+d_{ij}+d_{kl}&\stackrel{\ref{L1QPCsuccess::con1}}{>}D_{ik}+D_{jl}-\frac{\gamma}{s} f+D_{ij}+D_{kl}-\frac{\gamma}{s} f\\
    &\stackrel{\eqref{con::diff2},\eqref{con::diff3}}{=}D_{il}+D_{jk}+2(1-\gamma)\ell_1+D_{il}+D_{jk}-2\gamma\ell_2+2(1-\gamma)\ell_1-\frac{2\gamma}{s} f\\
    &=2\left(D_{il}+D_{jk}+2(1-\gamma)\ell_1-\gamma\ell_2-\frac{\gamma}{s} f\right)\\
    &\geq2\left(D_{il}+D_{jk}+\frac{1-\gamma}{s}f+\left(2-\frac{1}{s}\right)(1-\gamma)\ell_1-\left(1+\frac{1}{s}\right)\gamma\ell_2\right)\\
    &\stackrel{\ref{L1QPCsuccess::con2}}{>} 2\left(d_{il}+d_{jk}+\left(2-\frac{1}{s}\right)(1-\gamma)\ell_1-\left(1+\frac{1}{s}\right)\gamma\ell_2\right).
\end{align*}
Hence, it is left to show that
\begin{align*}
    \left(2-\frac{1}{s}\right)(1-\gamma)\ell_1\geq\left(1+\frac{1}{s}\right)\gamma\ell_2
\end{align*}
\CorrB{Equivalently, we show that
\begin{align}
\frac{\gamma}{1-\gamma}\cdot\frac{\ell_2}{\ell_1}\leq\frac{2s-1}{s+1}.\label{ineq::sL}
\end{align}}
Since inequality~\eqref{ineq::wlog} can be rewritten as
\begin{align*}
    \frac{\gamma}{1-\gamma}\cdot\frac{\ell_2}{\ell_1}>\frac{2s-1}{2s+1}~~~\text{or}~~~\frac{1-\gamma}{\gamma}\cdot\frac{\ell_1}{\ell_2}<\frac{2s+1}{2s-1},
\end{align*}
we know that inequality~\eqref{ineq::sL} can hold only if
\begin{align*}
    \frac{2s-1}{s+1}-\frac{2s-1}{2s+1}> 0.
\end{align*}
In other words, we require $s>1/2$. Moreover, we observe that inequality~\eqref{ineq::sL} holds if
\begin{align}
    \underbrace{\frac{\gamma}{1-\gamma}\cdot\frac{\ell_2}{\ell_1}}_{=:L(\gamma,\ell_1,\ell_2)}-\frac{1-\gamma}{\gamma}\cdot\frac{\ell_1}{\ell_2}&\leq\underbrace{\frac{\gamma}{1-\gamma}\cdot\frac{\ell_2}{\ell_1}-\frac{2s-1}{2s+1}}_{>0}+\underbrace{\frac{2s-1}{s+1}-\frac{2s+1}{2s-1}}_{=:R(s)}\label{con::proof1}\\
    \Leftrightarrow~~~\frac{1}{L(\gamma,\ell_1,\ell_2)}&\geq \frac{2s-1}{2s+1}-R(s).\label{con::proof2}
\end{align}
Then, for $L(\gamma,\ell_1,\ell_2)<1$, we have
\begin{align*}
    \lim_{s\to\infty}R(s)=\lim_{s\to\infty}\frac{2s^2-7s}{2s^2+s-1}=1>L(\gamma,\ell_1,\ell_2),
\end{align*}
implying inequality~\eqref{con::proof1}. Otherwise,
inequality~\eqref{con::proof2} follows from
\begin{align*}
    \lim_{s\to\infty}\frac{2s-1}{2s+1}-R(s)=\lim_{s\to\infty}\frac{12s^2+4s+1}{4s^3+4s^2-s-1}=0<\frac{1}{L(\gamma,\ell_1,\ell_2)}.
\end{align*}
Thus, since $$L(\gamma,\ell_1,\ell_2)\leq \frac{\gamma}{1-\gamma}\cdot\frac{\log\min_{e\in E}\theta(e)}{\log\max_{e\in E}\theta(e)},$$ our claim follows from the definition of $s$.
\end{proof}

\end{document}